\newcommand{\citelink}[2]{\hyperlink{cite.\therefsection @#1}{#2}}
\newcommand{\qcite}[1]{\citelink{#1}{\citeauthor{#1} (\citeyear{#1})}}
\newcommand{\qcitenp}[1]{\citelink{#1}{\citeauthor{#1} \citeyear{#1}}}
\newlength{\defparindent}
\newmdenv[linewidth=0.4pt,%
linecolor=black,%
backgroundcolor=white]{singleframed}
\newenvironment*{singleframedindent}{\begin{singleframed}\setlength{\parindent}{\defparindent}\ignorespaces}{\end{singleframed}}
\newtheorem{theorem}{Theorem}
\newtheorem{proposition}{Proposition}
\theoremstyle{definition}
\newtheorem{lemma}{Lemma}
\newtheorem{corollary}{Corollary}
\newtheorem{example}{Example}
\title{Matching Design with Algorithms and Applications to Foster Care}
\author{Terence Highsmith II$^1$}
\date{Version: \today \\ Preliminary Draft. PLEASE DO NOT CITE.}
\begin{document}

\maketitle

\normalsize
 \begin{center}
\vspace{20pt}
\textbf{Abstract}\\ 
\begin{flushleft}
We study the problem of an organization that matches agents to objects where agents have preference rankings over objects and the organization uses algorithms to construct a ranking over objects on behalf of each agent. Our new framework carries the interpretation that the organization and its agents may be misaligned in pursuing some underlying matching goal. We design matching mechanisms that integrate agent decision-making and the algorithm by prioritizing matches that are unanimously agreeable between the two parties. Our mechanisms also satisfy restricted efficiency properties. Subsequently, we prove that no unanimous mechanism is strategy-proof but that ours can be non-obviously manipulable. We generalize our framework to allow for any preference aggregation rules and extend the famed Gibbard-Satterthwaite Theorem to our setting. We apply our framework to place foster children in foster homes to maximize welfare. Using a machine learning model that predicts child welfare in placements and a novel "elicited preferences" experiment that extracts real caseworkers' preferences, we empirically demonstrate that there are important match-specific welfare gains that our mechanisms extract that are not realized under the status quo. \\
\textit{Journal of Economic Literature Classification:} C78, D47, D71
\end{flushleft}

\vspace{20pt}

\small  
$^1$) Reach at\\
terence.highsmith@kellogg.northwestern.edu
\end{center}

\newpage

\normalsize

\section*{1 \hspace{5pt} Introduction}

The traditional goal of matching theory is to use agents' preferences to achieve an assignment of objects to agents or agents to agents that is efficient, fair, and sustainable, or some combination of all three. However, in some organizational contexts, matching based on individual decision-maker preferences can hamper an institution's goals. Decision-makers, even when attempting to pursue the organizational objective, can fall prey to informational asymmetries, cognitive limitations, or a lack of experience. Simultaneously, even if the organization has access to an algorithm that can outperform the decision-makers, it may be infeasible to implement in practice due to legal, ethical, and trust concerns. 

In this paper, we develop one-sided matching mechanisms that input \textit{two} preference rankings. We take it as given that there is some underlying welfare attained when any decision-maker matches to any object. We represent decision-maker (DM) preferences over objects, which might or might not align with welfare, as one set of preference rankings. The second set of preference rankings---which we call evaluations---are the ordinal representation of an algorithm's predictions of welfare between DM-object matches. We design mechanisms that satisfy a fairness notion that we call \textit{unanimity}, two weaker efficiency properties for unanimous matching mechanisms, and non-obvious manipulability under some conditions. Following this, we apply our mechanisms to the child welfare domain. 

In the United States, the foster care system is responsible for caring for children found to be victims of abuse or neglect at the hands of their parents. In 2022, there were 368,530 children in the foster care system (\qcitenp{afcars}). Economists have evaluated numerous policy interventions to improve outcomes in foster care (\qcitenp{bald-2022}). However, one critical aspect of foster care has gone unnoticed: how should a child be matched to a temporary foster home? 

Existing processes for matching children to foster home are highly varied across U.S. counties, burdened with inefficiencies, and leave practitioners longing for better systems (\qcitenp{zeijlmans-2016}). Therefore, we cannot offer a uniform characterization of a child's journey from her home of origin to a foster home. Generally speaking, children that are suspected to be victims of abuse or neglect are often reported to the local child welfare agency by community members. The child welfare agency sends an investigator to the child's home, and, if the investigator substantiates the allegations, then the child is removed from her home. The agency assigns a caseworker (DM) to represent the child, and the child can be kept in custody of the county for up to several days before she is matched to a temporary foster home (object). The caseworker can use various means to search for a suitable match for the child, and this step in the process is the most heterogeneous step across counties. In most counties, caseworkers sequentially decide to match the child that the caseworker represents to the caseworker's most preferred available home, where the order is simply the first (or next) caseworker to initiate matching. We describe this time-ordered serial dictatorship as the "status quo".

Caseworkers representing children have preferences over the homes that a child can be matched to. Often, these preferences can legitimately be in the best interests of the child and encode important, qualitative information that cannot be easily subsumed into algorithms that predict a child's persistence in a home. On the contrary, some caseworkers might have erroneous judgment causing them to systemically choose homes where children will not persist. Nevertheless, the human element in the matching process is essential, and most counties are unwilling---justifiably so---to fully replace caseworkers with algorithms that decide placements. One objective goal in the matching process is \textit{placement stability} (\qcitenp{carnochan-2013}). A child's placement in a foster home is stable if the child remains in the home until the child is reunified with their home of origin, the foster home permanently adopts the child, or the child ages out of the foster care system. We refer to this as \textit{persistence} to avoid confusion with the matching theory concept of stability. Persistence is not common\footnote{The average number of placements for a child was 2.56 in 2015; see \qcite{cortes-2021}.}, and there are widespread efforts to improve persistence for children as more placements are strongly associated with trauma and worse outcomes for youth in foster care. 

We model this problem as a one-sided matching market where children are strategic agents allocated to homes (objects)\footnote{In practice, homes can have dichotomous preferences. However, this two-sided matching problem reduces to a one-sided matching problem where a child finds a home acceptable only if the child and home find the match acceptable.}. Children have strict ordinal preferences $\succ_c$ over homes\footnote{As aforementioned, caseworkers are the DMs and actually hold preferences, but we refer to them as children in exposition for simplicity. One can think of the caseworkers as analogous to parents in school choice.}. The matchmaker has strict ordinal preferences $>_c$ over homes that a given child can be placed in. We call these evaluations. We introduce a fairness concept termed \textit{unanimity}. A home $h$ is unanimously acceptable for $c$ if $h \succ_c \varnothing$ and $h >_c \varnothing$. A child $c$ has a unanimous improvement over $h$ if there exists another $h'$ where $h' \succ_c h$ and $h' >_c h$. A home is \textit{unanimous} for a child if the child has no unanimous improvement over it. A matching $\mu$ is unanimous if there is no other matching which can weakly expand the set of children matched to unanimous homes and the set of children matched to unanimously acceptable homes. 

Unanimity is a form of preference aggregation, and our particular method of aggregation prioritizes allocations where the matchmaker reduces matches where a DM and the algorithm would jointly agree that another match is superior. This desideratum is fair under the prior that these strong agreements between humans and algorithms should be respected over simply improving preferences or evaluations alone. Simultaneously, unanimity provides robustness in cases where one decision model--the human or the algorithm---is fundamentally flawed but the other is more accurate in predicting welfare. Unanimity is useful when institutional constraints prohibit fully automating the matching process and require respecting human preferences as is the case in foster care. We show a simple equivalence between unanimity and an efficient matching with appropriately aggregated preferences (Proposition \ref{proposition:unanimous-reduction}). 

The equivalence shows that the Serial Dictatorship (SD) mechanism appropriately modified for indifference (Serial Dictatorship with Indifference or SDI) generates unanimous matchings (Proposition \ref{proposition:sdi-unanimous}). Many matchings can be unanimous, and a unanimous and efficient matching can fail to exist. Thus, we impose further efficiency restrictions. We tweak the Top Trading Cycles mechanism to find Pareto improvements on any unanimous matching while maintaining unanimity. Still, unanimous matchings only \textit{avoid} unanimously disagreeable matchings. Our second fairness criteria---unimprovable matchings---guarantees that there is no child that can unanimously improve unless some other child would unanimously object. Our third mechanism, Adaptive SDI, is a simple yet novel modification of SDI that constructs unimprovable matchings.

Turning to strategic incentives, we show that no unanimous (unimprovable) mechanism can be strategy proof (Proposition \ref{proposition:unanimous-sp}). Worse, we show that every unanimous (unimprovable) mechanism is obviously manipulable (Theorem \ref{theorem:obvious-manipulability}). The manipulations are truncation strategies. Nevertheless, we restore some positive results regarding the strategic incentives of (A)SDI. When the number of homes is smaller than the number of children, as is typically the case in foster care, (A)SDI is non-obviously manipulable.

Our results suggest that we might be able to attain a limited version of strategy-proofness that we term group robustness. Group robustness requires that there is no profitable manipulation that can improve all children with respect to either $\succ$ or $>$. This concept is important because the matchings we select may not be efficient with respect to either order which can afford opportunities for children or the matchmaker to improve without harming anyone. We generalize our model to examine whether or not \textit{any} mechanism and aggregated preferences can satisfy group robustness. Unfortunately, we find a negative result: any strategy-proof and efficient mechanism satisfying mild regularity conditions that uses aggregated preferences cannot be group robust unless the aggregation is equivalent to $\succ$ or $>$ (Theorem \ref{theorem:gs-aggregation}).

The following section is still under construction. We design an experiment to understand how the structure of caseworker preferences interacts with unanimity. Our experiment invites real caseworkers to report their preferences to a novel device for the assignment of children to homes in a simulated market where, based on a nationally representative panel of foster care placements, we build a machine learning model to predict persistence in the simulated market. Our experimental device is specifically designed to elicit preferences using simple instruments like surveys; it is strategy proof (Proposition \ref{proposition:rsa-sp}) and features truth-telling as the unique, non-trivial weakly dominant strategy (Proposition \ref{proposition:rsa-unique}). Caseworkers receive monetary payoffs when the children that they represent persist. Our choice experiment elicits caseworkers' true preferences similar to the \qcite{budish-2021} elicited preferences framework, and we increment on their approach by incentivizing elicitation. We input these preferences into our mechanisms to estimate the treatment effect on average persistence in the simulated markets. In a preliminary simulation where caseworkers racially match children and foster homes, the placements under SD using caseworker preferences are barely better than random assignment. SDI increases average persistence by 7.8 pp, empirically demonstrating that unanimity alone improves welfare when the algorithm is more accurate than humans. UTTC improves over SDI by a marginal 0.6 pp. When the out-of-sample error is fitted to the error in the test sample, average persistence still increases by 7 pp (15\%). Caseworkers' preferences do not align with predicted persistence, and these results suggest that they do not align with \textit{actual} persistence. Moreover, we are able to demonstrate that there are match-specific benefits to persistence, a fact not observed in previous work.

\textit{Literature Contribution}---As far as we are aware, no other literature on one-sided matching has ever considered the problem where two rankings may be involved for each agent. Theoretically, \qcite{afacan-2022}'s model is a case of our general matching with preference aggregation in disguise. Their work considers an assignment of arbiters to pairs of agents. They aim for "depth optimal" assignments that maximize the minimum rank of an arbiter across both agents' preference rankings. One can reformulate this problem by aggregating the preferences of each pair using the Max Rank rule (Example \ref{example:regular-rule}). Applying our framework immediately yields their depth-optimal assignments, and Theorem \ref{theorem:gs-aggregation} predicts their negative strategy-proofness results. 

In the adjacent literature on social choice theory, many such issues involving preference aggregation arise. We see our work as a near direct extension of \qcite{danan-2016} to matching theory. Their unambiguous Pareto dominance property emphasizes choosing acts where agents unambiguously agree in the face of ex-ante uncertainty about what social outcomes the acts entail. Conceptually, unanimity is quite similar to this property. The main difference lies in the fact that unanimity is appropriate for application to matching allocations rather than a society's preferences. We remain agnostic as to the formation of the rankings whereas the aforementioned work uses assumptions on preferences to derive important main results. Furthermore, we integrate preference aggregation with matching theory, showing that unanimity is equivalent to efficiency with respect to a preference aggregation. Our work then extends \qcite{gibbard-1973} to our setting showing that there is little hope of preempting strategic incentives.

Additionally, this paper intersects results on Pareto-constrained welfare maximization in the computer science literature. Our evaluation ranking can be reformulated as a cardinal match valuation. In this context, \qcite{biro-2020} study the computational complexity of finding a Pareto efficient matching of highest welfare, where welfare in their case is a sum of match valuations and, in a large part of their work, where the match valuation depends only on one side. They find that this problem is NP-hard in general and prove stringent conditions under which it is computationally feasible. Following this, they define a general linear programming solution that may require permuting all matchings to identify the constrained welfare maximizing matching. They also analyze some conditions where positive strategy proofness results can be proven. We contribute through taking a fundamentally different stance from welfare maximization, i.e., a unanimous matching need not be efficient with respect to the evaluations. Our work still touches on theirs as the equivalence between unanimity and unanimous-preference efficiency implies that finding a unanimous and efficient matching is a case of Pareto-constrained welfare maximization. We use separate efficiency notions for unanimous matchings that do not require a solution to the aforementioned problem. Other works by \qcite{cechlarova-2017} and \qcite{saban-2015} explore the computational tractability of problems inherent in computing efficient and unanimous matchings and motivate some of our restrictions on the problem.

A few additional papers intersecting matching theory and child welfare relate to our work. In a similar matching-theoretic approach to foster care, \qcite{highsmith-2024} develops matching mechanisms for expedient adoptions from foster care where dynamic incentives play a far greater role than in matching children to temporary foster homes. \qcite{slaugh-2016} analyze the matching recommendation system that had previously been abandoned by the Pennsylvania child welfare system and take an operations approach to reform it. They show tangible benefits in increasing the number of placements. \qcite{cortes-2021} estimates caseworker preferences and counterfactual persistence for children under various different policies using an empirical matching framework. Some policies that focus on increasing market thickness, like matching across larger geographic regions, can significantly boost persistence. Notably, \qcite{baron-2024} study the problem of reassigning child investigations to investigators while maintaining a sort of status quo. They coin the term "mechanism reform"---changing the status quo mechanism while maintaining at least the same level of welfare for all agents---which relates to our problem in that unanimous matchings incorporate the existing preferences of DMs.  

\textit{Centralization in Practice}---Foster care in the United States is an interesting application of matching theory because, at present, there are no counties that we are aware of that utilize \textit{any} formal matching mechanism to assign children to homes. One simple reason why this could be the case is that the status-quo, as we have described it, is equivalent to a Pareto efficient matching mechanism. Therefore, all mechanisms developed in matching theory, save those that we described in our literature review, would have nearly or exactly the same efficiency as the status-quo. Furthermore, two-sided matching mechanisms are not well-fit for foster care. This is because the set of children to be matched is fluid and constantly changing; the preferences that homes report, then, must similarly constantly update. One could construct preferences based on child observable characteristics which homes report once-and-for-all, but this would require additional theoretical developments to reach the same desirable properties as in the standard case. In working on this and related projects, we have conducted interviews with practitioners to understand the matching process in foster care. Our interviews indicate that individual caseworkers face a relatively easy problem: choose the best perceived home for a child. However, when practitioners aggregate this decision-making rule to the entire market, they find that persistence falls short of their goals. We show in this work that implementing a unanimous mechanism would be distinct from a Pareto efficient mechanism and could significantly ameliorate welfare losses.

\textit{Paper Organization}---In Section 2, we detail our model preliminaries and mechanism properties. In Section 3, we give theoretical results on unanimous matchings and mechanisms, their non-strategic properties, and their strategic properties. We also present existence proofs for other mechanisms satisfying our properties. Finally, we construct a general model we term matching with preference aggregation that shows an impossibility result regarding incentive-compatibility even if we modify the concept of unanimity. In Section 4, we detail our experiment design and results. Last, we conclude in Section 5 with practical considerations.

\section*{2 \hspace{5pt} Preliminaries}

We model a one-sided matching market. Let $A$ be a set with an arbitrary, finite number of agents. $C$ is the set of all children and $H$ is the set of all prospective homes for children, thus $A = C \cup H$. The matchmaker's task is to assign each generic child $c \in C$ to a generic home $h \in H$. Formally, the matching market is an undirected, bipartite graph with labels $M = (C,H,E)$ where $E$ is the edge set representing feasible assignments. Let $\gamma : E \rightarrow \{ \{ c,h \} : c \in C, h \in H \}$ represent the incidence function for $M$ and $E_h(\mu) = \{ d \in \mu : h \in \gamma(d) \}$ where $\mu \subseteq E$. A \textit{feasible matching or assignment} is some $\mu \subseteq E$ that is one-to-one\footnote{We do not pursue a capacitated model. One might imagine that some homes are willing to take in multiple foster children, which can be true, but this rarely happens in a single matching market. Homes will typically only be offered one placement at any given time. We may capture the separate issue that some children require greater than unit capacity (sibling groups) by incorporating this into preferences, where a home is only acceptable if it can accommodate the group.}.

We say that child $c$ is assigned to or placed with home $h$ under $\mu$ if $\{ c, h \} \in \mu$. For convenience, we use the shorthand notation with $\mu$ as a function where $\mu : A \rightarrow A \cup \varnothing$ such that $\mu(c) = h \iff c \in \mu(h) \iff \{ c, h \} \in \mu$. We say that $c$ is unassigned or unplaced if $c \notin \gamma(d)$ for any $d \in \mu$ and write $\mu(c) = \varnothing$. We will also write that $\mu(\varnothing) = \varnothing$.

$P$ is the set of preference orderings (total preorders) over $H \cup \varnothing$. Each child $c$ has a ranking $\succ_c \hspace{5pt} \in P$ over prospective homes. There are no externalities. Each child only cares about her own assignments. We write $h \succ_c h'$ if and only if $c$ ranks $h$ higher than $h'$. If $\varnothing \succ_c h$, we say that $h$ is \textit{unacceptable} to $c$. We also assume that children's preferences are strict and write that $h' \succsim_c h$ if and only if $h' = h$. In the foster care setting, it is important to note that children do not represent themselves in the matching process. Instead, caseworkers typically decide the best home for a child. Nevertheless, we will primarily refer to children as holding preferences and making decisions in our theoretical work for exposition's sake. When we highlight our applications, it will be an important fact that caseworkers are the real decision-makers.

In addition to these preferences, the matchmaker has evaluations for a child $c$ which is a ranking $>_c\hspace{5pt} \in P$ over prospective homes for that child. We assume that these evaluation rankings are strict. While evaluations will play a similar role to priorities, for example, in the school choice literature, they are not the same\footnote{They differ in an important aspect. The matchmaker's outcome need not be comparable across children even if it is comparable within a single child. Hence, it might not be possible to translate outcomes into priorities that homes have over children.}.

The vector of preferences for all children is $\succ_C \hspace{5pt} = (\succ_c)_{c \in C}$ and for all matchmaker evaluations is $> \hspace{5pt} = (>_c)_{c \in C}$. A preference double is a combination of these two: $\succ \hspace{5pt} = (\succ_C, >)$. A problem $L = (M, \succ)$ consists of the market, preferences, and evaluations. A mechanism $\phi[L]$ maps a problem to a feasible matching $\mu$ on $M$. From hereon, we omit "feasible" when referring to matchings and consider it implicit.

\subsection*{2.1 \hspace{5pt} Mechanism Desideratum}

A matching $\mu$ is $\succ$-Pareto dominated by $\mu'$ if, for all $c \in C$, $\mu'(c) \succsim_c \mu(c)$ and, for some $c' \in C$, $\mu'(c') \succ_{c'} \mu(c')$. A matching $\mu$ is $\succ$-\textit{efficient} if there does not exist another matching $\mu'$ that $\succ$-Pareto dominates it. We will generally refer to $\succ$-Pareto dominated and $\succ$-efficient simply as Pareto dominated and efficient in exposition. However, these concepts can apply to the evaluation order $>$ as well. In our experiment, we will utilize $>$-efficiency. Our interest is not only in efficient mechanisms. We also desire mechanisms that are robust in that they use the qualitative information that children's preferences encode while also heeding the algorithm evaluations. We say that a child $c$ has an unanimous improvement from $h$ if there exists some $h'$ such that $h \succ_c h'$ and $h >_c h'$. We write the set of unanimous improvers at $c,h$ as:
\[I(c,h) = \{ h' : h' \succ_c h \text{ and } h' >_c h' \}\]
$h$ is unanimously acceptable if $h \succ_c \varnothing$ and $h >_c \varnothing$. $c$'s not unanimously disagreeable or, for short, \textit{unanimous} homes are:
\[H^*(c) = \{ h \in H : h \succ_c \varnothing, h>_c \varnothing, \text{ and } I(c,h) = \varnothing\}\]
and $c$'s non-unanimous homes are:
\[H^-(c) = \{h \in H : h \notin H^*(c), h \succ_c \varnothing, \text{ and } h >_c \varnothing\}\]
Let the set of unanimous children on $\mu$ be:
\[I^*(\mu) = \{ c \in C : \mu(c) \in H^*(c) \}\] 
The set of acceptably matched children under $\mu$ is:
\[\kappa(\mu) = \{ c \in C : \mu(c) \in H^*(c) \cup H^-(c) \}\]
A matching $\mu$ is acceptable if there does not exist a $c$ such that $\mu(c) \notin H^*(c) \cup H^-(c)$. We restrict attention to acceptable matchings throughout this paper. $\mu$ is unanimously dominated by $\mu'$ if there is some coalition of children that unanimously improve under $\mu'$, none that are unanimously worse, and no matches are destroyed. Formally, $\mu$ is unanimously dominated by $\mu'$ if $I^*(\mu) \subset I^*(\mu')$ and $\kappa(\mu) \subseteq \kappa(\mu')$. We also say that $\mu$ is unanimously dominated by $\mu'$ if the coalition of unanimous children weakly expands and $\mu'$ matches strictly more children, i.e., $I^*(\mu) \subseteq I^*(\mu')$ and $\kappa(\mu) \subset \kappa(\mu')$. A matching $\mu$ is \textit{unanimous} if it is not unanimously dominated by any matching $\mu'$. We write $\mu \vartriangleright_{PD} \mu'$ and $\mu \vartriangleright_{UD} \mu'$ to say that $\mu$ is Pareto and unanimously dominated by $\mu'$, respectively. 

An efficient and unanimous matching is not guaranteed to exist (Proposition \ref{proposition:efficient-unanimous-existence}). We say that $\mu$ is \textit{constrained-efficient} if there does not exist another $\mu'$ such that $\mu \vartriangleright_{PD} \mu'$, $I^*(\mu) \subseteq I^*(\mu')$, and $\kappa(\mu) \subseteq \kappa(\mu')$. A constrained-efficient matching is not Pareto dominated by a matching that weakly expands the set of unanimous and matched children.

Unanimity lexicographically prioritizes full elimination of unanimous improvements whenever possible. However, when not possible, i.e. for any $c \in \kappa(\mu) - H^*(\mu)$, it imposes no restrictions. Constrained efficiency acts as a second-tier objective. We are also interested in \textit{partial} elimination of unanimous improvements whenever possible. A matching $\mu$ is improvable if there is another matching $\mu'$ such that, for every $c$, $I(c, \mu'(c)) \subseteq I(c, \mu(c))$, and $\mu'(c') \in I(c', \mu(c'))$ for at least one $c'$. $\mu$ is unimprovable if it is not improvable.

Let $\succ_c$ be $c$'s true preferences for some problem $L$ where $c$ reports $\succ_c$, and let $\succ'_c$ be a false preference report for child $c$. Denote $L'$ as the problem where $c$ reports $\succ'_c$ and all other reports are as in $L$. $\phi$ is \textit{strategy-proof} if no child can benefit by misrepresenting her preferences holding other reports fixed. Formally, $\phi$ is a strategy proof mechanism if, for all $c$, given any $L$ and $L'$ defined as above, we have that $\phi[L](c) \succsim_c \phi[L'](c)$.

\subsection*{2.2 \hspace{5pt} Discussion: Algorithms in Child Welfare}

Our framework captures the tension inherent in using algorithms to facilitate matching whenever humans are involved in the process. Humans can fundamentally disagree---misalign---with algorithms. We refer to misalignment as humans and algorithms having  either separate goals or separate ideals of pursuing the same goal. The first case is more fundamental and likely to be antithetical to unanimity. For example, if humans pursued some goal orthogonal to social welfare, then we can arrive in a case where evaluations are preferences in reverse order. Therefore, there would be no possible unanimous improvement and every matching would be unanimous. Redressing this requires a fundamentally different approach (\qcitenp{biro-2020}). We view this as a rare situation. The second case is more usual: humans and algorithms both seek to maximize social welfare, but they use different criteria for a "good" match, thereby arriving at different rankings over objects.

We demonstrate that it is feasible to combine human and algorithm decision-making to produce matchings. Our formulation of unanimity emphasizes prioritizing matches that either the human prefers or the algorithm prefers; we avoid matches where neither are true. Unanimity provides both provides robustness against utilizing only one decision model (an algorithm or a human) and synthesizes the information that algorithms and humans generate regarding the value of any particular match when the two parties seek the same goal. Mostly positive concerns motivate the first point: an algorithm could be poor precisely in special circumstances where a human sees an obvious error. Institutional concerns (often concerns in other environments as well) motivate our second point: with the future of a child on the line, many practitioners are unwilling to purely rely on an algorithm---even if it is accurate---to place children in homes for practical, ethical, and legal concerns. 

Unimprovable matchings are even more fair; any reallocation starting from an unimprovable matching cannot "weakly improve" all children if the reallocation unanimously improves any child. The basic rationale is that "weakly improving" should include an improvement with respect to at least one order \textit{that does not introduce a new unanimous improver}. The reallocation is not fair if it creates any "unanimous objection" that did not exist prior to the reallocation. Subject to this, unimprovable matchings fully prioritize unanimously improving children over improvements with respect to only one order. We show that strengthening this concept any further leads to non-existence results in the intuition preceding Proposition \ref{proposition:extended-unanimity}.

Our main task is showing that our criterion are beneficial in theory and practice. Theoretically, we show that unanimous matchings always exist, demonstrate how to construct them, and analyze their strategic and normative properties. Proposition \ref{proposition:efficient-unanimous-existence} highlights a tradeoff between unanimity and efficiency. Constrained-efficiency circumvents this tradeoff through improving on unanimous matchings. Empirically, we compare human judgment to our mechanisms using experimental data from real caseworkers' preferences.

Concretely, in foster care, a local matchmaker could use placement stability as its welfare measure, or evaluation, for children. The matchmaker's evaluations would be predicted placement stability. This is the probability that, if $c$ is placed at $h$, the placement would persist until $c$ permanently exits foster care. We will refer to this as \textit{persistence} to avoid confusion with the matching theory term \textit{stability}. Persistence is a generally agreeable goal among practitioners, but caseworkers still might fail to make the welfare optimal decision on behalf of the children they represent. Currently, caseworkers in almost all counties in the United States effectively follow a decentralized approach where the earliest caseworker to find a suitable home for a child can contact that home to begin the match. This is equivalent to a serial dictatorship ordered by time which will always yield a Pareto efficient matching. Using this process, one reason why caseworkers make suboptimal decisions is simple inexperience. \qcite{edwards-2018} estimates that the average caseworker only spends 1.8 years on the job before quitting. The field is notorious for high turnover rates, where the same authors estimate that 14-22\% of caseworkers turnover every year in the US. An inexperienced caseworker's judgment can, in turn, be quite poor, and the matchmaker should not rely on them alone for assessing persistence. This same concept applies to any organizational setting with misalignment.

One could hypothesize that the government could enforce an algorithm that matches children to homes purely based on its evaluations rather than caseworker preferences. However, to most practitioners, many features of foster care render it unamenable to algorithmic decision-making. Caseworkers routinely use qualitative information that is not easily encoded into usable data to make matching decisions. Moreover, as aforementioned, in a high-stakes environment, turning to full automation might not be an option. Last, caseworkers often simply do not trust algorithms. This motivates our use of preferences and evaluations. Including relevant caseworker datum and their choices in the process alleviates all of these problems. As it stands, to reap the boundless improvements from algorithmic decision-making seen in traditional matching theory applications like school choice, hospital-doctor matching, and kidney exchange, practitioners must agree to utilizing the methods invented in matching theory. Ensuring that they can incorporate algorithmic tools with human decision-making is a binding constraint. As we will detail in our empirical section, this strategy could greatly improve persistence in foster care and substantially benefit the tens of thousands of children entering foster care every year in the United States.

Finally, implementing algorithmic matching could constitute a partial solution to the major labor supply issues seen in child welfare. Caseworkers are notoriously overwhelmed with large caseloads, and a system that uses an objective metric to place children with the best possible home could simplify caseworkers' decisions, decrease the amount of manual labor involved, and increase retention. Importantly, the algorithms we devise cannot \textit{replace} caseworkers; we do not desire this. Instead, caseworkers would be an important check-and-balance against an algorithm that may have systemic prediction errors that could harm children's persistence. Thus, our mechanisms would achieve the best of both worlds: maximum utilization of qualitative data present in caseworker preferences and automated optimization of child outcomes using predictive analytics.

\section*{3 \hspace{5pt} Results}

In this section, we elucidate the structure of unanimity, detail our mechanisms for computing efficient and unanimous matchings, and give important results on their properties.

\subsection*{3.1 \hspace{5pt} The Structure of Unanimous Matchings}

We first begin with core properties of unanimous domination. Denote $U$ as the set of all unanimously acceptable and feasible matchings for the problem $L$.

\begin{lemma}\label{lemma:partial-order}
    The relation $\vartriangleright_{UD}$ is a strict partial order on the set $U$.
\end{lemma}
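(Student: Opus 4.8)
The plan is to check the two defining conditions of a strict partial order — irreflexivity and transitivity — and note that asymmetry follows for free. The organizing observation, which I would state at the outset, is that $\vartriangleright_{UD}$ is the pullback of a product order: define $\Phi : U \to 2^{C} \times 2^{C}$ by $\Phi(\mu) = (I^*(\mu), \kappa(\mu))$ and equip $2^{C} \times 2^{C}$ with componentwise inclusion. Both clauses in the paper's definition of unanimous domination collapse into the single statement that $\mu \vartriangleright_{UD} \mu'$ if and only if $I^*(\mu) \subseteq I^*(\mu')$, $\kappa(\mu) \subseteq \kappa(\mu')$, and at least one of these two inclusions is proper — i.e. $\Phi(\mu)$ lies strictly below $\Phi(\mu')$ in the product order. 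Since $(2^C, \subseteq)$ is a partial order, its strict product order on $2^C \times 2^C$ is a strict partial order, and pulling a strict partial order back along any map yields a strict partial order on the domain; that is the whole argument in one line. I would nonetheless also spell out the concrete verification for the record.

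For the concrete steps: irreflexivity is immediate, since $\mu \vartriangleright_{UD} \mu$ would require $I^*(\mu) \subsetneq I^*(\mu)$ or $\kappa(\mu) \subsetneq \kappa(\mu)$, and proper set inclusion is irreflexive. For transitivity, suppose $\mu \vartriangleright_{UD} \mu'$ and $\mu' \vartriangleright_{UD} \mu''$. Concatenating the non-strict inclusions gives $I^*(\mu) \subseteq I^*(\mu'')$ and $\kappa(\mu) \subseteq \kappa(\mu'')$; to see that one of these is proper, suppose toward a contradiction that both are equalities. Then $I^*(\mu) \subseteq I^*(\mu') \subseteq I^*(\mu'') = I^*(\mu)$ forces $I^*(\mu) = I^*(\mu') = I^*(\mu'')$, and symmetrically $\kappa(\mu) = \kappa(\mu') = \kappa(\mu'')$; but then neither inclusion witnessing $\mu \vartriangleright_{UD} \mu'$ is proper, contradicting $\mu \vartriangleright_{UD} \mu'$. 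Hence $\mu \vartriangleright_{UD} \mu''$. Asymmetry (hence the qualifier "strict") then follows: if $\mu \vartriangleright_{UD} \mu'$ and $\mu' \vartriangleright_{UD} \mu$ both held, transitivity would give $\mu \vartriangleright_{UD} \mu$, contradicting irreflexivity.

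I do not expect a genuine obstacle; the only points needing care are bookkeeping points. First, one must check that the proposed reformulation ("componentwise inclusion with at least one strict") really does capture both clauses of the definition and nothing more. Second, in the transitivity step the naive move of just composing inclusions does not by itself preserve the "at least one proper" condition, so the sandwich-and-contradiction argument is essential rather than cosmetic. Finally it is worth remarking that $\Phi$ need not be injective — distinct matchings with the same $(I^*,\kappa)$ profile are simply $\vartriangleright_{UD}$-incomparable — which is fully consistent with $\vartriangleright_{UD}$ being a (strict) partial rather than total order on $U$.
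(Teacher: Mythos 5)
Your proof is correct and takes essentially the same approach as the paper: both arguments reduce unanimous domination to "componentwise inclusion of $(I^*,\kappa)$ with at least one inclusion proper" and then verify irreflexivity, asymmetry, and transitivity directly. The only differences are cosmetic — you obtain asymmetry from irreflexivity plus transitivity rather than directly, and your sandwich argument for preserving strictness under composition is a cleaner rendering of the contradiction the paper runs through its cases (a)--(c).
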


\begin{proof}
    We demonstrate irreflexivity, asymmetry, and transitivity. First, observe that for any $\mu = \mu'$, $I^*(\mu) = I^*(\mu')$ and $\kappa(\mu) = \kappa(\mu')$. Therefore, $\neg (\mu \vartriangleright_{UD} \mu')$. Second, if $\mu \vartriangleright_{UD} \mu'$ then $I^*(\mu) \subset I^*(\mu')$ or $\kappa(\mu) \subset \kappa(\mu')$, implying that $\neg (\mu' \vartriangleright_{UD} \mu)$. It remains to show transitivity. Suppose, for a contradiction, that $\vartriangleright_{UD}$ is not transitive on comparable elements in $L$. Then, there must exist some $x, y, z \in U$ such that $x \vartriangleright_{UD} y$, $y \vartriangleright_{UD} z$, but $\neg(z \vartriangleright_{UD} x)$. This implies that either (a) $I^*(z) \subseteq I^*(x)$ and $\kappa(z) \subseteq \kappa(x)$, (b) $I^*(z) \subset I^*(x)$, or (c) $\kappa(z) \subset \kappa(x)$. However, by definition of unanimity, we have that (i) either $I^*(x) \subset I^*(y)$ or $\kappa(x) \subset \kappa(y)$, (ii) either $I^*(y) \subset I^*(z)$ or $\kappa(y) \subset \kappa(z)$, and (iii) $I^*(x) \subseteq I^*(y) \subseteq I^*(z)$ and $\kappa(x) \subseteq \kappa(y) \subseteq \kappa(z)$. (iii) rules out case (b) and (c). By (i) and (ii) we have:
    \[I^*(x) \subset I^*(y) \subseteq I^*(z)\]
    or
    \[\kappa(x) \subset \kappa(y) \subseteq \kappa(z)\]
    which contradicts the remaining case (a). This proves the lemma.
\end{proof}

\begin{corollary}\label{corollary:unanimous-existence}
    A unanimous matching always exists.
\end{corollary}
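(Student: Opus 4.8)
The plan is to read off the corollary directly from Lemma~\ref{lemma:partial-order} via the elementary fact that a finite, nonempty set equipped with a strict partial order has a maximal element, and that a $\vartriangleright_{UD}$-maximal element is exactly what ``unanimous'' means. First I would check the two hypotheses of that fact. Finiteness is immediate: $C$ and $H$ are finite, so there are only finitely many matchings, hence $U$ is finite. Nonemptiness holds because the empty matching $\mu_\varnothing$ (with $\mu_\varnothing(c)=\varnothing$ for every $c$) assigns no child a home outside $H^*(c)\cup H^-(c)$, so it is acceptable and lies in $U$.

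Next I would invoke Lemma~\ref{lemma:partial-order}, which tells us $\vartriangleright_{UD}$ is a strict partial order on $U$. Starting from any $\mu\in U$, if there is some $\mu'\in U$ with $\mu \vartriangleright_{UD}\mu'$, pass to $\mu'$ and repeat. Since $\vartriangleright_{UD}$ is irreflexive and transitive, no matching can recur along this chain (a recurrence would yield $\mu\vartriangleright_{UD}\mu$), so by finiteness the process halts at some $\mu^*\in U$ for which there is no $\mu'\in U$ with $\mu^*\vartriangleright_{UD}\mu'$. By definition this $\mu^*$ is not unanimously dominated by any acceptable matching.

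To make the argument airtight against the literal reading of ``unanimous'' as ``not unanimously dominated by \emph{any} matching'', I would add a short reduction to acceptable matchings. Suppose $\mu^*\vartriangleright_{UD}\mu'$ for a possibly non-acceptable $\mu'$. Delete from $\mu'$ every assignment of a child $c$ to a home $h\notin H^*(c)\cup H^-(c)$, obtaining an acceptable matching $\mu''\in U$. Each such $c$ already fails to lie in $I^*(\mu')$ (as $h\notin H^*(c)$) and in $\kappa(\mu')$ (as $h\notin H^*(c)\cup H^-(c)$), and the same remains true after the deletion, so $I^*(\mu'')=I^*(\mu')$ and $\kappa(\mu'')=\kappa(\mu')$. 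Hence $\mu^*\vartriangleright_{UD}\mu''$, contradicting maximality within $U$. Therefore $\mu^*$ is unanimous, proving the corollary.

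There is essentially no substantive obstacle here: all the work is done by Lemma~\ref{lemma:partial-order}, and the only things that need care are the (trivial) nonemptiness of $U$ and the bookkeeping in the reduction step confirming that a $\vartriangleright_{UD}$-maximal element of $U$ is not dominated by matchings lying outside $U$. Under the paper's stated convention of restricting attention to acceptable matchings, even this last step can be omitted.
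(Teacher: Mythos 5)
Your proof is correct and follows essentially the same route as the paper's: Lemma~\ref{lemma:partial-order} plus finiteness of $U$ rules out an endless chain of dominating matchings, so a $\vartriangleright_{UD}$-maximal (hence unanimous) element exists. The extra bookkeeping you add (nonemptiness of $U$ via the empty matching, and the reduction showing a maximal element of $U$ cannot be dominated by a non-acceptable matching) is sound but goes beyond the paper's terse contradiction argument, which simply restricts attention to acceptable matchings throughout.
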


\begin{proof}
    Suppose not. Then for any $\mu \in U$, there exists another $\mu'$ such that $\mu \vartriangleright_{UD} \mu'$. By transitivity of the partial order and finiteness of $U$, this is impossible.
\end{proof}

These two results show that unanimous matchings have well-ordered structure that trivially offers existence. The unanimous domination order is the same as Pareto dominance: irreflexivity, asymmetry, and transitivity all hold for both on the set of feasible matchings. Still, in the Proposition below, we find that this does not guarantee that combining unanimity and efficiency yields positive results. For the proof below, we denote the sets of unanimous homes (homes where the child has no unanimous improvement) as $H^* : C \rightarrow 2^H$.

\begin{proposition}\label{proposition:efficient-unanimous-existence}
    An efficient and unanimous matching can fail to exist.
\end{proposition}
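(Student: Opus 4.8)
The plan is to construct an explicit small counterexample — the standard approach for "can fail to exist" statements — so I would aim for a market with just two or three children and homes where the unique efficient matchings and the unique unanimous matchings are disjoint. The key observation driving the construction is that unanimity is a preference-aggregation criterion: by Proposition \ref{proposition:unanimous-reduction} it is equivalent to efficiency with respect to an aggregated order that treats non-unanimous homes as indifferent to (or dominated by) unanimous ones, so a matching can be unanimous yet $\succ$-Pareto dominated precisely when the Pareto improvement moves some child from a unanimous home to a non-unanimous home that she $\succ$-prefers. So I want a configuration where the only way to make the matching $\succ$-efficient is to destroy a unanimous match.

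First I would fix $C = \{c_1, c_2\}$ and $H = \{h_1, h_2\}$ with all pairs feasible. The idea is to set up a "conflict" where $c_1$ would like to trade up to $h_1$ according to her preferences, but $h_1$ is merely $\succ$-acceptable and not unanimous for $c_1$ (because the evaluation ranks some other home above it), whereas $h_2$ is unanimous for $c_1$. Concretely, I would try $\succ_{c_1}: h_1 \succ_{c_1} h_2 \succ_{c_1} \varnothing$ but $>_{c_1}: h_2 >_{c_1} h_1 >_{c_1} \varnothing$, so that $h_1 \in H^-(c_1)$ and $h_2 \in H^*(c_1)$. For $c_2$ I would make $h_1$ unambiguously best and unanimous, e.g. $\succ_{c_2}: h_1 \succ_{c_2} h_2 \succ_{c_2} \varnothing$ and $>_{c_2}: h_1 >_{c_2} h_2 >_{c_2} \varnothing$. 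Then I would check the two matchings: $\mu = \{(c_1,h_2),(c_2,h_1)\}$ has $I^*(\mu) = \{c_1,c_2\}$ (both on unanimous homes), so it is unanimous; but the other matching $\mu' = \{(c_1,h_1),(c_2,h_2)\}$ does not $\succ$-Pareto dominate it since $c_2$ is worse off — so I need to push harder.

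The fix is to break the symmetry so the unanimous matching is $\succ$-dominated. I would instead make $c_2$ indifferent between the homes from the welfare standpoint but strictly order them in preferences the "wrong" way, or better, add a third child/home so that a cycle of $\succ$-improvements exists that only passes through unanimous-to-non-unanimous moves for exactly one child. A clean version: take $C=\{c_1,c_2,c_3\}$, $H=\{h_1,h_2,h_3\}$; arrange $\succ$ so that the $\succ$-efficient matching is the "clockwise" assignment $c_1\to h_2, c_2\to h_3, c_3\to h_1$, and arrange evaluations so that for one child, say $c_3$, the home $h_1$ is non-unanimous while her evaluation-best available alternative $h_3$ is unanimous, forcing any unanimous matching to place $c_3$ at $h_3$ and thereby forcing the "counterclockwise" assignment, which is $\succ$-inefficient. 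I would then verify (i) the counterclockwise matching is unanimous (each child sits on a home in $H^*$), (ii) it is $\succ$-Pareto dominated by the clockwise matching, and (iii) the clockwise matching is not unanimous because it drops $c_3$ below her unanimous threshold — hence no matching is both.

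The main obstacle is bookkeeping: I must exhibit full preference and evaluation tables for every child, verify $H^*$ and $H^-$ for each, and then check $\vartriangleright_{PD}$ and $\vartriangleright_{UD}$ for all relevant matchings (there are only $3! = 6$ matchings on a $3\times 3$ complete market, so this is finite but must be done carefully). The conceptual risk is that whenever I create a $\succ$-Pareto improvement it might accidentally preserve unanimity (because the improved child lands on another unanimous home) or accidentally be blocked (because some other child is made worse off); threading the needle so that the unique direction of $\succ$-improvement is exactly the one that demotes some child from $H^*$ to $H^-$ is the crux, and a $3\times 3$ example with a Condorcet-style preference cycle in $\succ$ and a single "misaligned" child in $>$ is the cleanest way I see to guarantee it. I would present the finished tables, then state the two matchings in question, then dispatch the verification in a short enumerated check.
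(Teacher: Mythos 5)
Your approach---an explicit counterexample engineered so that the unique unanimous matching is $\succ$-Pareto dominated by a reallocation that demotes one child from a unanimous home to a non-unanimous one---is exactly the paper's strategy: the paper uses a $4\times 4$ market in which two children are pinned to their unique unanimous homes while the other two swap, the swap improving both in $\succ$ but landing one of them on a home that admits a unanimous improver. Your plan is viable (a $3\times 3$ instance does exist, and you correctly identify the crux that the demoted child needs a third home $f$ with $f \succ_c h \succ_c g$ in preferences but $g >_c f >_c h$ in evaluations, which is precisely why your $2\times 2$ attempt had to fail), though the proposal stops short of exhibiting the tables, and a two-child swap with the remaining children pinned to singleton unanimous sets is simpler to verify than the full three-cycle you sketch.
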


\begin{proof}
    Consider the following market with $C = \{ a, b, c, d \}$ and $H = \{ 1, 2, 3, 4 \}$. The preferences and evaluations are as follows:
    \begin{samepage}
    \begin{multicols}{3}
        \noindent\begin{align*}
            &a\\
            \succ_a: 1, 2&, 3, 4\\
            >_a: 4,3&,1,2
        \end{align*}\columnbreak
        \begin{align*}
            &b\\
            \succ_b: 1, 3&, 2, 4\\
            >_b: 2,3&,1,4
        \end{align*}\columnbreak
        \begin{align*}
            &c\\
            \succ_c: 1,2&,3,4\\
            >_c: 1,2&,3,4
        \end{align*}\columnbreak
    \end{multicols}
    \begin{multicols}{3}
        \noindent\begin{align*}
            &d\\
            \succ_c: 4,3&,2,1\\
            >_c: 4,3&,2,1
        \end{align*}\columnbreak
    \end{multicols}\vspace{-20pt}
    \end{samepage}
    The sets of unanimous homes for each child are: $H^*(a) = \{ 1, 3, 4 \}$, $H^*(b) = \{ 1, 2, 3 \}$, $H^*(c) = \{ 1 \}$, and $H^*(d) = \{ 4 \}$. We can build the matching $\mu = (3, 2, 1, 4)$ which has unanimous children $I^*(\mu) = \{ a, b, c, d \}$. One can verify that this is the only matching where all children are unanimous. Therefore, it is the only unanimous matching. However, $\mu$ is Pareto dominated by $\mu' = (2, 3, 1, 4)$. This proves the Proposition.
\end{proof}

The key insight from this counterexample is that a child need not remain unanimous even after improving with respect to her own preferences. A Pareto improvement can entail a worse evaluation. In the previous Proposition, $a$ Pareto improving to $2$ from $3$ gives her the worst possible evaluation. Compared to this, the matchmaker would prefer her at $1$ and $a$ would prefer the same. This "non-monotonicity" of unanimity with respect to Pareto improvements implies the non-existence result. 

Next, we turn to the first steps to construct unanimous matchings. We will prove a useful proposition below. We introduce the notion of \textit{unanimous rankings}. We call $>_{I(c)} \in P$ the unanimous ranking for $c$ if it satisfies, for all $h,h' \in H$, $h >_{I(c)} h'$ if and only if either (a) $h \in H^*(c)$ and $h' \notin H^*(c)$ or (b) $h \in H^-(c)$ and $h' \notin H^*(c) \cup H^-(c)$. Otherwise, $h \sim_{I(c)} h'$, i.e., $h$ and $h'$ are indifferent on the unanimous ranking. For all $h$, $h >_{I(c)} \varnothing$ if and only if $h$ is unanimously acceptable. The vector of all unanimous rankings is $>_I\hspace{3pt}= \{>_{I(c)}\}_{c \in C}$. Under this ranking, the child is indifferent between all homes that have no unanimous improvement, and the child strictly prefers all homes with no unanimous improvement to homes that have unanimous improvements. 

\begin{proposition}\label{proposition:unanimous-reduction}
    A matching $\mu$ is unanimous on $L$ if and only if it is efficient on $L' = (M, >_I, >)$.
\end{proposition}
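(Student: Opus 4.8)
The plan is to reduce unanimity, which is phrased in terms of the two nested sets $I^*(\cdot)\subseteq\kappa(\cdot)$, to ordinary single‑profile Pareto efficiency for the three‑tier profile $>_I$. First I would unpack the structure of each unanimous ranking $>_{I(c)}$: it partitions $H\cup\{\varnothing\}$ into at most three indifference classes, with $H^*(c)$ strictly on top, $H^-(c)$ strictly in the middle, and the remaining homes together with $\varnothing$ at the bottom (the only place the clause ``$h>_{I(c)}\varnothing$ iff $h$ is unanimously acceptable'' is visible is that within the bottom class $\varnothing$ sits above homes unacceptable to one of the two parties). Since we compare only acceptable matchings, $\mu(c)\in H^*(c)\cup H^-(c)\cup\{\varnothing\}$ for every $c$, so the only distinctions that matter are ``$\mu(c)\in H^*(c)$'', ``$\mu(c)\in H^*(c)\cup H^-(c)$'', and the complement --- equivalently, whether $c\in I^*(\mu)$, $c\in\kappa(\mu)$, or neither.

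The core of the argument is a pointwise translation. I would show that for acceptable $\mu,\mu'$ the condition $\mu'(c)\succsim_{I(c)}\mu(c)$ for every $c$ holds if and only if $I^*(\mu)\subseteq I^*(\mu')$ and $\kappa(\mu)\subseteq\kappa(\mu')$: for ``$\Rightarrow$'', casework on the class of $\mu(c)$ (if $\mu(c)\in H^*(c)$ weak dominance forces $\mu'(c)\in H^*(c)$; if $\mu(c)\in H^-(c)$ it forces $\mu'(c)\in H^*(c)\cup H^-(c)$; if $\mu(c)=\varnothing$ there is no further constraint beyond acceptability of $\mu'$); for ``$\Leftarrow$'', run the same cases in reverse. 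Next, assuming those two inclusions, a strict improvement $\mu'(c')>_{I(c')}\mu(c')$ for some $c'$ exists exactly when at least one inclusion is proper, since a strict jump in $>_{I(c')}$ means either $c'$ enters $I^*$ or $c'$ enters $\kappa$. Combining, $\mu'$ $>_I$‑Pareto dominates $\mu$ iff $I^*(\mu)\subsetneq I^*(\mu')$ with $\kappa(\mu)\subseteq\kappa(\mu')$, or $I^*(\mu)\subseteq I^*(\mu')$ with $\kappa(\mu)\subsetneq\kappa(\mu')$ --- which is precisely the definition of $\mu\vartriangleright_{UD}\mu'$.

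I would also record the small closure fact that any matching $>_I$‑Pareto dominating an acceptable matching is itself acceptable (a dominator cannot place a child strictly below $\varnothing$ when $\mu$ already places that child at $\varnothing$ or better), so restricting efficiency on $L'=(M,>_I,>)$ to acceptable matchings loses nothing and matches the standing convention under which unanimity is defined (note that $>$ plays no role in $>_I$‑efficiency and is carried in $L'$ only to make it a well‑formed problem). The proposition then follows immediately: $\mu$ is efficient on $L'$ iff no acceptable $\mu'$ $>_I$‑Pareto dominates it iff no $\mu'$ unanimously dominates it iff $\mu$ is unanimous on $L$.

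The step I expect to be the main obstacle is the bookkeeping in the pointwise translation --- making the disjunction in the definition of $\vartriangleright_{UD}$ line up exactly with the ``for all weak, there exists strict'' form of Pareto domination, and handling $\varnothing$/acceptability carefully enough that the inclusions $I^*(\mu)\subseteq I^*(\mu')$ and $\kappa(\mu)\subseteq\kappa(\mu')$ come out neither too strong nor too weak: in particular, confirming that a child matched at $\varnothing$ under $\mu$ imposes on $\mu'$ exactly the constraint ``acceptable'' and nothing more, and that an unmatched‑to‑matched transition registers as $c'$ entering $\kappa(\mu')$ (and possibly $I^*(\mu')$) rather than as a spurious extra requirement.
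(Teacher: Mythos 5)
Your proposal is correct, and the underlying case analysis is the same one the paper relies on, but you package it differently and, in my view, more cleanly. The paper proves both directions by contradiction through improvement cycles: from ``not efficient on $L'$'' it extracts a Pareto-improving sequence, builds the reallocated matching, and then checks membership in $I^*$ and $\kappa$ child by child; in the converse direction it takes a unanimous dominator and reconstructs an improving sequence. You instead prove the single pointwise equivalence that, for acceptable $\mu,\mu'$, the profile condition ``$\mu'(c)\succsim_{I(c)}\mu(c)$ for all $c$, strictly for some $c'$'' coincides exactly with ``$I^*(\mu)\subseteq I^*(\mu')$ and $\kappa(\mu)\subseteq\kappa(\mu')$ with at least one inclusion proper,'' i.e.\ that $>_I$-Pareto domination and $\vartriangleright_{UD}$ are the same relation on acceptable matchings; the proposition then follows by taking complements. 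This is legitimate because Pareto domination is defined directly on matchings, so the paper's cycle construction is not logically needed --- it is an intermediate characterization that your route bypasses. What your version buys is a transparent statement of why the reduction works (the three indifference classes of $>_{I(c)}$ are precisely the indicator structure of $I^*$ and $\kappa$) and an explicit treatment of two points the paper leaves implicit: that a $>_I$-dominator of an acceptable matching is itself acceptable, so restricting efficiency on $L'$ to acceptable matchings is without loss, and that an unmatched child under $\mu$ constrains $\mu'$ only through acceptability. Your identification of the bookkeeping around $\varnothing$ as the delicate step is accurate; it is exactly where the paper's own argument is thinnest (its stated definition of ``acceptable,'' read literally, would force every child to be matched and trivialize $\kappa$, and its backward direction invokes the same pointwise implication you prove, under the label ``by unanimity of $\mu'$'').
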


\begin{proof}
    In the Appendix.
\end{proof}

Proposition \ref{proposition:unanimous-reduction} identifies a striking reduction: instead of constructing a new mechanism to identify unanimous matchings, the problem can be simplified by constructing "unanimous preferences" that input into any Pareto efficient mechanism. The details are delicate because the unanimous ranking contains indifference. We describe the Serial Dictatorship with Indifference (SDI) mechanism that adequately accommodates this slight nuance in the next section. Using SDI and Proposition \ref{proposition:unanimous-reduction}, we obtain unanimous matchings. Nevertheless, the space of unanimous matchings can be large. Using the groundwork laid here, our main results offer two novel methods: one to find a constrained-efficient and unanimous matching and another to find incontestable matchings.

\subsection*{3.2 \hspace{5pt} Computing Efficient and Unanimous Matchings}

The mechanism we present in this section relies on modifications of the classic Top Trading Cycles (TTC) mechanism. Before describing these, we provide a formal definition for SDI which computes unanimous matchings. For short, we will denote the problem $L^I = (M, >_I, >)$ as the transformation of the problem $L = (M, \succ, >)$ with the unanimity order $>_I$ defined on $\succ$ and $>$.

\begin{figure}[t]\begin{singleframedindent}
    \begin{center}
        \textbf{Algorithm 1. Serial Dictatorship with Indifference}
    \end{center}

    Initialize the following: the set of chosen matchings $U^n$ such that $U^1 = U$.

    \textbf{Round $n \geq 1$:} 

    According to an arbitrary order, select the next child $c$ that has not been a dictator. 
    
    $U_c = \{ \mu \in U^n : \mu(c) \succsim_c \mu'(c) \text{ for all } \mu' \in U^n \text{ and } \mu(c) \succ_c \varnothing\}$, i.e., $U_c$ is the set of acceptable allocations that $c$ most prefers among available, acceptable allocations at round $n$. Set $U^{n+1} = U_c$. 
    
    If all children have been dictators or all homes are matched in every allocation in $U^{n+1}$, then terminate and return any $\mu \in U^{n+1}$. Otherwise, continue to round $n+1$.
\end{singleframedindent}\end{figure}

We provide Serial Dictatorship with Indifferences, an algorithm to compute unanimous matchings directly, for completeness. It is frequently alluded to in the literature on one-sided matching but rarely described in detail. SDI allows children to select allocations, or outcomes as referred to in social choice theory, that they are indifferent between. The next dictator selects allocations from those that remain, and so on. This process is Pareto efficient just as SD is in the setting with no indifference.

\begin{proposition}\label{proposition:sdi-unanimous}
    Serial Dictatorship with Indifference is unanimous.
\end{proposition}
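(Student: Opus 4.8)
The plan is to leverage Proposition~\ref{proposition:unanimous-reduction}, which establishes that a matching $\mu$ is unanimous on $L$ if and only if it is $\succ$-efficient with respect to the transformed problem $L^I = (M, >_I, >)$, where $>_I$ is the vector of unanimous rankings. Given this equivalence, it suffices to show that the matching returned by Serial Dictatorship with Indifference, run on $L$, is $>_I$-efficient on $L^I$. Note that SDI as stated uses the children's true preferences $\succ_c$ to break ties among the surviving allocations in each round; I would first argue that this choice is innocuous for the purpose of establishing efficiency — any selection rule that picks a most-preferred-available allocation for the current dictator (with respect to any fixed completion of $>_{I(c)}$ to a strict order, in particular $\succ_c$) yields the same guarantee, because the only property used is that at each round the dictator cannot be made strictly better off (in the $>_{I(c)}$ sense) within the surviving set.

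The core argument is the standard serial-dictatorship efficiency induction, adapted to indifferences. First I would observe that the sequence of sets $U = U^1 \supseteq U^2 \supseteq \cdots$ is nested and nonempty at every round: $U$ is nonempty (it contains, e.g., the empty matching, or one can invoke that an acceptable matching exists), and each $U_c$ is nonempty because it is the set of maximizers of $\succsim_c$ over the finite nonempty set $U^n$ restricted to acceptable-for-$c$ allocations — which is nonempty since every $\mu \in U^n$ with $\mu(c) = \varnothing$ is trivially acceptable at $c$. Let $\mu^\star$ be the returned matching. Suppose for contradiction that $\mu^\star$ is $>_I$-Pareto dominated on $L^I$ by some feasible, unanimously-acceptable matching $\nu$, so $\nu(c) \succsim_{I(c)} \mu^\star(c)$ for all $c$ with strict preference for at least one child. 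Let $c_1, c_2, \ldots$ be the order in which children became dictators, and let $c_k$ be the first dictator for whom $\nu(c_k) \succ_{I(c_k)} \mu^\star(c_k)$. For every earlier dictator $c_j$ ($j < k$) we have $\nu(c_j) \sim_{I(c_j)} \mu^\star(c_j)$, i.e., $\nu$ assigns $c_j$ a home in the same $>_{I(c_j)}$-indifference class as $\mu^\star$; combined with $\nu$ being unanimously acceptable, this means $\nu \in U^{k}$ is not ruled out by any round $j < k$ — here I would need the small lemma that if $\mu^\star \in U^{j+1} = U_{c_j}$ and $\nu(c_j) \sim_{I(c_j)} \mu^\star(c_j)$ then $\nu$, if it was in $U^j$, remains in $U^{j+1}$, which follows because $U_{c_j}$ is a full indifference class of maximizers. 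Then at round $k$, since $\nu \in U^k$ and $\nu(c_k) \succ_{I(c_k)} \mu^\star(c_k)$, the allocation $\mu^\star$ could not have been a $\succsim_{c_k}$-maximizer unless $\nu(c_k) \sim_{I(c_k)} \mu^\star(c_k)$ — because any strict $>_{I(c_k)}$-improvement is, under the construction of $>_{I(c_k)}$ from $\succ_{c_k}$, either a $\succ_{c_k}$-improvement or at worst preserves $c_k$'s ranking while moving up the unanimity classes, contradicting that $\mu^\star(c_k)$ was selected as a top element for $c_k$ in $U^k$. This contradiction establishes $>_I$-efficiency, hence unanimity.

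The main obstacle I anticipate is the interplay between the tie-breaking order $\succ_{c}$ used inside SDI and the coarser order $>_{I(c)}$ with respect to which efficiency must be proven: a move that is a strict $>_{I(c)}$-improvement (jumping from $H^-(c)$ to $H^*(c)$) need not be a strict $\succ_c$-improvement, and conversely, so the final-round contradiction has to be phrased carefully in terms of which order is "finer" where. The clean way around this is the observation that $>_{I(c)}$ is a coarsening of $\succ_c$ only in one direction — more precisely, I would verify that $h \succ_c h'$ together with both being unanimously acceptable does \emph{not} imply $h >_{I(c)} h'$, so the right statement to use is that the SDI dictator maximizes $\succ_c$, and among $\succ_c$-maxima the returned one is therefore also a $>_{I(c)}$-maximum over the surviving set \emph{provided} the surviving set is itself downward-closed under $>_{I(c)}$-indifference from previous rounds. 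Pinning down exactly this closure property — that $U^{n}$ is always a union of $>_{I}$-indifference classes relative to the already-assigned dictators — is the technical heart, and I would state it as an explicit invariant maintained by induction on $n$. Everything else is the routine serial-dictatorship bookkeeping.
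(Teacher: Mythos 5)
There is a genuine gap, and it stems from a misreading of which preference relation the dictators in SDI maximize. You take SDI to maximize the true preferences $\succ_c$ and then try to show the output is efficient on $L^I$, treating $\succ_c$ as ``a fixed completion of $>_{I(c)}$ to a strict order.'' But $\succ_c$ is \emph{not} a refinement of $>_{I(c)}$: the two can strictly disagree. Concretely, take $C=\{a,b\}$, $H=\{1,2,3\}$, with both children having $\succ\,: 1,2,3$ and $>\,: 3,1,2$, so that $H^*(\cdot)=\{1,3\}$ and $>_{I(\cdot)}$ ranks $[1,3]$ above $2$ while $\succ$ ranks $2$ above $3$. SDI run on the true (strict) preferences is just SD and outputs $(1,2)$, which has $I^*=\{a\}$ and is unanimously dominated by $(1,3)$ (same matched set, strictly larger unanimous coalition) --- so it is not unanimous. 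This also kills the key step of your round-$k$ contradiction: you assert that any strict $>_{I(c_k)}$-improvement is ``either a $\succ_{c_k}$-improvement or at worst preserves $c_k$'s ranking,'' but moving from home $2$ to home $3$ above is a strict $>_{I}$-improvement and a strict $\succ$-worsening, so a $\succ_{c_k}$-maximizer need not be a $>_{I(c_k)}$-maximizer. The closure invariant you propose in the last paragraph cannot be established because the statement you are trying to prove in that form is false.

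The paper's proof avoids all of this by running SDI \emph{on the transformed problem} $L^I=(M,>_I,>)$: the dictators maximize the unanimous rankings $>_{I(c)}$ themselves, retaining the entire indifference class of maximal available allocations at each round (this is exactly why SDI, rather than SD, is needed --- the indifferences live in $>_I$, not in $\succ$). The argument is then the standard one you sketch in your second paragraph: if the output were Pareto dominated on $L^I$, the earliest dictator who strictly improves under the dominating matching would have retained it, so some earlier dictator must have discarded it, which forces that earlier dictator to be strictly worse off under the dominating matching --- a contradiction; Proposition~\ref{proposition:unanimous-reduction} then converts $>_I$-efficiency into unanimity. That skeleton of yours is fine; the fix is simply to feed $>_{I(c)}$ (with its full indifference classes) into the algorithm rather than $\succ_c$, at which point the entire ``interplay'' you flag in your final paragraph disappears.
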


\begin{proof}
    This follows from the fact that SDI is Pareto efficient. Suppose, for a contradiction, that it is not, i.e., $\phi_{SDI} \vartriangleright_{PD} \mu'$ for some $\mu' \in U$. Consider the earliest dictator $c$ in round $n$ such that $\mu'(c) \succ_c \phi_{SDI}(c)$. $c$ would have chosen $\mu'$ if it were available, and every allocation in every future round $j > n$ must keep $c$ indifferent to $\mu'$. $\mu'$ must not have been available in round $n$. Hence, some dictator $c'$ in an earlier round $k < n$ did not choose $\mu'$. This implies that $\phi_{SDI}(c') \succ_{c'} \mu'(c')$, contradicting $\phi_{SDI} \vartriangleright_{PD} \mu'$. Therefore, SDI is Pareto efficient. By Proposition \ref{proposition:unanimous-reduction}, it is unanimous when applied to the problem $L^I$.
\end{proof}

Unanimous matchings can exhibit large degrees of inefficiency; in the example given below, we demonstrate an extreme case where any matching is unanimous but only one is efficient. 

\begin{example}\label{example:unanimous-inefficient}
    \textit{Unanimous matchings can be inefficient.}

    Consider the following market with $C = \{ a, b, c \}$ and $H = \{ 1, 2, 3 \}$. The preferences and evaluations are as follows:
    \begin{samepage}
    \begin{multicols}{3}
        \noindent\begin{align*}
            &a\\
            \succ_a: 1, &2, 3\\
            >_a: 3, &2, 1
        \end{align*}\columnbreak
        \begin{align*}
            &b\\
            \succ_b: 2, 1&, 3\\
            >_b: 3, 1&, 2
        \end{align*}\columnbreak
        \begin{align*}
            &c\\
            \succ_c: 3,2&,1\\
            >_c: 1,2&,3
        \end{align*}\columnbreak
    \end{multicols}\vspace{-20pt}
    \end{samepage}
    All homes are unanimous for all children, but the only efficient matching has all children receiving their top choices, i.e., $\mu = (1, 2, 3)$.
\end{example}

We desire mechanisms that satisfy unanimity (or nearly so) and, subject to that constraint, are also efficient. Our first mechanism is a variant of Top Trading Cycles which uses the insight from Proposition \ref{proposition:unanimous-reduction} to improve on any unanimous matching so that it meets our constrained efficiency criteria. We describe the operation of our mechanism, Unanimous Top Trading Cycles (UTTC), in Algorithm 2.

\begin{figure}[t]\begin{singleframedindent}
    \begin{center}
        \textbf{Algorithm 2. Unanimous Top Trading Cycles}
    \end{center}

    Initialize the following: a unanimous matching $\mu^1$.

    \textbf{Round $n \geq 1$:} 

    Each child $c$ that has not been removed points to the child (or $\varnothing$) holding her most preferred remaining home in $H^*(c)$ if $\mu^1(c) \in H^*(c)$ or her most preferred, unanimously acceptable remaining home in $H$ otherwise. 
    
    There must be at least one cycle or chain (ending in $\varnothing$), possibly of length one. Set $\mu^{n+1}$ by matching each child to the next child's home in the cycle or chain; remove all involved children and homes.
    
    If no children remain, terminate and output $\mu^{n+1}$. Otherwise, continue to $n+1$.
\end{singleframedindent}\end{figure}

\begin{proposition}\label{proposition:uttc-efficient-unanimous}
    Unanimous Top Trading Cycles is constrained-efficient and unanimous.
\end{proposition}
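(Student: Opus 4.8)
The plan is to prove the two properties separately. Both rest on the same elementary facts about the run of UTTC: each child receives the home she points to in the round she is removed, and --- since $c$ herself holds $\mu^1(c)$ until she is removed --- her pointing set is never empty, so it always contains a home weakly $\succ_c$-preferred to $\mu^1(c)$.

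For \emph{unanimity}, the idea is that UTTC only ever moves up the relation $\vartriangleright_{UD}$ from its already-unanimous input $\mu^1$. If $\mu^1(c)\in H^*(c)$, then $c$'s pointing set is a subset of $H^*(c)$ that still contains $\mu^1(c)$, so $\mu^{UTTC}(c)\in H^*(c)$; if $\mu^1(c)\in H^*(c)\cup H^-(c)$, then $c$ points only within unanimously acceptable homes and still holds the unanimously acceptable home $\mu^1(c)$, so $\mu^{UTTC}(c)$ is unanimously acceptable. Hence $I^*(\mu^1)\subseteq I^*(\mu^{UTTC})$ and $\kappa(\mu^1)\subseteq\kappa(\mu^{UTTC})$. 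Now if some $\mu''$ had $\mu^{UTTC}\vartriangleright_{UD}\mu''$, chaining these inclusions with the strict inclusions in the definition of $\vartriangleright_{UD}$ yields $\mu^1\vartriangleright_{UD}\mu''$, contradicting that $\mu^1$ is unanimous (such a $\mu^1$ exists by Corollary~\ref{corollary:unanimous-existence}, so UTTC can indeed be initialized). So $\mu^{UTTC}$ is unanimous. Note this argument does not need Proposition~\ref{proposition:unanimous-reduction}; it uses only that the input is unanimous and that UTTC weakly dominates it.

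For \emph{constrained-efficiency}, I would realize UTTC as plain Top Trading Cycles run on the strict profile $\succ^I$ in which each $c$ ranks her pointing block (either $H^*(c)$ or the set of unanimously acceptable homes) on top, ordered by $\succ_c$, then $\varnothing$, then the rest arbitrarily; plain TTC on $\succ^I$ is $\succ^I$-efficient. Suppose for contradiction that some $\mu'$ satisfies $\mu^{UTTC}\vartriangleright_{PD}\mu'$, $I^*(\mu^{UTTC})\subseteq I^*(\mu')$, and $\kappa(\mu^{UTTC})\subseteq\kappa(\mu')$. The crux is to show that for every $c$, $\mu'(c)$ lies in $c$'s top block of $\succ^I$ whenever $\mu^{UTTC}(c)$ does: if $\mu^1(c)\in H^*(c)$, then $\mu^{UTTC}(c)\in H^*(c)$, hence $c\in I^*(\mu^{UTTC})\subseteq I^*(\mu')$, hence $\mu'(c)\in H^*(c)$; and if $\mu^{UTTC}(c)$ is an otherwise-unanimously-acceptable home, then $c\in\kappa(\mu^{UTTC})\subseteq\kappa(\mu')$ and acceptability of $\mu'$ forces $\mu'(c)$ unanimously acceptable. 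Once $\mu'(c)$ is confined to the block where $\succ^I_c$ coincides with $\succ_c$, the inequalities in $\mu^{UTTC}\vartriangleright_{PD}\mu'$ upgrade verbatim to the same inequalities under $\succ^I$ (the one edge case, $\mu^{UTTC}(c)=\varnothing$, only helps, since any nonempty acceptable $\mu'(c)$ is strictly above $\varnothing$ in $\succ^I_c$). So $\mu'$ $\succ^I$-Pareto dominates $\mu^{UTTC}$ --- contradicting $\succ^I$-efficiency of TTC. Equivalently, one can run the usual TTC induction on removal rounds directly, using this same confinement of $\mu'(c)$ together with the invariant that the homes removed before round $n$ are exactly those assigned to children removed before round $n$, to force $\mu'=\mu^{UTTC}$.

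The step I expect to be the real obstacle is the confinement argument --- converting the coarse hypothesis ``$\mu'$ Pareto-dominates $\mu^{UTTC}$ and weakly expands $I^*$ and $\kappa$'' into the per-child statement ``$\mu'(c)$ stays inside $c$'s pointing block'' --- since this is the only bridge between the comparatively blunt order $\vartriangleright_{PD}$ with its $I^*,\kappa$ side-constraints and the fine-grained, history-dependent pointing behavior of the algorithm, and it is exactly where the restriction to acceptable matchings and the $I^*,\kappa$ inclusions get used. A secondary nuisance, which I would dispose of with a short remark rather than a case split, is the bookkeeping for children left unmatched by $\mu^1$ and for chains ending at $\varnothing$ (a matched child heading a chain vacating a home with no taker), to make ``remaining home'' and ``removed by round $n$'' unambiguous; this affects phrasing, not the logic.
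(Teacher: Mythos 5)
Your proposal is correct and follows essentially the same route as the paper: unanimity comes from the observation that each child's pointing block keeps her in $H^*(c)$ (resp.\ among unanimously acceptable homes) so that $I^*$ and $\kappa$ can only expand from $\mu^1$, and constrained efficiency comes from confining any dominating $\mu'$ to those same blocks and then running the standard earliest-removed-child TTC argument. The only cosmetic difference is that you factor the final step through the known efficiency of plain TTC on an explicitly constructed strict profile $\succ^I$, whereas the paper inlines that argument; the confinement step you flag as the crux is exactly the step the paper performs when it notes that every unanimous $c_j$ in the improving cycle must be reallocated to a unanimous home.
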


\begin{proof}
    In the Appendix.
\end{proof}

Constrained efficiency requires that one cannot Pareto improve children without unanimously worsening a child. Our approach fixes a unanimous matching and uses TTC on a restricted problem where unanimous children only point to their favorite unanimous homes and animus children point to their unrestricted favorite home. By the unanimity of the original matching, it is a given that we cannot find another matching where we can expand the set of unanimous children. Therefore, we know that any matching that is more efficient requires unanimously worsening some child, and we can use this pointing rule without loss. The remainder of the proof for constrained efficiency involves showing that the rule still guarantees no Pareto improvements unless the improvement necessitates unanimously worsening a child, and this follows from the same reasoning as above. UTTC allows the matchmaker to improve on any unanimous matching which may be useful in large markets where a large number of homes can create substantial opportunity for Pareto improving trades. 

The main disadvantage of unanimity is that it virtually ignores children that are not matched to unanimous homes. As we discussed, UTTC addresses this partially---we resort to efficiency when unanimity cannot be guaranteed. Empirically, we show that this is an effective strategy. Nevertheless, it is possible for some coalition of children to have unanimous improvers when a reallocation could weakly decrease the set of unanimous improvers for all children. This implies that some unfairness can still exist in unanimous matchings. We use unimprovable matchings to fully address this fairness concern. Starting from an unimprovable matching, there is no reallocation that unanjmously improves at least one child without giving another child a unanimous improver that they did not already have. Our mechanism---Adaptive SDI (ASDI)---shows how our theoretical developments thus far yield a method to construct unimprovable matchings.

\begin{figure}[t]\begin{singleframedindent}
    \begin{center}
        \textbf{Algorithm 3. Adaptive Serial Dictatorship with Indifference}
    \end{center}

    Initialize the following: the set of chosen matchings $U^n$ such that $U^1 = U$ and remaining homes $H^1 = H$.

    \textbf{Round $n \geq 1$:} 

    According to an arbitrary order, select the next child $c$ that has not been a dictator. 

    \textit{Subroutine A}

    Construct the unanimous ranking for $c$ over remaining homes $H^n$. Denote the round $n$ unanimous ranking for $c$ as $>_{I(c)}^n$.

    \textit{Subroutine B}
    
    Set $U_c = \{ \mu \in U^n : \mu(c) \geq^n_{I(c)} \mu'(c) \text{ for all } \mu' \in U^n \text{ and } \mu(c) >^n_{I(c)} \varnothing\}$. Set $U^{n+1} = U_c$ and $H^{n+1} = \{h : \mu(h) = \varnothing \text{ for some } \mu \in U^{n+1}\}$
    
    If all children have been dictators or $H^{n+1} = \varnothing$, then terminate and return any $\mu \in U^{n+1}$. Otherwise, continue to round $n+1$.
\end{singleframedindent}\end{figure}

\begin{theorem}\label{theorem:idsdi}
    Adaptive Serial Dictatorship with Indifferences is unimprovable.
\end{theorem}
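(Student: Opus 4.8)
The plan is a proof by contradiction: suppose $\mu=\phi_{ASDI}[L]$ is improvable, witnessed by an acceptable matching $\mu'$, so that $I(c,\mu'(c))\subseteq I(c,\mu(c))$ for every $c\in C$ and $\mu'(c^*)\in I(c^*,\mu(c^*))$ for some $c^*$; write $h^*:=\mu'(c^*)$, so $h^*\succ_{c^*}\mu(c^*)$, $h^*>_{c^*}\mu(c^*)$, and $h^*$ is unanimously acceptable for $c^*$. Let $c_1,\dots,c_k$ be the dictators in the order selected, $H^1\supseteq H^2\supseteq\cdots$ the remaining-home sets, and $H^{*,n}(c)$, $H^{-,n}(c)$ the round-$n$ analogues of $H^*(c)$, $H^-(c)$ computed over $H^n$, so $>^n_{I(c)}$ ranks $H^{*,n}(c)$ above $H^{-,n}(c)$ above $\varnothing$ above the rest of $H^n$. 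The key structural lemma I would prove is: the round-$n$ dictator $c$ is matched to a home in $H^{*,n}(c)$ whenever $H^n$ contains a home unanimously acceptable for $c$ (and is unmatched otherwise). The crux is that the $\succ_c$-best unanimously acceptable home $h_0$ of $H^n$ already lies in $H^{*,n}(c)$: any improver $h'\in H^n$ of $h_0$ satisfies $h'\succ_c h_0$, $h'>_c h_0$, hence is itself unanimously acceptable and in $H^n$, contradicting the choice of $h_0$. So $H^{*,n}(c)\neq\varnothing$, and since reassigning the round-$n$ dictator into any unanimously acceptable free home (or unmatching it) keeps the matching in $U^n$, every home of $H^{*,n}(c)$ is achievable by $c$, so $c$'s greedy pick lands in the top class $H^{*,n}(c)$. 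I would also record that if the run halts with $H^{k+1}=\varnothing$ rather than because every child was a dictator, then in every $\nu\in U^{k+1}$ all homes are matched and all non-dictators are unmatched --- otherwise a home held by a non-dictator could be freed while staying in $U^{k+1}$.

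The engine of the argument is that $\mu'$ survives every round: by induction, $\mu'\in U^i$ and $\mu'(c_i)\sim^i_{I(c_i)}\mu(c_i)$ for all $i\le k$. Given $\mu'\in U^i$, the home $\mu'(c_i)$ is achievable at round $i$ (and $\mu'(c_i)\in H^i$, as unmatching $c_i$ in $\mu'$ produces a member of $U^i$ with $\mu'(c_i)$ free), so it cannot sit in a strictly higher $>^i_{I(c_i)}$-class than $\mu(c_i)$, which is maximal among achievable homes. Nor strictly lower: since $\mu'(c_i)$ is unanimously acceptable while $\mu(c_i)\in H^{*,i}(c_i)$ by the structural lemma, a strict drop forces $\mu'(c_i)\in H^{-,i}(c_i)$, which yields a unanimous improver $h'\in H^i$ of $\mu'(c_i)$; then $h'\in I(c_i,\mu'(c_i))\subseteq I(c_i,\mu(c_i))$ is a unanimous improver of $\mu(c_i)$ inside $H^i$, contradicting $\mu(c_i)\in H^{*,i}(c_i)$. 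Hence $\mu'(c_i)\sim^i_{I(c_i)}\mu(c_i)$, so $\mu'\in U_{c_i}=U^{i+1}$; iterating gives $\mu'\in U^{k+1}$, with $\mu'(c_i)\in H^{*,i}(c_i)\subseteq H^i$ for every dictator.

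The endgame splits on $c^*$. If $c^*$ is the dictator $c_m$: by the engine claim $h^*=\mu'(c_m)\in H^{*,m}(c_m)\subseteq H^m$, so $H^{*,m}(c_m)\neq\varnothing$ and, by the structural lemma, $c_m$ is matched with $\mu(c_m)\in H^{*,m}(c_m)$; but then $\mu(c_m)$ has no unanimous improver in $H^m$, contradicting $h^*\in I(c^*,\mu(c^*))\cap H^m$. If $c^*$ is not a dictator, the run terminated with $H^{k+1}=\varnothing$, so by the terminal fact $c^*$ is unmatched under $\mu$ while all homes are matched under both $\mu$ and $\mu'\in U^{k+1}$; each matches exactly $|H|$ children, and as $c^*$ is matched under $\mu'$ but not under $\mu$, some $e$ is matched under $\mu$ with $\mu'(e)=\varnothing$. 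Then $\mu(e)$ is unanimously acceptable, so $\mu(e)\in I(e,\varnothing)=I(e,\mu'(e))$ while $\mu(e)\notin I(e,\mu(e))$, contradicting $I(e,\mu'(e))\subseteq I(e,\mu(e))$. Either branch contradicts the existence of $\mu'$, so $\mu$ is unimprovable.

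The step I expect to be the main obstacle is the structural lemma together with its bookkeeping: precisely defining $H^n$ and $H^{*,n}(c)$, verifying that a round-$n$ dictator can realize any remaining unanimously acceptable home and that $\mu'(c_i)$ truly lies in $H^i$, and handling the degenerate boundary cases (a dictator with no remaining unanimously acceptable home, or an empty selection set $U_{c}$) so that the dichotomy ``$c^*$ a dictator / a non-dictator'' is genuinely exhaustive.
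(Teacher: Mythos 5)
Your proof is correct and rests on the same core mechanism as the paper's: each round-$i$ dictator's ASDI assignment lies in the top indifference class $H^{*,i}(c_i)$ of the round-$i$ unanimous ranking, so any improving matching must give every dictator an equivalent assignment, and a unanimous improver that is still present in $H^i$ directly contradicts $\mu(c_i)\in H^{*,i}(c_i)$ via the containment $I(c_i,\mu'(c_i))\subseteq I(c_i,\mu(c_i))$. The organization differs in a way that is worth noting. The paper decomposes the improvement into a trading cycle or chain, takes the \emph{earliest} dictator who strictly improves, and argues that some still earlier dictator must have blocked the reallocation, which then violates the subset condition (using the two-indifference-class structure of $>^k_{I(c')}$). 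You instead prove a standalone structural lemma (the $\succ_c$-best unanimously acceptable remaining home is automatically round-unanimous, hence achievable, hence chosen) and run a forward induction showing the improving matching survives every round, i.e.\ $\mu'\in U^i$ with $\mu'(c_i)\sim^i_{I(c_i)}\mu(c_i)$. Your version buys two things the paper's terse argument glosses over: an explicit verification that $\mu'(c_i)$ actually belongs to $H^i$ so the round-$i$ ranking applies to it, and a genuine treatment of the early-termination case where the improving child was never a dictator (your counting argument on $U^{k+1}$), which the paper's cycle construction does not explicitly cover. The only small items to tidy are the ones you already flag, plus one you do not: in the ``nor strictly lower'' step you assert $\mu'(c_i)$ is unanimously acceptable, which silently excludes $\mu'(c_i)=\varnothing$; that case is dispatched by the same pattern you use in the endgame, since $\mu(c_i)\in I(c_i,\varnothing)\setminus I(c_i,\mu(c_i))$ would violate the containment. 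None of this changes the verdict: the argument goes through.
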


\begin{proof}
    In the Appendix.
\end{proof}

Our proof shows that the same logic in Proposition \ref{proposition:sdi-unanimous} holds with a few technical points to smooth out. Surprisingly, the adaptation (A) subroutine of ASDI neatly amends SDI to achieve unimprovability. The structure of ASDI reveals that unimprovable matchings are the "fixed point" of unanimous matchings when one allows for changing preferences in response to unavailable homes. Although this structure is evident, we remark that unimprovable matchings can fail to be unanimous. For example, one can run ASDI with the order $d, c, a, b$ in the example from Proposition \ref{proposition:efficient-unanimous-existence}. The result is that $\phi_{ASDI}(a) \in \{ 2, 3 \}$ and $\phi_{ASDI}(b) \in \{ 2, 3 \}$ whereas the only unanimous matching has $\mu(a) = 3$ and $\mu(b) = 2$. Although unimprovable matchings can fail to be unanimous, we observe that unimprovable matchings frequently coincide with unanimous matchings in the next section as we explore strategic incentives. 

\subsection*{3.3 \hspace{5pt} Strategic Incentives}

In this section, we prove several key results about the strategic properties of unanimous matching mechanisms. We find that there is no unanimous and strategy proof matching mechanism (Proposition \ref{proposition:unanimous-sp}). In addition, the manipulations that caseworkers could pursue are concerning to the healthy operation of any unanimous matching mechanism. Subsequently, based on \qcite{troyan-2019}, we evaluate our mechanism's \textit{obvious} manipulability, that is, whether or not an agent that has limited contingent reasoning can evaluate successful manipulations. All results hold for both unanimous and unimprovable matchings because the two concepts are equivalent for every example given.

\begin{proposition}\label{proposition:unanimous-sp}
    A unanimous and strategy-proof mechanism does not exist.
\end{proposition}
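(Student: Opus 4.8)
The plan is to exhibit one market on which every unanimous mechanism admits a profitable truncation, and to make the argument immune to however the mechanism resolves ties among unanimous matchings. I would reuse the market $L$ built in the proof of Proposition~\ref{proposition:efficient-unanimous-existence}, with children $\{a,b,c,d\}$, homes $\{1,2,3,4\}$, and (as recorded there) $H^*(a)=\{1,3,4\}$, $H^*(b)=\{1,2,3\}$, $H^*(c)=\{1\}$, $H^*(d)=\{4\}$, whose unique unanimous matching is $\mu^*=(a{:}3,\,b{:}2,\,c{:}1,\,d{:}4)$. Hence any unanimous mechanism $\phi$ satisfies $\phi[L](a)=3$. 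Since $1\succ_a 3$ and $2\succ_a 3$, the candidate deviation is the truncation $\succ'_a$ that declares exactly homes $1$ and $2$ acceptable; writing $L'$ for the resulting problem, it suffices to show that $\phi[L'](a)\in\{1,2\}$, which gives $\phi[L'](a)\succ_a\phi[L](a)$ and contradicts strategy-proofness.

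The work is in showing that \emph{every} unanimous matching of $L'$ assigns $a$ a home. Under the truncated report, homes $3,4$ become unacceptable to $a$ and, because $1>_a 2$, one gets $H^*(a)=\{1\}$ and $H^-(a)=\{2\}$, while the data of $b,c,d$ are untouched. By Proposition~\ref{proposition:unanimous-reduction}, the unanimous matchings of $L'$ are precisely the matchings efficient for $(L')^I=(M,>'_I,>)$, and in $>'_{I(a)}$ child $a$ strictly prefers home $1$ to home $2$ to remaining unmatched. Suppose some such matching $\mu'$ left $a$ unmatched. If a home in $\{1,2\}$ were vacant, moving $a$ onto it would Pareto-improve $\mu'$ for $>'_I$, so homes $1$ and $2$ are both occupied; if home $4$ were vacant, moving $d$ onto it would Pareto-improve $\mu'$ (since $4\in H^*(d)$ and $d$ currently sits outside $H^*(d)$), so home $4$ is occupied as well; thus $b,c,d$ occupy $\{1,2,4\}$ and home $3$ is vacant. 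But then, because $3\in H^*(b)$, moving $b$ onto the vacant home $3$ (always possible, as $b$'s current home also lies in $H^*(b)=\{1,2,3\}$), together with — if needed — moving $d$ onto home $4$, frees a home in $\{1,2\}$ for $a$ without making anyone worse off for $>'_I$, once more a Pareto improvement. Each case contradicts efficiency of $\mu'$, so $a$ is matched in every unanimous matching of $L'$, necessarily to home $1$ or $2$.

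Therefore $\phi$ is manipulable by $a$ at $L$, and no unanimous mechanism is strategy-proof; the same market and truncation also refute the unimprovable variant, since $\mu^*$ is likewise the unique unimprovable matching of $L$ and any matching of $L'$ leaving $a$ unmatched is improvable by the identical relocations. The step I expect to be the crux is the last one: ruling out that the mechanism ``retaliates'' against the truncation by leaving $a$ unmatched. This is exactly where the structure of the example is used — child $b$ must possess a spare unanimous home ($3\in H^*(b)$) so that $b$ can always be seated away from homes $1$ and $2$ without losing unanimity, freeing one of $a$'s two acceptable homes. With very few homes no such slack exists and a unanimous mechanism can in fact deter the truncation, which is why a small two-home instance would not suffice and is consistent with the non-obvious-manipulability results obtained later when homes are scarce.
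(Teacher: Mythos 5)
Your main argument is correct and takes essentially the same route as the paper: exhibit a concrete market, pin down what a unanimous mechanism must assign, and show a truncation is profitable. The paper uses a more economical instance ($C=\{a,b\}$, $H=\{1,2,3\}$ with identical rankings for $a$ and $b$), where two symmetric unanimous matchings exist and whichever one the mechanism selects, the child assigned home $3$ truncates to make home $2$ her only unanimous home; your $4\times 4$ instance instead has a unique unanimous matching, which removes the selection issue but costs you the case analysis needed to rule out the mechanism leaving $a$ unmatched after the deviation. Both verifications go through, so this is a matter of economy rather than substance.

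One genuine error in your closing remark: $\mu^*=(3,2,1,4)$ is \emph{not} the unique unimprovable matching of $L$. The paper itself points out (in the discussion following Theorem \ref{theorem:idsdi}) that running ASDI on this very market with the order $d,c,a,b$ yields matchings with $\phi_{ASDI}(a)\in\{2,3\}$, and ASDI is unimprovable; indeed $(2,3,1,4)$ is unimprovable since the only child with a nonempty improver set there is $a$ with $I(a,2)=\{1\}$, and reassigning $a$ to $1$ necessarily hands $c$ a new unanimous improver. This does not affect the Proposition as stated, which concerns only unanimous mechanisms, but the claimed extension to the unimprovable variant would need a separate argument.
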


\begin{proof}
    Consider the following market with $C = \{ a, b \}$ and $H = \{ 1, 2, 3 \}$. The preferences, evaluations, and induced unanimity orders are as follows, where omitted homes are not unanimously acceptable and brackets indicate indifference:
    \begin{samepage}
    \begin{multicols}{2}
        \noindent\begin{align*}
            &a\\
            \succ_a: 1&, 2, 3\\
            >_a: 3&, 1, 2\\
            >_{I(a)}: [1&, 3], 2
        \end{align*}\columnbreak
        \begin{align*}
            &b\\
            \succ_b: 1&, 2, 3\\
            >_b: 3&, 1, 2\\
            >_{I(b)}: [1&, 3], 2
        \end{align*}\columnbreak
    \end{multicols}\vspace{-20pt}
    \end{samepage}
    The following matchings are unanimous: $(1, 3)$ and $(3, 1)$. A unanimous mechanism must output one of these matches. We prove that each one would induce some child to have a profitable manipulation.
    
    First, consider $(1, 3)$. We will denote $\succ'_b$ as the preference ranking where $b$ reports only $2$ as acceptable. Under this report, only $2$ would be a unanimous home for $b$. The resulting two unanimous matchings would be $(1, 2)$ and $(3, 2)$. In either case, $b$ has a profitable manipulation.

    Next, consider $(3, 1)$. Let $\succ'_a$ be the preference ranking where $a$ reports only $2$ as acceptable. Then the only unanimous matchings are: $(2, 1)$ and $(2, 3)$. In either case, $a$ has a profitable manipulation.

    Therefore, under any unanimous matching that is selected, either $a$ or $b$ have a profitable manipulation, contradicting the existence of a unanimous and strategy-proof mechanism.
\end{proof}
$a$ and $b$'s improving manipulations are simple: refuse to report less-preferred options when anticipating that they might be relegated to one of them. The Proposition trivially implies that there is also no unanimous and strategy proof mechanism since every unanimous matching is weakly unanimous. Unfortunately, this counterexample also demonstrates that any unanimous mechanism is manipulable by truncations in such a way that a falsifying child can find a Pareto improvement without necessarily matching to another unanimous home. In the worst case scenario where all agents adopt truncation strategies, this could severely harm children's welfare.

The very structure of unanimous matchings requires the matchmaker to use the reported preferences against the preferred interests of the children when there is no unanimous agreement. In particular, when the matchmaker knows that children have acceptable homes that generate no unanimous improvement, the matchmaker is obligated to attempt to assign the children to these homes irregardless of whether they are the children's most preferred homes. This proves to defeat strategy-proofness altogether. 

However, this all comes with one important caveat that has been an observation leading to a burgeoning literature of matching under incomplete information: it is unlikely that caseworkers know the preferences of other agents. In our setting---and in the case of unanimous matching in general---the asymmetric information is more extreme, because caseworkers do not know the matchmaker's evaluations for all children\footnote{The evaluations for a given child might be reported to the caseworker that represents that child, but they need not be reported to caseworkers that do not represent the child.}. How, then, should a caseworker evaluate manipulations under incomplete information? We adopt the notion of non-obvious manipulability as in \qcite{troyan-2019}. In short, a mechanism is non-obviously manipulable if there is no manipulation that makes a child strictly better off in the worst case or best case compared to truth-telling when she does not know others' preferences and evaluations.

We assume that every child knows her own preferences and evaluations. Fixing $M$, a matching $\mu$ is consistent with $\phi$, $\succ_c$, and $>_c$ if there is some profile of preferences $\succ_{-c}$ and evaluations $>_{-c}$ such that $\phi[M, \succ_c, \succ_{-c}, >_c, >_{-c}] = \mu$. We denote the set of matchings consistent with $\succ_c$ and $>_c$:
\[CU[\succ_c, >_c] = \{ \mu : \phi[M, \succ_c, \succ_{-c}, >_c, >_{-c}] = \mu \text{ for some } \succ_{-c},>_{-c} \}\]
The worst possible consistent matchings for $c$ given $\succ_c$ and $>_c$ are:
\[CU^-[\succ_c, >_c] = \{ \mu : \mu \in CU[\succ_c, >_c] \text{ and } \mu'(c) \succsim_c \mu(c) \text{ for all } \mu' \in CU[\succ_c]\}\]
and the best possible consistent matchings for $c$ given $\succ_c$ and $>_c$ are:
\[CU^-[\succ_c, >_c] = \{ \mu : \mu \in CU[\succ_c, >_c] \text{ and } \mu(c) \succsim_c \mu'(c) \text{ for all } \mu' \in CU[\succ_c]\}\]
We will sometimes refer to these worst and best consistent matchings as singular without loss of generality as we could select an arbitrary element from the sets. We say that $\succ'_c$ is a worst-case manipulation if:
\[CU^-[\succ'_c, >_c](c) \succ_c CU^-[\succ_c, >_c](c)\]
and $\succ'_c$ is a best-case manipulation if:
\[CU^+[\succ'_c, >_c](c) \succ_c CU^+[\succ_c, >_c](c)\]
A mechanism $\phi$ is non-obviously manipulable if no $c$ has a worst-case nor best-case manipulation. We can now state our result.
\begin{theorem}\label{theorem:obvious-manipulability}
    The following statements are true:
    \begin{enumerate}
        \item[i.] Any unanimous matching mechanism is obviously manipulable.
        \item[ii.] Serial Dictatorship with Indifference is non-obviously manipulable if $|H| \leq |C|$.
    \end{enumerate}
\end{theorem}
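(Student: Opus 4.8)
The plan is to show that for every unanimous mechanism $\phi$ some child has a truncation that improves her worst case. Take $C=\{a,b\}$, $H=\{1,2,3\}$, and fix $a$'s true type as $\succ_a:1,2,3$, $>_a:3,1,2$ (with $\varnothing$ last in each), so $H^*(a)=\{1,3\}$ and $H^-(a)=\{2\}$. First I would exhibit the report $\succ_b:1,2,3$, $>_b:1,2,3$ for $b$, for which $H^*(b)=\{1\}$; a direct check against the definition of $\vartriangleright_{UD}$ shows $(a\mapsto3,\ b\mapsto1)$ is the \emph{unique} unanimous matching of this instance, so every unanimous $\phi$ outputs it and hence $CU^-[\succ_a,>_a](a)\preceq_a 3$. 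Next let $\succ'_a$ be the truncation declaring only $1$ and $2$ acceptable, so $a$'s only unanimously acceptable homes are $\{1,2\}$; since the single other child occupies at most one home, if $a$ were unmatched then some home of $\{1,2\}$ is free, and re-matching $a$ there strictly enlarges $\kappa$ while weakly enlarging $I^*$, contradicting unanimity. Hence every unanimous matching places $a$ at a home in $\{1,2\}$, so $CU^-[\succ'_a,>_a](a)\in\{1,2\}\subseteq\{h:h\succ_a 3\}$, and $\succ'_a$ is a worst-case manipulation; thus $\phi$ is obviously manipulable. The only delicate points are the uniqueness and "$a$ always matched" claims, both routine from the definition of unanimous domination.

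\textbf{Part (ii).} For SDI I would rule out best-case and worst-case manipulations separately for each child $c$, using the serial-dictatorship structure on $L^I$ together with $|H|\le|C|$. \emph{Best case:} SDI only returns acceptable matchings, so the home $c$ receives always satisfies $h>_c\varnothing$ and is therefore weakly $\succ_c$-below $h^*_c:=\max_{\succ_c}\{h:h\succ_c\varnothing\text{ and }h>_c\varnothing\}$, which one checks lies in $H^*(c)$ (any home $c$ truly prefers to $h^*_c$ is matchmaker-unacceptable, so cannot be a unanimous improver over $h^*_c$). With $|H|\le|C|$ there are enough other children to take, or to uniquely claim, the remaining homes of $H^*(c)$, which forces SDI under truthful reporting to leave $c$ with exactly $h^*_c$; so $CU^+[\succ_c,>_c](c)=h^*_c$, while every report of $c$ produces an outcome $\preceq_c h^*_c$. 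Hence $c$ has no best-case manipulation.

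\emph{Worst case (the main obstacle).} Under truthful reporting, $|H|\le|C|$ again lets the other children be chosen so that $c$'s most preferred unanimously acceptable homes are all claimed or blocked, in the order worst for her --- using that $>_{I(c)}$ ranks all of $H^*(c)$ above all of $H^-(c)$ --- leaving $c$ unmatched or pinned to a fixed low home. I would then argue that no report $\succ'_c$ beats this floor: dropping homes only deletes fallbacks and weakly worsens $c$'s worst case, while reordering cannot enlarge $\{h:h>_c\varnothing\}$ and cannot lift the forced assignment, since adversarially chosen other children can read $\succ'_c$ and remove precisely the homes that expose $c$'s $\succ_c$-worst reachable home as her forced pick. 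Hence $CU^-[\succ'_c,>_c](c)\preceq_c CU^-[\succ_c,>_c](c)$ for every $\succ'_c$, so $c$ has no worst-case manipulation and SDI is non-obviously manipulable. The real work is the bookkeeping in this last step: tracking how a misreport $\succ'_c$ simultaneously reshapes $H^*(c)$, $H^-(c)$, the induced ranking $>_{I(c)}$, and the dictator's resulting choice, and verifying that $|H|\le|C|$ always supplies enough other children to neutralise it so that $c$'s worst-case outcome cannot rise.
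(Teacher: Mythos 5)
Your proposal is correct and follows essentially the same strategy as the paper: part (i) is proved via a two-child instance with $|H|>|C|$ in which a truncation is a worst-case manipulation (your three-home example is structurally identical to the paper's four-home one and checks out), and part (ii) argues that the best case under any report is capped by the $\succ_c$-best unanimously acceptable home while the imbalance $|H|\le|C|$ lets an adversarial profile of the other children pin $c$ to her worst reachable outcome (or leave her unmatched) under every report. The remaining "bookkeeping" you flag in the worst-case step is left at roughly the same level of detail in the paper's own proof.
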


\begin{proof}
    In the Appendix.
\end{proof}

The Theorem is silent on UTTC. It is easy to prove that the mechanism is non-obviously manipulable conditional on the initial endowment. However, UTTC must always input a unanimous matching. Any unanimous matching mechanism it uses to generate an endowment will be obviously manipulable. Therefore, UTTC is obviously manipulable by proxy.

The proof for the non-obvious manipulability of SDI under the imbalance condition is fairly straightforward. The best-case for a child $c$ when truth-telling is always matching to her most preferred home which is always unanimous. We can find another set of preferences and evaluations for other children that will ensure that this match for $c$ is unanimous. Therefore, the key step is to prove that the worst-case for truth-telling is never strictly worse than the worst-case under any misreport. This follows from one observation. Children that go later in the order are at a disadvantage when children earlier in the order will permanently match to unanimous homes, which occurs when these children's unanimous homes align with their most preferred home. Under such a worst-case conjecture, manipulations can never help. If we return to the same example in Proposition \ref{proposition:unanimous-sp}, when $a$ reports that only $2$ is acceptable, she will be unmatched under the conjecture that either $b$ or $c$ is prior to her in the order and has a most preferred home $2$.

Condition (i) in Theorem \ref{theorem:obvious-manipulability} is bad news. When the number of homes is large relative to the number of children, the obvious strategy for a child is to truncate her preferences to remove her least preferred homes (up until the length of her preference list matches the number of children) so that there is no possibility that the mechanism matches her to one of them to enforce unanimity. However, condition (ii) provides reassurance. First, for our application, there is almost never a foster home surplus, meaning that the Theorem will apply. Second, the proof for condition (ii) shows that the mechanisms have no best-case manipulation, and the \textit{only} worst-case manipulation are truncations leaving at least $|C|$ homes as unanimously acceptable. We conclude that it is critical that the matchmaker is aware of the strategic incentives that might arise when implementing our mechanisms and that an organization should strive to align its decision-makers and algorithm with its objective in order to mitigate strategic behavior.

\subsection*{3.4 \hspace{5pt} A General Theory of Matching with Preference Aggregation}

Our results above offer a promising direction for future research. Our primary focus in this paper is the applied problem of matching foster children to foster homes with human and algorithm decision-making, but Proposition \ref{proposition:unanimous-reduction} hints that a number of preference aggregation rules other than unanimity could emerge as viable candidates for this task. Here, we briefly outline a general framework that we term matching with preference aggregation, and we show that no desirable aggregation rule can restore positive strategy-proofness results. An aggregation rule will input preferences and evaluations and output an aggregated ranking. We will then seek to find matchings that are efficient with respect to the aggregated ranking and mechanisms that are, in some sense, strategy-proof on the problem with the aggregated ranking. Strategy-proofness takes on a new form in this framework owing in part to institutional constraints and in part to the intricacies of matching with preference aggregation. We address each.

\textit{Group Efficiency Concerns in Practice}---As we outlined in our Introduction, no counties that we are aware of use matching mechanisms to match children to homes. However, there are some that use match recommendation tools. \qcite{slaugh-2016} redesigns the match recommendation system used by a Pennsylvania statewide child welfare authority. Children unable to find a match in the local county would be forwarded to the state authority, and the state would use the system to identify suitable homes for the child. As of 2025, we verified with the practitioners in question that the general procedure outlined here was still operational. However, an issue that Slaugh et al. identified---and that we confirmed still exists to some degree---is that the match recommendation system was not heeded because caseworkers questioned the matches it proposed. We quote their account of the problem highlighted when they surveyed the caseworkers in Pennsylvania counties: "[O]nly 32 percent
agreed with the statement that PAE does a good job of
recommending the most suitable families via electronic
matches from the Resource Family Registry for each
child."

We interviewed caseworkers in the county that parroted this sentiment. Another perspective from Slaugh et al. explains how they view the role of the new system they provided to the state: "[O]ur algorithm generates a list of mere recommendations that may be implemented in conjunction with the judgment of the professionals ... [In foster care], such ranking information is impossible to elicit directly because of limitations such as market size and informational asymmetry. One main function of our algorithm can be seen as constructing such preferences from given pieces of information [about homes] and using them as the basis of recommendations." In other words, caseworkers cannot fully form preferences, so the algorithm completes them by using a large swathe of information about homes that humans would otherwise need to manually sort through. Then, the caseworker reviews the recommendations and selects matches that accord with her own, true preferences. One can see this as an informal case of matching with preference aggregation where the recommendation system is the algorithm; the key difference is that the matches it produces are not binding.

The intuition behind why a county might abandon this match recommendation system is \textit{not} that a single caseworker might find matches she prefers more than the recommendation. Consider a situation where one caseworker is unsatisfied, others receive their most preferred match from the recommendation system, and there are no unmatched homes. This is Pareto efficient. The fundamental problem is that multiple caseworkers could be unsatisfied with recommendations---being able to jointly improve from a "recommendation reallocation"---or that unmatched homes might be more preferable than the recommendation. In other words, matching with preference aggregation does not always produce efficient allocations. One might hypothesize that, if the recommendations were efficient, even if there were caseworkers who could benefit by strategically providing incorrect information to the recommendation system to benefit their recommendation, they would not do so. The reason why is that caseworkers are to do right by all children and colleagues, but strategic manipulation of an efficient matching is guaranteed to harm some other caseworker's preferences.

\textit{Theoretic Concerns}---This connects to the fine details of strategic incentives in matching with preference aggregation. As we demonstrated with Example \ref{example:unanimous-inefficient}, a matching that is efficient with respect to aggregated preferences can be $\succ$-inefficient. Therefore, our first criteria for a "strategy-proof" aggregation rule and mechanism is that there should be no profitable manipulation by a child that does not harm another child. This is a relaxed version of strategy-proofness that considers only particular types of manipulations. 

We add a second criteria. There should additionally be no profitable manipulation by the \textit{matchmaker} that can weakly improve all children with respect to its preferences. Suppose that this did not hold. A matchmaker that uses, for instance, predictive analytics to assess match persistence could have strange and perverse incentives to artificially distort the predictions. One could argue that this reduces confidence in utilizing a mechanism with aggregated preferences at all because the agents cannot be certain that the matchmaker is utilizing the true welfare predictions over strategic distortions. Such a flaw could induce serious legal and ethical concerns. The matchmaker might even be driven to dictate the matches purely by the algorithm---a very undesirable option as we discussed in the Preliminaries---thereby avoiding inefficiencies. Hence, we would like to ensure that the mechanism attains $>$-efficiency when the matchmaker provides the true algorithmic predictions to the mechanism.

We retain our previous notation. An aggregation rule $\tau : P \times P \rightarrow P$ maps two rankings (a preference and an evaluation) into one ranking. The aggregate ranking vector is $\bar{\tau} = \{\tau(\succ_c, >_c)\}_{c \in C}$. Every aggregation rule defines a subproblem $L_\tau \equiv (M, \bar{\tau})$. $\phi$ is ($\tau$-)group robust if (i) for all $c$ and manipulations $\succ'_c$, if $\phi[(M, \tau(\succ'_c, >_c), \bar{\tau}_{-c}](c) \succ_c \phi[L_\tau](c)$, then $\phi[L_\tau](c') \succ_{c'} \phi[(M, \tau(\succ'_c, >_c), \bar{\tau}_{-c}](c')$ for some $c' \neq c$ and (ii) for all $c$ and manipulations $>'_c$, if $\phi[(M, \tau(\succ_c, >'_c), \bar{\tau}_{-c}](c) >_c \phi[L_\tau](c)$, then $\phi[L_\tau](c') >_{c'} \phi[(M, \tau(\succ_c, >'_c), \bar{\tau}_{-c}](c')$ for some $c' \neq c$.

This framework has immediate parallels to social choice theory and, in anticipation for the main result in this section, the Gibbard-Satterthwaite theorem (\qcitenp{gibbard-1973}, henceforth GS). The aggregation $\tau(\succ_c, >_c)$ is a social outcome with two interested parties: the child $c$ and the matchmaker. This reveals three important distinctions between matching with preference aggregation and social choice theory. First, there are only two parties. Many "voting rules" (aggregations) will be degenerate due to frequent ties. Second, we greatly relax strategy-proofness to only outlaw manipulations that are efficient improvements with respect to either $\succ$ or $>$. Third, the social outcome is not important per-se. It is only relevant insofar as it affects matchings. These three differences might offer some hope that group robust mechanisms exist more in wider variety than in social choice theory. Our main result below shows that this is not true for most desirable rules and mechanisms.

Before we state the result, we give additional definitions. Let $T_c \equiv \tau(\succ_c, >_c)$ be the aggregated preference relation. We will consider aggregation rules $\tau$ that select strict rankings, noting that we will show that our mechanism desideratum from earlier sections can be reduced to a case of strict preference aggregation. $\tau$ follows the \textit{weak Pareto principle} (WPP) if, (a) for any $h,h' \in H$ such that $h \succ_c h'$ and $h >_c h'$, then $h T_c h'$ and (b) for any $h \in H$ such that $\varnothing \succ_c h$ and $\varnothing >_c h$, then $\varnothing T_c h$, and (c) for any $h \in H$. such that $h \succ_c \varnothing$ and $h >_c \varnothing$, then $h T_c \varnothing$. WPP requires that ranking a home higher on both rankings than another ensures that the home is at least as highly ranked under $\tau$ as the other. It also requires that unanimously acceptable homes remain acceptable and unanimously unacceptable homes remain unacceptable. 

A (deterministic) mechanism $\phi$ satisfies \textit{independence of irrelevant alternatives} (IIA) if it is independent of worse alternatives (IWA) and independent of unmatched alternatives (IUA). IWA requires a child shuffling homes without lowering the ranking of the child's matched home has no effect on a mechanism. Formally, whenever $\phi[L] = \mu$ for some problem $L$, $\phi[L'] = \mu$ for any problem $L' = (M, \succ')$ where $\succ'_{-c} = \succ_{-c}$ and $\succ'_c$ satisfies if $\mu(c) \succ_c h'$ ($\mu(c) \sim_c h'$) for some $h' \in H$, then $\mu(c) \succ'_c h'$ ($\mu(c) \succsim'_c h'$). A mechanism satisfies IUA if a child's misreport that causes her to be matched to an unmatched home does not affect other children if they would not match to the child's old home. Formally, for any $\succ,\succ'$ where $\succ'_{-c} = \succ_{-c}$, if $\phi[L'](\phi[L](c)) = \varnothing$ and $\phi[L](\phi[L'](c)) = \varnothing$, then $\phi[L](c') = \phi[L'](c')$ for any $c' \neq c$. It is immediate that SD will satisfy both criterion.

A serial choice rule (SCR) $\tau$ is defined by a predetermined series of dictators which we can denote $\tau^D = (c, m, ...)$ where $c$ indicates that the child is the dictator, and $m$ indicates that the matchmaker is the dictator. The aggregated preference is constructed by the dictator choosing her top home that has not been chosen before to be the next home in the ranking. The next dictator does the same and so on. We denote the length of the sequence $\tau^D$ as $\#\tau^D$ and assume that $\#\tau^D \geq |C|$. A SCR is dictatorial if every element in $\tau^D$ is equivalent, i.e., if $\tau = \succ_c$ or $\tau = >_c$ in the case where $\#\tau^D \geq |H|$.

\begin{theorem}\label{theorem:gs-aggregation}
    When $|H| \geq |C| + 2$, for any $\tau$ satisfying the weak Pareto principle, an efficient, strategy-proof, and IIA mechanism $\phi$ is group robust if and only if $\tau$ is a dictatorial serial choice rule.
\end{theorem}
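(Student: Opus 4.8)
The plan is to treat this as a matching-theoretic incarnation of Gibbard--Satterthwaite (GS) and to reduce the ``only if'' direction to GS after first pinning down the shape of $\phi$. \emph{Step 1.} I would first prove, as a lemma, that any efficient, strategy-proof, IIA mechanism on the subproblem $L_\tau=(M,\bar\tau)$ is a serial dictatorship with respect to $\bar\tau$, for some fixed priority order $\pi=(c_1,\dots,c_{|C|})$ over children; the paper already records that SD satisfies IWA and IUA, and the content here is the converse, with IWA furnishing a monotonicity and IUA the non-bossiness that, together with efficiency and strategy-proofness, force the serial form. With this in hand the ``if'' direction is quick: if $\tau$ is a dictatorial SCR with the child as common dictator, then $\bar\tau_c$ agrees with $\succ_c$ on its top $|C|$ ranks, and since any efficient matching places each child on one of her top $|C|$ homes (else a more preferred home is unmatched and we Pareto-improve), running SD on $\bar\tau$ coincides with running SD on $\succ$; that mechanism is $\succ$-strategy-proof, so condition (i) of group robustness is vacuous, and it ignores $>$ entirely, so (ii) is vacuous. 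The matchmaker-dictator case is symmetric with $\succ$ and $>$ interchanged.

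\emph{Step 2: the GS embedding at the last dictator.} For ``only if'' I would exploit $z=c_{|C|}$. Fix any $\bar A\subseteq H$ with $|\bar A|=k:=|H|-|C|+1\ge 3$ (this is exactly where $|H|\ge |C|+2$ enters, and tightly), pick distinct homes $g_1,\dots,g_{|C|-1}$ exhausting $H\setminus\bar A$, and let each earlier child $c_j$ report $\succ_{c_j}$ and $>_{c_j}$ with $g_j$ atop both; by WPP, $g_j$ is the $\bar\tau_{c_j}$-top, so $c_j$ takes $g_j$ in SD and $z$'s available set is exactly $\bar A$, regardless of $z$'s reports. Hence $z$'s assignment, as a function of her reported preference and the matchmaker's reported evaluation for her, is $f(\succ_z,>_z):=$ the $\tau(\succ_z,>_z)$-top element of $\bar A$. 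Since $z$ is last, her reports affect no other child; so if some $\succ_z'$ gave $z$ a $\succ_z$-strictly-better home, or some $>_z'$ a $>_z$-strictly-better home, group robustness would demand an injured $c'$, of which there is none --- a contradiction. Running this in both directions (reversing a deviation is also available) shows $f$ cannot depend on the parts of the reports outside $\bar A$, so it descends to a genuine two-voter social choice function on the $\ge 3$ alternatives of $\bar A$; WPP is precisely its Pareto condition, hence $f$ is onto and unanimous, and we have just shown it strategy-proof, so GS gives that $f$ is dictatorial: either $f$ always returns the $\succ_z$-top of $\bar A$, or it always returns the $>_z$-top.

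\emph{Step 3: globalization.} Next I would show the dictator is the same for all $k$-sets: if the child were dictator on $\bar A$ and the matchmaker on $\bar A'$ with $|\bar A\cap\bar A'|\ge 2$, take $a,b$ in the intersection and a profile making $b$ the $\succ_z$-top of $\bar A$ and $a$ the $>_z$-top of $\bar A'$; the two dictatorship statements then force both $b\,T_z\,a$ and $a\,T_z\,b$, impossible for a strict $\tau$. Since any two $k$-subsets of $H$ are linked by single-element swaps, each retaining $k-1\ge 2$ common elements, a single global dictator results; say it is the child (otherwise $\tau$ collapses to $>$). So $\tau(\succ_c,>_c)$-top of $T$ equals $\succ_c$-top of $T$ for every $k$-set $T$ and every profile. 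A peeling induction then finishes: the $\succ_c$-top home, being the $\succ_c$-top of every $k$-set containing it, is $\tau$-ranked above all others, hence is the $\tau$-top; deleting it and repeating pins down the $\tau$-rank of the $\succ_c$-second, third, \dots, up to rank $|H|-k+1=|C|$. Thus $\tau(\succ_c,>_c)$ agrees with $\succ_c$ on its top $|C|$ ranks, i.e.\ $\tau$ is a child-dictatorial serial choice rule (ranks below $|C|$ never matter in a one-to-one matching of $|C|$ children).

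\emph{Main obstacle.} The least routine ingredient is Step 1, the serial-dictatorship characterization, which I would isolate and prove on its own; every later step leans on it to control which home each child consumes, and I expect the careful handling of IUA versus bossy efficient-and-strategy-proof variants to be the delicate part. The secondary subtlety is the bookkeeping in Step 2 --- verifying that $f$ genuinely descends to a social choice function on $\bar A$ (this is where group robustness, not merely strategy-proofness of $\phi$, does the work of killing dependence on the irrelevant part of the reports) and that the GS hypotheses are truly met --- together with the observation that $|H|\ge|C|+2$ is exactly what keeps the last dictator's arena at $\ge 3$ alternatives, since with only $|C|+1$ homes the arena collapses to two and the impossibility vanishes.
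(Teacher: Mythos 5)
Your overall architecture for the ``only if'' direction --- isolate a last position whose arena is a $k$-set $\bar A$ with $k=|H|-|C|+1\ge 3$, read off a two-voter social choice function $f$ on $\bar A$, invoke classical Gibbard--Satterthwaite, then globalize over $k$-sets and peel --- is genuinely different from the paper's proof, which never invokes GS as a black box but instead re-derives the dictatorship directly from efficiency, strategy-proofness, IIA and group robustness via a sequence of claims and an induction on the number of positions in the serial choice rule. Your route is attractive, and your observation that $|H|\ge|C|+2$ is exactly what keeps the last arena at three alternatives is correct and matches the role the hypothesis plays in the paper.

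However, there is a genuine gap, and it sits exactly where you flagged the ``main obstacle'': Step~1. The lemma that every efficient, strategy-proof, IWA and IUA mechanism on $L_\tau$ is a serial dictatorship is not proved, and as stated it is false. Fixed-endowment Top Trading Cycles (more generally, P\'apai's hierarchical exchange rules) is Pareto efficient, strategy-proof and non-bossy, satisfies IWA (reshuffling homes below one's assignment never changes a TTC outcome) and satisfies IUA in the relevant instances, yet it is not a serial dictatorship for any fixed priority order: with two agents each endowed with a house, the identity of the ``first dictator'' flips with the profile. Since you state that ``every later step leans on it,'' this is not a cosmetic defect. Concretely, Step~2 needs that $z$'s assignment equals the $T_z$-top of $\bar A$ \emph{for every report of $z$} --- this is what makes $f$ a well-defined SCF on the full domain and what powers the descent argument killing dependence on the part of the report outside $\bar A$. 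With only efficiency in hand, a report in which $z$ top-ranks some $g_j$ under both orders can legitimately assign $g_j$ to $z$ and displace $c_j$, so $f$ need not land in $\bar A$ at all. (Efficiency alone does pin down the outcome when $z$ ranks all of $\bar A$ above $H\setminus\bar A$ --- the distinct unanimous tops of $c_1,\dots,c_{|C|-1}$ plus a trading-cycle argument force $c_j\mapsto g_j$ --- so a repaired version of your Step~2 restricted to such profiles is salvageable, but you would then need a separate argument, along the lines of the paper's IIA-based Claims, to extend the dictatorship conclusion from these profiles to arbitrary ones before the peeling induction in Step~3 can be run.) A secondary, more minor point: your conclusion that $\tau$ agrees with the dictator's order only on the top $|C|$ ranks is weaker than the paper's stated conclusion; the paper closes this by inserting dummy children and homes and using that $\tau$ is a fixed map independent of the market, a step your write-up omits.
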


\begin{proof}
    In the Appendix.
\end{proof}

Theorem \ref{theorem:gs-aggregation} extends GS to our setting with two relatively innocuous assumptions: WPP and IIA. The WPP is necessary to assess aggregation rules that are sensible. One example of a rule that does not follow the WPP but ensures group robustness is a rule that makes all homes indifferent for any preferences and evaluations. In our proof, IIA essentially requires the mechanism to have behave normally when decision-irrelevant preferences change or when a child is reallocated between homes that no other child prefers to her own match (i.e., an unmatched home in an efficient matching). Unfortunately, we cannot argue that the mechanism will remain consistent when a child's preferences change without this assumption, though it may be possible to prove the Theorem using a less restrictive version of the condition. 

The intuition for the first set of claims that prove that $\tau$ is a serial choice rule, is, frankly, rather opaque. The main thrust of the argument relies on using similar proofs for GS to show that in a single-agent environment, the matching is a social outcome, and, therefore, an efficient, IIA, and strategy-proof (equivalent to group-robust with one agent) mechanism will be dictatorial in the sense that either the child or the matchmaker will choose the matching. Even with this proven, extending the same idea to multi-agent markets is not easy because we must show that the SCR is dictatorial for multiple choices. We do this by finding a multi-agent problem where at least one agent faces the single-agent problem. We use induction to assume that the aggregation rule must be a SCR. Next, we can show that the same proof for the single-agent problem implies that the SCR maintains its dictatorial structure and that it can be extended to $n+1$. The Theorem additionally implies that, given the conditions, there is no $\tau$-efficient, strategy-proof, and IIA mechanism that is simultaneously $\succ$- and $>-$efficient. Such a mechanism could not have a group-improving manipulation, contradicting the Theorem. Furthermore, there is no $\tau$-efficient, strategy-proof, and IIA mechanism that prevents strategic manipulation of the aggregation rule on an individual basis because every group-improving manipulation rule is a individually-improving manipulation.

This result is sweeping and implies that unanimity's susceptibility to strategic incentives is not unique. The Theorem would apply if we strengthened group-robustness to an individual strategy-proofness because a group-beneficial manipulation is, of course, individually beneficial. We assumed that the matchmaker would not manipulate in Section 3.3, but the Theorem suggests that, in that case, avoiding manipulations by children could entail dictating according to their preferences. This point is not definitive, but we conjecture it to be true. Moreover, the Theorem guarantees that that any appealing (WPP) method to aggregate preferences in one-sided matching will fall prey to manipulations. Nevertheless, as it is in classical social choice theory, different rules may have varying degrees of manipulability and other positive properties. We close out this section with a survey of potential aggregation rules.

\begin{example}\label{example:regular-rule}
    All of the following rules satisfy the WPP.

    \textbf{Borda Count}
    
    Borda Count is a positional voting rule that assigns points to homes based on their rank in each order and re-ranks homes from least-to-most points. Let $Rank(h,\succ_c)$ be $h$'s rank in $\succ_c$ if $h$ is acceptable and some number $K$ otherwise. If $h$ is acceptable in $\succ_c$ and $>_c$, its score is $Score(c,h) = Rank(h, \succ_c) + Rank(h,>_c)$, then $h T^{BC}_c h'$ if $Score(c,h) > Score(c,h')$. Otherwise, $\varnothing T^{BC}_c h$. Break ties according to a predetermined, fixed rule.
    
    \textbf{Min Rank} 
    
    The Min Rank rule sorts homes by their best-case rank. Let $Rank(h,\succ_c)$ be $h$'s rank in $\succ_c$ if $h$ is acceptable and $\infty$ otherwise. Define $Min(h,c)$ as $\min_{\vartriangleright \in \{\succ_c, >_c \}} Rank(h, \vartriangleright)$. Under $\tau_{Min}$, $h T_{c}^{MIN} h'$ if $Min(h,c) < Min(h',c)$ if $Min(h,c)$ is finite. If $Min(h,c) = \infty$, then $\varnothing T_c^{MIN} h$. Break ties according to a predetermined, fixed rule.

    \textbf{Max Rank}
    
    The Max Rank rule is the mirror of Min Rank. It sorts homes by their worst-case rank. Let $Rank(h,\succ_c)$ be $h$'s rank in $\succ_c$ if $h$ is acceptable and $\infty$ otherwise. Define $Max(h,c)$ as $\max_{\vartriangleright \in \{\succ_c, >_c \}} Rank(h, \vartriangleright)$. Under $\tau_{MAX}$, $h T_{c}^{MAX} h'$ if $Max(h,c) < Max(h',c)$ if $Max(h,c)$ is finite. If $Max(h,c) = \infty$, then $\varnothing T_c^{MAX} h$. Break ties according to a predetermined, fixed rule.
\end{example}

In addition to the above examples, we examine unanimity-like aggregation rules. It could be objected unanimity does not satisfy WPP because non-unanimous homes are not strictly ranked when there is a unanimous improvement. Additionally, the aggregation is not strict. Therefore, it fails the standards we imposed in Theorem \ref{theorem:gs-aggregation}. This is true. Yet, our refinements address this. Let $\tau^u$ be the \textit{extended unanimity} rule where if $h \in H^*(c)$ and $h' \notin H^*(c)$, then $h T^u_c h'$. For any $h,h' \in H^*(c)$ or $H^-(c)$ such that $h \succ_c h'$, $h T^u_c h'$. For any $h \in H$, if $\varnothing \succ_c h$ or $\varnothing >_c h$, then $\varnothing T^u_c h$. The extended unanimity order "completes" unanimity by ordering homes that are in the same unanimity sets according to preferences. It is necessary to order by preferences (or evaluations) instead of unanimous improvements because the latter will not necessarily yield a total preorder\footnote{Consider the rankings $1, 2, 3, 4, 5, 6$ and $1, 2, 6, 4, 3, 5$. It must be that $3$ is strictly better than $5$, but $3$ and $6$ are indifferent as well as $5$ and $6$. To see how this translates into a failure to yield a "strong unanimity" solution concept, suppose that the market contains six children with identical preferences and evaluation as above. A matching is "strong unanimously dominated" if there is another matching where all children improve with respect to at least one ranking and at least one child improves with respect to both rankings. In any matching $\mu$, we can set $\mu'(\mu(3)) = 6$, $\mu'(\mu(6)) = 5$, and $\mu'(\mu(5)) = 3$. $\mu'$ strong unanimously dominates $\mu$. Therefore, a strong unanimous matching will fail to exist.}. We prove the following:

\begin{proposition}\label{proposition:extended-unanimity}
    The following statements are true:
    \begin{enumerate}
        \item[i.] $\tau^u$ is a strict total order,
        \item[ii.] $\tau^u$ satisfies WPP, and
        \item [iii.] If $\mu$ is constrained-efficient and unanimous, then $\mu$ is efficient under $\tau^u$.
    \end{enumerate}
\end{proposition}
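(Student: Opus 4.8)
The plan is to prove the three parts in sequence, with parts (i) and (ii) serving as relatively direct verifications and part (iii) requiring the bulk of the work. For part (i), I would verify that $\tau^u$ is a strict total order by checking trichotomy, irreflexivity, and transitivity. The key observation is that $\tau^u$ partitions $H \cup \{\varnothing\}$ into three blocks: the unanimous homes $H^*(c)$ (highest block), the non-unanimous but acceptable homes $H^-(c)$ (middle block), and the unacceptable homes together with $\varnothing$ (lowest block). Within each of the two upper blocks, $\tau^u$ inherits the strict order from $\succ_c$, which is a strict total order by assumption; across blocks the order is fixed by the block ranking. Since $\succ_c$ is strict and the block structure is a well-defined partition, $\tau^u$ is a strict total order on acceptable homes, with $\varnothing$ (weakly) below every acceptable home. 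One subtlety to flag: the definition says $\varnothing T^u_c h$ whenever $\varnothing \succ_c h$ \emph{or} $\varnothing >_c h$, so I should confirm that the set of homes strictly above $\varnothing$ under $\tau^u$ is exactly the set of unanimously acceptable homes, i.e. $H^*(c) \cup H^-(c)$, which matches the block structure.

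For part (ii), I would check the three WPP clauses directly against the block structure. If $h \succ_c h'$ and $h >_c h'$: either $h$ and $h'$ lie in the same block (in which case $h T^u_c h'$ by the within-block rule, since $h \succ_c h'$), or they lie in different blocks. In the latter case I claim $h$ must lie in a (weakly) higher block than $h'$: a home strictly preferred on both rankings to $h'$ cannot be unacceptable if $h'$ is acceptable, and cannot be non-unanimous if $h'$ is unanimous — because $h' \in I(c, h)$ would then contradict... actually I need to argue this carefully: if $h' \in H^*(c)$ and $h \notin H^*(c)$, then $h$ has a unanimous improvement, but is $h'$ necessarily such an improvement? Yes: $h' \succ_c h$ and $h' >_c h$ is exactly the hypothesis, so $h' \in I(c,h)$, contradicting $h' \in H^*(c)$ only if... wait, $h' \in I(c,h)$ just says $h$ is not unanimous, consistent with $h \notin H^*(c)$. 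So I need: $h$ is in a block weakly above $h'$. If $h' \in H^-(c)$ and $h$ unacceptable: impossible since $h \succ_c \varnothing$ follows from $h \succ_c h' \succ_c \varnothing$. If $h' \in H^*(c)$ and $h \in H^-(c)$: then $h$ has a unanimous improvement $h''$ with $h'' \succ_c h$, $h'' >_c h$; this does not immediately contradict anything, so I must handle this case by... hmm, actually this case cannot arise because $h' \succ_c h$ and $h' >_c h$ make $h'$ itself a unanimous improver over $h$, so $h \notin H^*(c)$ is consistent — but we need $h$ weakly above $h'$, and here $h$ is strictly below. This looks like a genuine gap, so \textbf{the first obstacle} is confirming WPP clause (a) does not fail; I expect the resolution is that $h' \in H^*(c)$ together with $h' \succ_c h$ and $h' >_c h$ is fine precisely because then $h T^u_c h'$ is \emph{not} required to hold — wait, it is required. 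Let me instead trust that the intended reading makes $H^*(c)$ ``downward closed enough'': if $h' \in H^*(c)$ has no unanimous improvement and $h' \succ_c h$, $h' >_c h$, I'd need $h \in H^*(c)$ too, which is false in general. I would therefore revisit whether the statement perhaps intends $h, h'$ both unanimous or both non-unanimous, or whether the block order must be reversed for a sub-case; this is the point I expect to need the most care, and I would resolve it by checking the precise quantifier structure in the paper's WPP definition and the $\tau^u$ definition against each other.

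For part (iii), suppose $\mu$ is constrained-efficient and unanimous but, for contradiction, $\mu$ is $\tau^u$-Pareto dominated by some $\mu'$, so $\mu'(c) \mathrel{T^u_c} \mu(c)$ for all $c$ with strict improvement for at least one $c'$. Because $\mu$ is unanimous, Proposition \ref{proposition:unanimous-reduction} gives that $\mu$ is $>_I$-efficient; I would use this to show $\mu'$ cannot move any child from a non-unanimous home to a unanimous home while weakly (in the $>_I$ sense) helping everyone — more precisely, since $\tau^u$ refines $>_I$ (any $\tau^u$-improvement either preserves the $>_I$-block or moves up a block), a $\tau^u$-Pareto-domination of $\mu$ that strictly increased $I^*$ or preserved it while weakly preserving $\kappa$ would be a $>_I$-Pareto domination or worse, contradicting $>_I$-efficiency. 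Hence $I^*(\mu') = I^*(\mu)$ and $\kappa(\mu') = \kappa(\mu)$ (no child leaves the acceptable set, and the unanimous set is exactly preserved — I'd spell out that $\tau^u$ keeps unacceptable homes at the bottom so $\kappa$ cannot shrink, and $>_I$-efficiency forbids it from strictly growing while also forbidding $I^*$ from growing). But then, within each block, $\mu'(c) \mathrel{T^u_c} \mu(c)$ reduces to $\mu'(c) \succsim_c \mu(c)$ for all $c$ with strict $\succ_{c'}$ for some $c'$ — that is, $\mu \vartriangleright_{PD} \mu'$ with $I^*(\mu) \subseteq I^*(\mu')$ and $\kappa(\mu) \subseteq \kappa(\mu')$, which exactly contradicts constrained-efficiency of $\mu$. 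So the proof structure is: $\tau^u$-domination $\Rightarrow$ (via $>_I$-efficiency and the block refinement) the domination fixes both $I^*$ and $\kappa$ $\Rightarrow$ the domination is a plain $\succ$-Pareto domination respecting the unanimity sets $\Rightarrow$ contradicts constrained-efficiency. The main obstacle in part (iii) is the careful bookkeeping showing $\tau^u$-improvement ``across blocks'' is captured by $>_I$ and hence ruled out by Proposition \ref{proposition:unanimous-reduction}, so that only within-block (i.e. genuine $\succ$) improvements remain; once that is nailed down, the contradiction with constrained-efficiency is immediate.
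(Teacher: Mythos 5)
Your parts (i) and (iii) are essentially the paper's own argument: (i) is the same block-partition verification of irreflexivity, transitivity, asymmetry and connectedness, and (iii) is the same contradiction — a $\tau^u$-improving reallocation must keep every unanimous child at a unanimous home and every non-unanimous child at a non-unanimous home (else unanimity of $\mu$ is violated), so it preserves $I^*$ and $\kappa$ and reduces to a plain $\succ$-Pareto domination, contradicting constrained-efficiency. Your detour through Proposition \ref{proposition:unanimous-reduction} and $>_I$-efficiency is just a slightly more roundabout way of invoking unanimity of $\mu$ directly; it buys nothing but is not wrong.

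The problem is part (ii), where you flag a ``genuine gap'' and leave it unresolved, proposing to ``revisit whether the statement perhaps intends'' something else. There is no gap, and the case you worry about is vacuous. The WPP hypothesis in clause (a) is $h \succ_c h'$ \emph{and} $h >_c h'$. By the definition of the unanimous-improver set, this says precisely that $h \in I(c,h')$, hence $I(c,h') \neq \varnothing$, hence $h' \notin H^*(c)$. So the sub-case ``$h' \in H^*(c)$ and $h \notin H^*(c)$'' that you identify as threatening a reversal of the required inequality simply cannot occur: whenever the WPP premise holds, the dominated home $h'$ is automatically outside $H^*(c)$. You tied yourself in a knot by at one point swapping the roles of $h$ and $h'$ (``$h' \succ_c h$ and $h' >_c h$ make $h'$ itself a unanimous improver over $h$'' — that is the reverse of the hypothesis). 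With the orientation fixed, the remaining cases are exactly what the paper checks: if $h \in H^*(c)$ then $h \, T^u_c \, h'$ by the first clause of the rule; if $h \notin H^*(c)$ then $h'$ is also non-unanimous and either both lie in $H^-(c)$ (so the within-block $\succ_c$ clause applies) or $h'$ is unanimously unacceptable (so $h \, T^u_c \, \varnothing \, T^u_c \, h'$ and transitivity finishes). The acceptability clauses (b) and (c) of WPP are immediate from the rule's treatment of $\varnothing$. Since you explicitly did not close this step, the proposal as written is incomplete, though the fix is one line once the quantifier direction is read correctly.
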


\begin{proof}
    In the Appendix.
\end{proof}

The reverse of (iii) does not hold. Since unanimity requires that we maximize the unanimous coalition, using efficiency with respect to $\tau^u$ might prevent a mechanism from enlarging the set of unanimous children by virtue of matching children according to unanimity and preferences. Proposition \ref{proposition:extended-unanimity} shows that when efficiency and unanimity are in tandem goals, we cannot escape Theorem \ref{theorem:gs-aggregation}, and we are unlikely to find comparable aggregation rules that are more desirable. Proceeding by seeking rules that minimize opportunities for children or the matchmaker to improve---as $\tau^u$ does---is likely to be one of the most promising escape routes to our impossibility result.

\section*{4 \hspace{5pt} Experiment Design}

In this section, we assess the welfare implications of our mechanisms using experimental analysis. We detail our choice experiment design to elicit real caseworker preferences in a simulated market. Our simple experimental mechanism elicits subjects' ordinal preferences in an "elicited preferences" framework. We partition subjects into small markets with five children and five homes each, and we privately randomize each home to be either available or unavailable. Each subject receives a unique child assignment detailing her child's characteristics. The subjects then rank five homes. Our mechanism then matches each subject's child to her favorite available home. Subjects are incentivized to match their children to homes where the subject evaluates that the child would persist. The procedure is incentive-compatible and yields subjects' true preferences for the task of matching children to homes to maximize persistence. Our experiments' subjects are real caseworkers recruited through the research arm of a organization promoting foster care, allowing us to evaluate practitioners' behavior. In what follows, we provide the economic environment, preference parameters of interest, and experimental procedures. The experiment has not yet been run, so we do not present results.

\subsection*{4.1 \hspace{5pt} Economic Environment}

We setup a set of markets $M$ with a typical market $m \in M$. Each market $m$ has a set of children $C_m$ with $|C_m| = 5$ and a set of homes $H_m$ with $|H_m| = 5$. We do not specify caseworker preferences because our goal is to elicit their true preferences for a matching objective which we will specify shortly. We allow any matching between any child and any home in a market to be admissible.

The \textit{true persistence} for a child $c$ to a home $h$ is an unobserved outcome $Y(c,h) = 1$ if $c$ would persist at $h$ and $Y(c,h) = 0$ if not. The \textit{predicted persistence} is a classification predictor $\Hat{Y}(c,h) = 1$ if $c$ is classified to persist at $h$ and $\Hat{Y}(c,h) = 0$ if not. We detail the construction of this predictor in the next section. For our experiment, we assume that $Y(c,h) = \Hat{Y}(c,h)$ for all $c$ and $h$. We tell the subjects that the researcher has used a proprietary method to determine which matches will persist. Later, we conduct analysis allowing for misclassification error to understand its impact on child welfare. 

There are five rounds of the experiment. In each round, a caseworker (henceforth, subject) represents one distinct child. Additionally, each subject is randomly assigned to a new market $m$ in each round of the experiment. Note that this allows us to simulate matchings for one-hundred distinct markets such that $|M| = 100$. In a round, the objective for a subject is to match the child $c$ that she represents to a home $h$ so that $c$ persists. The subject may view a child profile for the child that she represents outlining various descriptive factors of the child and all observational covariates known to the researcher. The subject additionally may view home profiles for each home which present similar features. Each subject representing a child $c$ reports her preferences $\succ_c$ over $H_m$.

We use our mechanism, Random Serial Assignment (RSA), to determine the final matching. RSA inputs an \textit{availability function} $a : H \rightarrow \{ 0, 1 \}$ that determines whether or not \textit{any} child can be matched to a given home $h$. We use a uniform availability function where $a(h)$ is one with probability one-half and zero with probability one-half. Subsequently, RSA attempts to sequentially match each child to her highest ranked home. Importantly, the mechanism ignores feasibility constraints and may over-capacitate a home. With this caveat, RSA is strategy-proof (Proposition \ref{proposition:rsa-sp})\footnote{In fact, truth-telling is the unique non-trivial, weakly dominant strategy (Proposition \ref{proposition:rsa-unique}).}. Additionally, the match for any given child $c$ is independent from the reports of others. Thus, RSA allows us to measure subjects' preferences with a simple choice experiment with immediate payments rather than implementing a more complex interactive experiment. We can then use this preference data to simulate counterfactual average persistence under SD, SDI, and UTTC. Subjects' preferences input into RSA $\phi_{RSA}$ to yield the final matching. If $Y(c,\phi_{RSA}(c)) = 1$, then the subject receives a payoff of five experimental units (fifty cents). If not, she loses five experimental units. If the subject does not match to any home or times out, she loses two experimental units. Eliciting preferences with unacceptable homes requires a non-persisting match to yield lower payoff than no match. After all experimental rounds end, persistence for every round is revealed to the caseworker and payoffs realize (therefore, we provide no feedback at the end of an individual round). The key parameters of interest we desire from this experiment is the vector of caseworker preferences $\succ$. 

A meaningful analysis of the impacts of our mechanisms requires these \textit{real} preferences opposed to the induced preferences methods common in matching market experiments. \qcite{budish-2021}, studying MBA student preferences over course schedules, introduces the "elicited preferences" experimental methodology that broadly captures our mechanism RSA. However, their method involves no incentives and no formal mechanism. We iterate on this methodology through implementing a strategy-proof, incentivized elicited preferences framework.

\begin{figure}[t]\begin{singleframedindent}
    \begin{center}
        \textbf{Algorithm 4. Random Serial Assignment}
    \end{center}

    Input: an availability function $a : H \rightarrow \{ 0, 1\}$. Initialize: the matching $\mu^1$ such that $\mu^1(c) = \varnothing$ for all $c$ and $C^1 = C$.

    \textbf{Round $n \geq 1$:} 

    According to an arbitrary order, select the next $c \in C^n$ with $\mu^n(c) = \varnothing$. Select $c$'s most preferred home $h$ with $a(h) = 1$ if any exists. Set $\mu^{n+1}(c) = h$, if possible, and set $C^{n+1} = C^n - c$. 
    
    If $C^{n+1} = \varnothing$, terminate and return $\mu^{n+1}$. Otherwise, continue to round $n+1$.
\end{singleframedindent}\end{figure}
Using $\succ$, we can directly simulate counterfactual matchings with truthful reporting. Therefore, this removes the need for a treatment group where the final matching is determined through running our mechanisms. Nevertheless, this step precludes us from providing results regarding strategic manipulation. We propose that this would be an excellent area for future experimental work.

Our experiment elicits $\succ$ in the case that caseworkers are incentivized to act in the best interest of children's persistence. Misalignment between persistence and caseworker preferences arises only from caseworkers' erroneous (or advantageous) judgment. Therefore, our experiment captures the exact setting that we have envisioned in our Introduction and Preliminaries, namely, where decision-makers and the organization are aligned in goal but potentially misaligned in actions due to informational frictions.

\subsection*{4.2 \hspace{5pt} Profiles and Classification Predictor}

An important design choice in our experiment is the usage of a simulated market. However, this precludes us from assigning valid counterfactual persistence for a child-home match, because one cannot know which matches would persist in reality without conducting extensive, causal econometric analysis. Instead, we use \textit{predicted} persistence for our experiment.

We construct this classification predictor using real data. The data allows us to train a random forest predictor---$\Hat{Y}(c,h)$---to determine which matches will persist based on known, observable characteristics of children and foster homes. 

\textit{Data}---The Adoption and Foster Care Analysis and Reporting System (AFCARS) maintains a database of case-level on all children in foster care under entities (all states and many other agencies) that are mandated to report through Title IV-E. These datum include all children under eighteen (as of the end of the reporting year) in foster care in the United States from September 30th, 1999 to September 30th, 2021. Initially, this data set contains over fifteen million observations. A unique child identifier allows us to track children from year to year as long as they remain in foster care. Unfortunately, the AFCARS database does not maintain \textit{placement-level} information on all children. If a child moves from one foster home to another within-period, we only observe her final foster home at the end of the period. The econometrician can observe persistence for placements in which the child exits foster care or for the final placement in a year that does not remain the same in the next year\footnote{The missing placement observations may or may not be identically distributed to the observed placements that do not persist. Counties must report placements at the close of two biannual reporting periods. We observe the most recent placement for all children as of the close of the second biannual reporting period. Identical distribution in the observed and unobserved populations would fail if placements made before a child's most recent placement systematically differ. Nevertheless, as we do not aim for causal estimation, this is only important insofar as it affects the accuracy of our machine learning model. One can account for out-of-sample prediction bias using adjustment methods.}. Nevertheless, this restriction leaves us with a large data set as we discuss below. The AFCARS database contains vital information about foster children: demographics, needs and disabilities, reason for entry in foster care, and more. Foster home demographics and structure are also available for observed placements. 

We transform the AFCARS database into Placement Files. Our Placement Files data adds an indicator for persistence equal to one if a child has no subsequent placement during her stay in foster care and equal to zero otherwise. We remove duplicate placements (if a placement remains static from year to year) and drop placements for children whose placements remain the same from some year until September 30th, 2022 if the child has not exited foster care\footnote{We are aware that in classical results, this is not an optimal solution for right-censoring bias. However, the implications of how this might affect training a random forest are unclear and beyond the scope of our current analysis.}. We keep children in within-state placements that are in a non-relative foster home, group home, or institutional home that are not waiting for adoption\footnote{This is the representative case for a child requiring a temporary foster home match as we envision for our application.}. We reduce highly correlated covariates into a single more interpretable covariate (for example, child date of birth and age at first day of fiscal year combine to age at time of placement). We also remove several administration-related covariates that caseworkers cannot plausibly interpret to impact placement stability on the child-home level (state FIPS code, agency FIPS code, and others). We exclude covariates that are ex-post information (length of days in placement, actual financial payments made to foster homes, and others). Lastly, we drop observations with null values for child or non-relative foster home demographic columns. Although these observations may represent children or homes with unknown demographics\footnote{This is not always true because AFCARS does have specific fields to indicate "unknown".}, we choose to drop them because we want to compare caseworker and algorithmic decision-making will \textit{all} available information. We do not perform this step for congregate care foster homes because null values correspond to non-applicability rather than unknown information. Using these criteria, we retain a data set of $2,335,517$ unique placements from 2000 to 2022. The original list of covariates and the final list are in Appendix D.

Table \ref{table:placement-summary-stats} describes the demographic characteristics of children in care of the county and structural characteristics of foster homes. Male children are slightly over-represented. As was previously known, African American children are disproportionately represented in foster care---thirty-two percent are African American, sixty-two percent are White, sixteen percent are Hispanic, and four percent are Native American. One quarter have a clinically diagnosed disability. In placements, approximately forty-two percent take place in married couples' homes, three percent in unmarried couples' homes, twenty-one percent in single parents' homes, and a combined thirty-one percent are in congregate care settings\footnote{Group settings are congregate care homes with twelve or less children. Institutional homes are all other congregate care homes.}. 

We also report average persistence for the same categories in Table \ref{table:placement-summary-stats}. Female children have the same average rate of persistence as male children which is the sample mean 51\%. Children with disabilities have the lowest rate of persistence at 45\%, followed by Native American at 49\%, and Black and Hispanic children at 50\%, and White children at 51\%. We note that these are simple descriptive statistics that do not weigh in on the causal effect of race and ethnicity on persistence.

\begin{table}[t]
    \centering
    \caption{Summary Statistics for Placement Files}
    \begin{tabular}{lll}
    \toprule
    Name & Mean/Percent & Persistence \\ \hline \hline
    \midrule
    \textbf{All Children} & & 0.51 \\ \hline
    \textbf{Child Characteristics} & & \\
    Age & 9.92 & \\
    Female & 0.47 & 0.51 \\
    Male & 0.53 & 0.51 \\
    Black & 0.32 & 0.5 \\
    White & 0.63 & 0.51 \\
    Hispanic & 0.15 & 0.5 \\
    Native American & 0.04 & 0.49 \\
    Has Disability & 0.25 & 0.45 \\ \hline
    \textbf{Home Characteristics} & & \\
    Married Couple & 0.41 & 0.51 \\
    Unmarried Couple & 0.02 & 0.51 \\
    Single Parent & 0.21 & 0.5 \\
    Institutional & 0.21 & 0.5 \\
    Group & 0.14 & 0.53 \\ \hline
    N Placements & 2335517 & 0 \\
    \bottomrule
    \end{tabular}
    \label{table:placement-summary-stats}
\end{table}

\textit{Generating Profiles}---In total, we generate five-hundred child profiles and five-hundred home profiles. A child (home) profile is a vector of characteristics $x_c$ ($x_h$). The $i$th characteristic of a child (home) is coordinate $x_{c,i}$ ($x_{h,i}$) of the vector. This coordinate can encode quantitative or qualitative information about the child (home). We base the characteristics on the observable features available in our data, and we randomly sample characteristic vectors from the empirical distribution. Importantly, for each market $m$, we randomly select one county and sample only children and homes from the county to increase our simulations' robustness. Since the children and homes are from the same county, the simulated matches are plausible. This precludes us from simulating outcomes in very small counties\footnote{Eight or less children in one year.} because AFCARS censors them, but results using nation-wide sampling are qualitatively and quantitatively similar. We assume that every simulated home has one available bed for a foster child for simplicity's sake. For each child and home, we write a neutral descriptive profile given the characteristics of the child (home). Therefore, when submitting preferences, every subject has the same information as used in the classification predictor.

\textit{Random Forest Predictor}---Random forests are a machine learning method for classification and regression analysis. They are collections of decision trees that partition data based on features (covariates) to minimize a loss function in predicting some outcome. Random forests use bootstrap aggregating to draw a random subsample from the data and train a decision tree on the bootstrap sample using a random selection of features. The forest predicts a classification outcome through majority vote of the ensemble of decision trees. Random forests are known to be highly accurate and minimize variance compared to many competing machine learning methods, inspiring our use of this method to predict persistence. We refer the reader to \qcite{breiman-2001} for details on training a random forest and utilizing it for prediction.

We train the random forest to predict persistence on a random subsample using 75\% of the placements. We tune our parameters for the random forest to use one-hundred trees, maximum depth and splitting, a Gini loss measure, and a maximum of square root of the total number of features. We use all features in the Placement Files for our training and display their importances in Figure \ref{fig:feature-importances}. After training the random forest, we assess its validity in the remaining 25\% test sample. The accuracy in our test sample is a remarkable 77\%. We report the confusion matrix in Figure \ref{fig:confusion-matrix}. The false negative rate is approximately 15\%, and the false positive rate is approximately 7.8\%.

\begin{figure}[t]
    \caption{Feature Importances}
    \includegraphics[width=\linewidth]{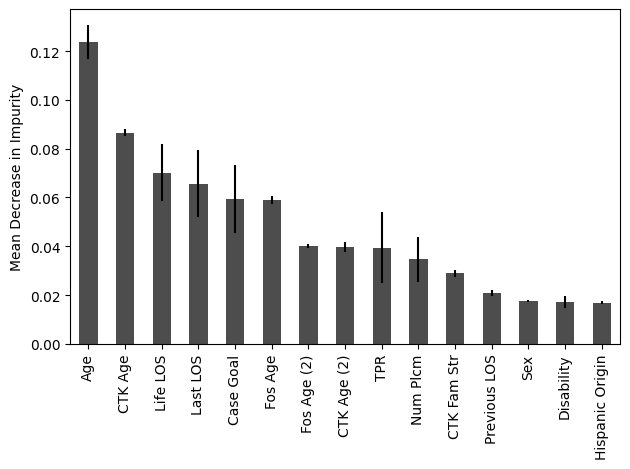}
    \floatfoot{Note: Life LOS is the number of days a child spent in foster care; for Last LOS this is the LOS since the child's most recent removal from her home of origin. Previous LOS is the length of stay in the child's previous foster care spell. CTK Age, CTK Age (2), and Fos Age refer to the first caretaker's, second caretaker's, and first foster caretaker's year of birth since the Placement Files are cross-sectional and do not include the placement year as a feature. Setting refers to the placement type (family home, institutional home, or group home). Poor housing refers to whether or not inadequate housing was cited as a reason for the child's removal from the biological parents.}
    \label{fig:feature-importances}
\end{figure}

\begin{figure}[t]
    \caption{Confusion Matrix for Random Forest Model}
    \includegraphics[width=\linewidth]{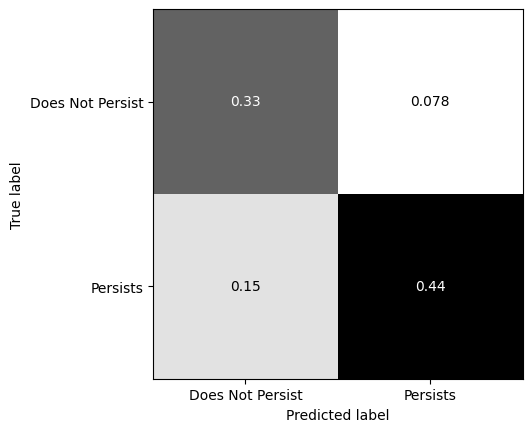}
    \floatfoot{Note: Persistence averages sum to different values than Table \ref{table:placement-summary-stats} because this matrix is subset to the test data.}
    \label{fig:confusion-matrix}
\end{figure}

Random forests are adept at identifying interactions in the data without the inclusion of interaction terms; this implies that, at the placement level, our predictor can identify relevant match-specific features that impact persistence. This point is crucial because gains from improved matching procedures are more relevant when persistence is heterogeneous across children. To measure the degree of horizontal differentiation in persistence, we simulate one hundred small markets in the same manner that we generate profiles, denoting a typical market as $s$. For a fixed $c,h$, we calculate the shared persistence rate as $p_s(c,h) = 1/(|C_s| - 1) * \sum_{c' \in C_s - c} \Hat{Y}(c, h)$. Intuitively, $p_s(c,h)$ is the percentage of children in the market other than $c$ that would persist if matched to $h$. We define $c$'s persistence set as $H^p_s(c) = \{ h \in H_s : \Hat{Y}(c,h) = 1 \}$. $c$'s average shared persistence rate is $p_s(c) = 1 / |H^p_s(c)| * \sum_{h \in H^p_s(c)} p_s(c,h)$. Under perfectly vertically (horizontally) differentiated persistence, $p_s(c) = 1$ ($0$) for all $c$. The persistence heterogeneity index is $\Bar{p}_s = 1/|C_s| * \sum_{c \in C_s} p_s(c)$, and the same interpretation for this index holds. Intermediate values of the index measure the degree of horizontal differentiation in the persistence in the market. The results for our persistence heterogeneity simulation are in Figure \ref{fig:persistence-heterogeneity-5}. The average over all simulated markets is $0.335$.

\begin{figure}[t]
    \caption{Persistence Heterogeneity Simulations}
    \includegraphics[width=\linewidth]{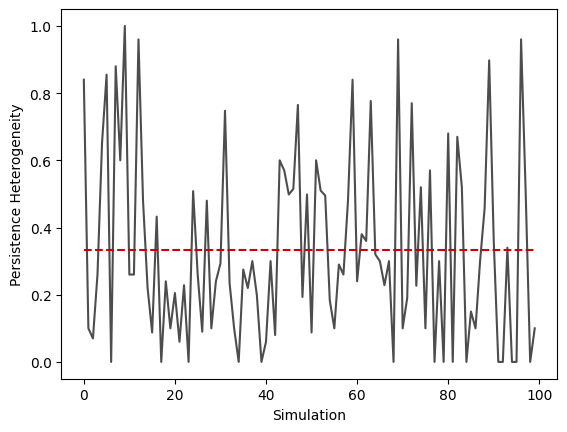}
    \floatfoot{Note: Small market size mechanically causes high variation in the score over simulations.}
    \label{fig:persistence-heterogeneity-5}
\end{figure}

Figures \ref{fig:feature-importances} and \ref{fig:persistence-heterogeneity-5} suggest that match-specific effects impact persistence. First, while the feature importance metric cannot interpret the impact of interaction terms, features relevant to foster homes (foster caretaker age) and the child (age, family structure, etc) appear in the top fifteen. These two facts imply \textit{potential} for interactions exists. Next, the simulation results affirm this potential. In the average small market, the average child shares less than a third of her positive persistence outcomes. Therefore, for the average child, there are many homes that she would persist at that other children would not persist at. To our knowledge, the literature on child welfare---in and outside of economics---has not assessed the significance of ex-ante known match-specific features on persistence. Our preliminary checks here show that these match-specific features can, in practice, be a predictive determinant of persistence.

In our counterfactuals, we use persistence as the evaluation for mechanisms. However, our random forest model is a classifier predicting a binary outcome. Therefore, we have to break indifferences to create strict evaluations. The underlying architecture of random forests consists of many decision trees that vote on the classification. If the majority vote is one, then the classification is one and vice-versa for zero. We break indifferences by using the random forest's majority votes for a given classification. This approach also has the advantage that more confident predictions are ranked higher. We provide the empirical prediction error distribution by random forest vote decile in Figure \ref{fig:error_distribution}. The error rate peaks around the $0.5$ cutoff and falls toward the tails, implying that errors might have limited impact on unanimous mechanisms which emphasize higher ranked options.

\begin{figure}[t]
    \caption{Error Distribution by Vote Decile}
    \includegraphics[width=\linewidth]{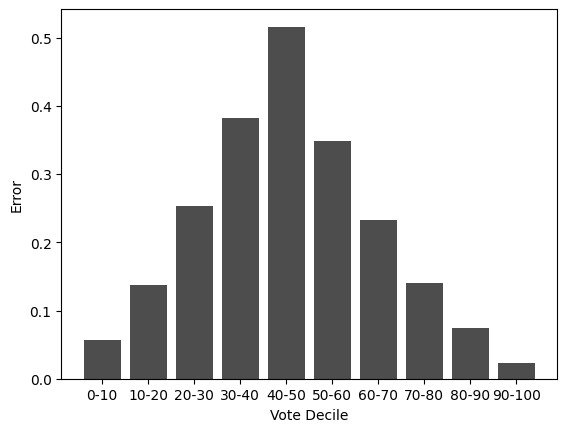}
    \floatfoot{Note: The error rate is the percentage of incorrect predictions within the decile.}
    \label{fig:error_distribution}
\end{figure}

Furthermore, in robustness specifications in our results, we perturb the persistence $Y(c,h)$ for each $c$ and $h$ to induce misclassification. Using this method, we can ascertain the impact of prediction error on child welfare when the matchmaker uses predicted persistence as the evaluation. Our results are robust to the observed empirical distribution of misclassification in the out-of-sample testing, further strengthening the case for utilizing our mechanisms.

\subsection*{4.3 \hspace{5pt} Experiment Procedure}

Before conducting the experiment, we performed simulations and power analyses to assess the optimal sample size. The simulations are in Appendix C. We determine that a minimum of three-hundred subjects would be ideal to detect effects in our experiment. We recruit caseworkers through child welfare agencies in counties across the United States. We reached out to these counties to disseminate an email invitation to our experiment sent to caseworker employees. Subjects that read and responded to our email were able to immediately participate in the online experiment.

The experiment was available from DATE to DATE. We allowed subjects to asynchronously complete the experiment. The experiment begins with an instructions interface that explains relevant terminology, the economic environment, and rules. While or after reading the instructions, the subject fills out (incentivized) control questions that verifies her understanding. Once complete, we ask subjects to answer a basic demographics survey that includes an additional question soliciting the subject's experience as a foster care caseworker.

After finishing the instructions and survey, the subject proceeds to the experimental interface to complete the five rounds of the experiment. Subjects were allotted ten minutes per round to read their child profile, home profiles, and submit preferences. The average time to complete a round was TIME. Once a subject completed her rounds, she receives her payoff and exits the experiment immediately. We pay subjects a baseline participation fee of ten dollars USD plus her gains or loses from incentive pay. The conversion rate for one experimental unit was ten-cents USD. Surveys and experiment interfaces are available in the Online Appendix.


\section*{5 \hspace{5pt} Conclusion}

Before concluding, we highlight a number of salient future directions for our simulations. As mentioned in the Introduction, our tentative plan involves running a lab-in-the-field experiment with real caseworkers to evaluate their preferences and decision-making in a simulated market where children's predicted persistence is determined through a random forest predictor that we train on a national dataset with several million observations. It achieves an 77\% out-of-sample prediction accuracy. Combining these elements, we can estimate counterfactual matchings and persistence in the simulated market to provide plausible indications on the welfare improvements from using unanimous matching mechanisms in the child welfare context. Furthermore, we can use surveyed information about participants to assess how experience affects within-caseworker average persistence.

Our work in this paper establishes matching mechanisms that are appropriate for foster care while also improving placements beyond the status quo (Serial Dictatorship) as measured by reducing the number of placements that a child experiences while in foster care. In our preliminary simulations, placement stability--or persistence---can increase by up to 28.3 pp in small markets with five to ten children matched every day when the algorithm is accurate and at worst 6.5 pp when the algorithm is poor in cases when the caseworkers are accurate. We note that in our work on the experiment so far, when assessing persistence with our random forest predictor and simulating caseworker preferences by assuming that they racially match, we find that average persistence increases by 11.2 pp. We use a new properties, unanimity and unimprovability, to characterize mechanisms that fairly assign children to homes where caseworkers and the algorithm do not concur that there is a better home for the child. Our flagship mechanism, ASDI, constructs unimprovable matchings and improves average persistence.

Nevertheless, no mechanisms in this setting are meaningfully strategy-proof in any way unless the number of homes is less than or equal to the number of children. Generally, this is true in foster care. We generalize our model to allow any preference aggregation rule to check the commonality of strategy-proofness for rules other than unanimity. We find that no mechanism and aggregation rule satisfy a weakened version of strategy-proofness---group robustness---that only allows manipulations that improve all children. This demonstrates that unanimity is not unique in its failure to mitigate strategic incentives. We highlight that further research could expand on our generalized model and explore whether or not other aggregation rules feature any other positive, useful properties.

As a final note, while our application focuses on child welfare, our framework can be applied to many other one-sided matching markets where there are multiple preference orderings for one side of the market. Examining other applications is another area with great potential for future research.

\printbibliography

\appendix
\counterwithin{table}{section}
\counterwithin{figure}{section}
\counterwithin{theorem}{section}
\counterwithin{proposition}{section}
\counterwithin{lemma}{section}

\section{Proofs}

Proofs that did not appear in the main text are detailed here.

\noindent\textbf{Proposition \ref{proposition:unanimous-reduction}}. A matching $\mu$ is unanimous on $L$ if and only if it is efficient on $L' = (M, >_I, >)$.

\begin{proof}
    We first prove $(\implies)$, that is, if $\mu$ is unanimous on $L$ then it is efficient on $L'$. Suppose, for a contradiction, that this is not true. This would imply that $\mu$ is unanimous on $L$, but $\mu$ is not efficient on $L'$. If $\mu$ is not efficient on $L'$, then there exists some sequence $(c_1, c_2, ..., c_n)$ such that for all $i \neq n$, (i) $\mu(c_{i+1}) \geq_{I(c_i)} \mu(c_i)$, (ii) either $\mu(c_1) \geq_{I(c_n)} \mu(c_n)$ or $h \geq_{I(c_n)} \mu(c_n)$ for some $h$ with $\mu(h) = \varnothing$, and (iii) this must hold strictly for at least one $i$. Consider the matching $\mu'$ such that the (possibly cyclic) trades implied by the sequence above occur. For any $c$ such that $\mu(c) = \mu'(c)$, we have that $c \in I^*(\mu) \iff c \in I^*(\mu')$ and $c \in \kappa(\mu) \iff c \in \kappa(\mu')$. For any $c'$ such that $\mu(c') \neq \mu'(c')$, we have that $\mu'(c') \geq_{I(c')} \mu(c')$, so:
    \[c' \in I^*(\mu) \implies \mu(c') \in H^*(c') \implies \mu'(c') \in H^*(c') \implies c' \in I^*(\mu')\]
    the following also holds:
    \[c' \in \kappa(\mu) \implies \mu(c') \neq \varnothing \implies \mu'(c') \neq \varnothing \implies c' \in \kappa(\mu')\]
    For at least one $c''$: $\mu'(c'') >_{I(c'')} \mu(c'')$. We have that either $\mu(c'') \notin I^*(\mu)$ or $\mu(c'') = \varnothing$. We also have that if $\mu(c'') \notin I^*(\mu)$, then $\mu(c'') \in I^*(\mu')$. If $\mu(c'') = \varnothing$, then $\mu'(c'') \neq \varnothing$. These facts imply that either $I^*(\mu) \subset I^*(\mu')$ or $\kappa(\mu) \subset \kappa(\mu')$, i.e., $\mu$ is not unanimous, a contradiction. 
    
    Next, we prove $(\impliedby)$, that is, if $\mu$ is efficient on $L'$ then it is unanimous on $L$. Again, suppose, for a contradiction, that $\mu$ is efficient on $L'$ but that it is not unanimous on $L$.  Since $\mu$ is not unanimous, there exists some $\mu'$ such that either (a) $I^*(\mu) \subset I^*(\mu')$ and $\kappa(\mu) \subseteq \kappa(\mu')$ or (b) $I^*(\mu) \subseteq I^*(\mu')$ and $\kappa(\mu) \subset \kappa(\mu')$. Without loss of generality, we denote the set $S$ and $S'$ as $I^*(\mu)$ and $I^*(\mu')$ if (a) holds, alternatively, $\kappa(\mu)$ and $\kappa(\mu')$ if (b) holds. First, note that if $\mu'(c) = h$ for some $c \in S'$ such that $c \notin S$ and $h$ with $\mu(h) = \varnothing$, then we can construct a third matching $\mu''$ where $\mu''(c) = h$ and $\mu'' = \mu$ otherwise. The above relations give us that $\mu''(c) >_{I(c)} \mu(c)$ and $\mu''(c') = \mu(c')$ for all $c' \neq c$, so $\mu''$ Pareto dominates $\mu$ on $L'$, contradicting efficiency. Therefore, for any $c$ satisfying the stated conditions, it must be that $\mu'(c) = h$ for some $h$ such that $\mu(h) \neq \varnothing$. We can construct a sequence $(c_1, c_2, ..., c_n)$ such that $c_1 = c$ and $c_i = \mu(\mu'(c_{i-1}))$ for $i \neq 1$. By unanimity of $\mu'$, $\mu(c_i) = \mu'(c_{i-1}) \geq_{I(c_{i-1})} \mu(c_{i-1})$ for $i \neq 1$ and $\mu(c_2) = \mu'(c_1) >_{I(c_1)} \mu(c_1)$. Yet, if $\mu$ is efficient on $L'$, then no such sequence described can be constructed, a contradiction. This proves the Proposition.
\end{proof}

\noindent\textbf{Proposition \ref{proposition:uttc-efficient-unanimous}}. Unanimous Top Trading Cycles is constrained-efficient and unanimous.

\begin{proof}
    Unanimity follows because for any $c \in I^*(\mu^1)$, $c$ can only participate in a cycle or chain such that $c$ would be matched to some $h \in H^*(c)$. A similar conclusion holds for every $c \in \kappa(\mu^1)$. Unanimity of $\mu^1$ directly implies unanimity for $\phi_{UTTC}$.

    Next, we demonstrate that UTTC is constrained-efficient. Suppose, for a contradiction, that it is not. There must exist some $\mu'$ such that $\phi_{UTTC} \vartriangleright_{PD} \mu'$. Since $\phi_{UTTC}$ is unanimous, $\mu'$ cannot (a) weakly expands the set of unanimous children and matched children and (b) strictly so for one of the two sets. Therefore, $I^*(\phi_{UTTC}) = I^*(\mu')$ and $\kappa(\phi_{UTTC}) = \kappa(\mu')$. $\phi_{UTTC} \vartriangleright_{PD} \mu'$ implies that we can identify some Pareto improving cycle or chain $(c_1, c_2, ..., c_n)$ such that every $c_j$ in the cycle satisfying $c_j \in I^*(\phi_{UTTC})$ is reallocated to a unanimous home. 
    
    Consider the earliest child $c_i$ in the cycle to be removed from the UTTC algorithm. If $c_i \in I^*(\phi_{UTTC})$, then $\phi_{UTTC}(c_{i+1}) \in H^*(c_i)$. Since $c_i$ was removed, this implies that $c_i$ was not pointing at $c_{i+1}$. Further, because $\phi_{UTTC}(c_{i+1}) \in H^*(c_i)$ we know that it must be the case that $c_{i+1}$ was removed in an earlier round. This contradicts our assumption that $c_i$ was the earliest to be removed. If $c_i \notin I^*(\phi_{UTTC})$, the same contradiction is immediate because $c_i$'s pointing is only restricted to unanimously acceptable homes. Therefore, no such cycle or chain can exist, and $\phi_{UTTC}$ cannot be Pareto dominated by a matching that weakly expands the set of unanimous and matched children. This proves the Proposition.
\end{proof}

\noindent\textbf{Theorem \ref{theorem:idsdi}.} Adaptive Serial Dictatorship with Indifferences is unimprovable.

\begin{proof}
    We prove the Theorem by contradiction. First, note that if $\phi_{ASDI} \equiv \mu$ is improvable, then we can find a cycle $Y = (c_1, c_2, ..., c_n)$ that satisfies $I(c_i, \mu(c_{i+1})) \subseteq I(c_i, \mu(c_i))$ for all $c_i \in Y$, and $\mu(c_{i+1}) \in I(c_i, \mu(c_i))$ for some $c_i \in Y$, where we define $c_{n+1} = c_1$. Alternatively, the above holds for all $c_i \neq c_n$ and for some $c_i$ (respectively), but, for some $h$ with $\mu(h) = \varnothing$, $I(c_n, h) \subseteq I(c_n, \mu(c_n))$.

    Let $w$ be the matching implied by the reallocation $Y$ and all children not in $Y$ keep the same match. Consider the earliest dictator $c$ in a round $n$ such that $w(c) \in I(c, \mu(c))$. If $w \in U_c$, then $w(c) \in H^n$ implying either $w(c) >_{I(c)}^n \mu(c)$ or some $h \in H^n$ exists such that $h >_{I(c)}^n w(c) \implies h >_{I(c)}^n \mu(c)$. In either case, $\mu \notin U_c$, a contradiction. Therefore, $w \notin U_c$.

    Hence, some dictator $c'$ in an earlier round $k < n$ did not choose $w$, i.e.,
    \[\mu(c') >_{I(c')}^k w(c')\]
    If $w(c') \neq \varnothing$, the above implies that there exists some home $i \in H^k$ such that $i \in I(c', w(c'))$. $\mu(c') >^k_{I(c')} w(c')$ and $w(c') \neq \varnothing$ imply that $\mu(c') \sim^k_{I(c')} i$ since there are only two indifference classes. If $w(c') = \varnothing$, we trivially have that $\mu(c') \in I(c', w(c'))$ and $\mu(c') \notin I(c', \mu(c'))$. So, $i \notin I(c', \mu(c'))$. Therefore, no such cycle $Y$ can exist.
\end{proof}

\noindent\textbf{Theorem \ref{theorem:obvious-manipulability}.} The following statements are true:
    \begin{enumerate}
        \item[i.] Any unanimous matching mechanism is obviously manipulable.
        \item[ii.] Serial Dictatorship with Indifference is non-obviously manipulable if $|H| \leq |C|$.
    \end{enumerate}

\begin{proof}
    We prove (i) that any unanimous matching mechanism is obviously manipulable using an example. Consider the following market with $C = \{ a, b \}$ and $H = \{ 1, 2, 3 \}$. The preferences, evaluations, and induced unanimity orders for $b$ are as follows, where omitted homes are not unanimously acceptable and brackets indicate indifference:
    \begin{align*}
        &b\\
        \succ_b: 1, 2&, 4, 3\\
        >_b: 3, 1&, 2, 4\\
        >_{I(b)}: [1, 3&], [2, 4]
    \end{align*}
    Let $\phi$ be any unanimous matching mechanism. Every matching that is consistent with it must be unanimous. Under the truthful report for $b$, in any unanimous matching $\mu$, either $\mu(b) = 1$ or $\mu(b) = 3$. This implies that the worst-case scenario in any unanimous matching is that $b$ is matched to $3$. Under the report $\succ'_a$ where $a$ reports only $1$ as acceptable, $H^*(a) = \{1\}$. Therefore:
    \[CU^-[\succ_b] = \{ (1, 3) \}\]
    This implies that the worst-case under truth-telling is no better than $b$ matching to $3$. Consider a profile $\succ'_b$ where $b$ reports the order $\succ'_b : 4,2,1$. Under this report, every home is unanimous except for $3$. Thus, in any unanimous matching $\mu'$, either $\mu'(b) = 1$ or $\mu'(b) = 2$, or $\mu'(b) = 4$. The worst-case under $\succ_b'$ must be better than $3$, so this manipulation is a worst-case manipulation.
    
    Last, we prove (ii) in a general environment for any child $c$. It is immediate that the best case for $c$ under truthful reporting is going first in the order and matching to her top choice which is always unanimous. Under SDI, we can stipulate preferences for all other $c' \neq c$ such that $c$'s top choice is unacceptable. Therefore, $c$ will also be matched to her top choice under SDI in the best-case scenario.

    Suppose that $|H| < |C|$. The worst-case scenario under any report is that $c$ goes last and that every preceding child has a single top choice such that the set $H$ is covered by the children's top choices. In this case, $c$ is unmatched under either mechanism. If $|H| = |C|$, under any report, the same argument applies to show that $c$ is always matched to her least favorite home if $c$'s report is such that all homes are unanimously acceptable and is unmatched if at least one home is not unanimously acceptable.
\end{proof}

\noindent\textbf{Theorem \ref{theorem:gs-aggregation}.} When $|H| \geq |C| + 2$, for any $\tau$ satisfying the weak Pareto principle, an efficient, strategy-proof, and IIA mechanism $\phi$ is group robust if and only if $\tau$ is a dictatorial serial choice rule.

\begin{proof}
    We prove that if $\phi$ is efficient, strategy-proof, and IIA, then it is only group robust for a $\tau$ satisfying WPP if $\tau$ is a dictatorial serial choice rule. First, we show that the rule is a serial choice rule and prove that it is dictatorial in a later step. We will prove the Theorem for the simplifying case where $|C| = 1$ and $|H| \geq 3$. Then, we extend it to any case where $|C|$ is finite and positive and $|H| \geq |C| + 2$.

    Let $C = \{c\}$ and $1,2 \in H$ be two distinct homes. Consider the rankings:

    \[\succ_c : 1, 2, j\]
    and
    \[>_c : 2, 1, j\]
    where for any other $j \in H$, $2 \succ_c j$ and $1 >_c j$.

    \textbf{Claim 1:} if $\tau$ follows WPP, then $\phi(c) \in \{1, 2\}$ for any efficient (with respect to $L_\tau$) and group robustness $\phi$. WPP implies that $1 T_c j$ and $2 T_c j$. Since either $1$ or $2$ will always be unmatched, the claim immediately follows by efficiency.

    Suppose that $\phi(c) = 2$. We have that $2T_c1$ by efficiency. 
    
    \textbf{Claim 2:} Given that $\phi(c) = 2$, for any arbitrary preference $\succ''_c$ and any evaluation $>''_c$ where $2$ is top-ranked with corresponding aggregation $T''_c$, $2 T''_c i$ for any $i \neq 2$
    
    Consider a report for $c$:
    \[\succ^J_c : 1, j, 2\]
    i.e., $1 \succ^J_c j$ and $j \succ^J_c 2$ for all $j \neq 1,2$. Denote $T^J_c \equiv \tau(\succ^J_c, >_c)$. By group robustness, we still have that $\phi[M, \tau(\succ^J_c, >_c)](c) \neq 1$. If $\phi[M, \tau(\succ^J_c, >_c)](c) = j$, then we must have that $j T^J_c 1$. However, by WPP, we have that $1 T^J_c j$, a contradiction. Therefore, $\phi[M, \tau(\succ^J_c, >_c)](c) = 2$.

    Now, we will consider a series of preferences and evaluations to show the Claim. We denote these as sequences to shorten exposition. Let the binary relation representing the aggregation in each case be $T''_c$.

    (1) Preferences are $(1, i)$ where $1$ is preferred to any $i \neq 1$ and each $i$ is positioned arbitrarily. Evaluations are $(2, 1, i)$. By the same logic as above, the match must still be $2$.

    (2) Preferences are $(1, i)$ and evaluations are $(2, i)$. By group robustness, the match must still be $2$, otherwise the matchmaker could deviate to the report in (1).

    (3) Preferences are some arbitrary $\succ''_c$ and evaluations are $(2,i)$. Suppose that the match is not $2$. In the case where $c$'s preferences are as in $\succ_c^J$, $c$ has an incentive to deviate to $\succ''_c$, contradicting group robustness. The match must still be $2$. Therefore, for any arbitrary preference $\succ''_c,>''_c$ where $2$ is top-ranked we have that $\phi[\succ''_c, >''_c](c) = 2$. This implies that $2 T''_c i$ for any $i \neq 2$.

    \textbf{Claim 3:} Given that $\phi(c) = 1$, for any preference $\succ''_c$ where $1$ is top-ranked and any arbitrary evaluation $>''_c$ with corresponding aggregation $T''_c$, $1 T''_c i$ for any $i \neq 2$. The argument proceeds as in Claim 2 symmetrically.

    We can apply these arguments to any pair of homes $i,k$, and, by Claim 1, either $\phi(c) = i$ or $\phi(c) = k$. This will partition $H$ into sets $H_2$ (homes such that Claim 2 will apply) and $H_3$ (homes such that Claim 3 will apply). Suppose that $|H_2| > 0$ and $|H_3| > 0$. For any generic elements $i \in H_2$ and $k \in H_3$, we can take arbitrary evaluations and preferences for $c$ such that $i$ and $k$ are at the top, respectively. By each claim, we must have that $\phi(c) = i$ and $\phi(c) = k$, which is impossible. Therefore, either $H_2 = H$ or $H_3 = H$ but not both. Without loss of generality, suppose that $H_3 = H$. Then, $i T_c j$ for any $j \neq i$ when $i$ is top-ranked under preferences.

    For $|C| = n$, we use induction. First, notice that one can extend the above argument to any $|C|$ by inserting an additional child and home such that the child's only acceptable home under both rankings is the additional home. However, we only proved that $\tau$ is dictatorial for the top choice. We extend this to $\#\tau^D \geq n$. 
 
    \textbf{Induction Statement}: for a market with $|C| \geq n$, an efficient and IIA $\phi$ can only be group robust if $\tau$ is equivalent to a serial choice rule where $\#\tau^D \geq n$. We proved the base case $n = 1$ above. We show it for $n+1$ assuming that it holds for all integers $n' \leq n$.

    Consider a hypothetical market where $|C| = n+1$. For all $c' \in C$, we will specify the following rankings:
    \[\succ_{c'} : 1, 2, ..., n, x, y, j\]
    and
    \[>_{c'} : 1, 2, ..., n, y, x, j\]
    where exactly $n$ homes precede $x$ under the preferences and $n$ homes precede $y$ under the evaluations, and $j$ is defined as in the case with $|C| = 1$.

    By the WPP, we have that $i T_{c'} i+1$ for all integers $i \in [1, n]$. By efficiency, $\phi(i) \neq \varnothing$ . We also have that there exists some $c \in C$ such that $\phi(c) \neq i'$ for any such $i'$ by the pigeonhole principle. By the same logic as Claim 1 above, $\phi(c) \in \{x, y\}$.
    
    \textbf{Claim 4:} Given that $\phi(c) = y$, for any arbitrary preferences $\succ''_c$ and any evaluation $>''_c$ where $y$ is ranked before $x,j$, $y T''_c x$ and $y T''_c j$.

    Consider a report for $c$:
    \[\succ^J_c : 1, 2, ..., n, x, j, y\]
    If $\phi[M, \tau(\succ^J_c, >_c)](c) \succ_c x$ then $\phi[M, \tau(\succ^J_c, >_c)](c) = n'$ for some $n' \leq n$. However, this contradicts IIA because $n' T_c^J i$ and $n' T_c i$ for any $i > n'$, but switching back to the report $\succ_c$ changes $c$'s match. So we have that $x \succsim \phi[M, \tau(\succ^J_c, >_c)](c)$. If $\phi[M, \tau(\succ^J_c, >_c)](c) = j$, then we must have that $j T^J_c x$. However, by the WPP, we have that $x T^J_c j$, a contradiction. This implies that $\phi[M, \tau(\succ^J_c, >_c)](c) \in \{ x, y \}$. 
    
    Suppose that $\phi[M, \tau(\succ^J_c, >_c)](c) = x$. By group-robustness, it must be that some $c' \neq c$ has $\phi(c') \succ_{c'} \phi[M, \tau(\succ^J_c, >_c)](c')$. However, again, this violates IIA because switching back to $\succ_c$ changes the match for $c'$ when $c$ moves to the unmatched home $y$. Therefore, $\phi[M, \tau(\succ^J_c, >_c)](c)  = y$.
    
    Next, we follow the same proof strategy as in Claim 2. Consider the following series of preferences and evaluations for $c$:

    (1) Preferences are $(1, 2, ..., n, x, i)$ where $x$ is preferred to any $i > n$ for arbitrarily positioned $i$. Evaluations are $(1, 2, ..., n, y, x, i)$. By the same logic as above, the match must still be $y$.

    (2) Preferences are $(1, 2, ..., n, x, i)$ and evaluations are $(1, 2, ..., n, y, i)$. We can argue as above that the match cannot be any $n' \leq n$. The match must be $y$ or some $i$. Suppose that it is some $i$. By IIA, matches for all other children remain the same since $i$ must be unmatched in the previous case (1) and $y$ must be unmatched in this case (2). The matchmaker could deviate to the report in (1) and improve to $y$ without changing the match for any other child, violating group robustness. Therefore, the match must be $y$.

    (3) Preferences are some arbitrary $\succ''_c$ and evaluations are $>''_c \equiv (1, 2, ..., n, y, i)$. Suppose that the match is not $y$. First, note that even if $T''_c = \succ_c$, we cannot have that $c$'s match is some $n' \leq n$ because then, by IIA, $c$'s match must also be $n'$ under $T_c$. Strategy-proofness implies that no report can improve $c$'s match beyond truthful reporting, so the match is not any such $n'$. If the match is not $y$ then it is some $j > n$ (potentially including $x$). Furthermore, by efficiency we had that that $\phi[M, \tau(\succ^J_c,>''_c)](j) = \varnothing$, and we must have that the first $n$ homes are still matched. Efficiency also implies that $\phi[M, \tau(\succ''_c, >''_c)](y) = \varnothing$. By IIA, all other $c' \neq c$ have the same match under $T''_c$. Suppose that $c$'s true preferences were $\succ^J_c$. $c$ would have an incentive to misreport $\succ''_c$ and would not harm any other children, violating group robustness. Therefore, we must have that the match is still $y$.

    (4) Preferences are some arbitrary $\succ''_c$ and evaluations are $(n', y, i)$ for some arbitrary re-positioning of the first $n$ elements. Suppose that the match is not $y$. By the same argument as in (3), it cannot be any $n' \leq n$. Additionally, by the same argument as in (3), if it is some $j > n$ for $j \neq y$, then the matchmaker has an incentive to misreport $(1, 2, ..., n, y, i)$ without harming any other child. Therefore, the match must still be $y$. Hence, for any arbitrary $\succ_c''$ and any $>''_c$ where $y$ is ranked above any $i > n$, we have that $y T''_c i$.

    Claim 4 implies that $\tau$ is dictatorial if $\phi(c) = y$. To see this, observe that it must hold for the evaluation $>''_c = >_c$ above and any preferences:
    \[\succ''_c : 1, 2, ..., x, n', ..., n, y\]
    where $n' \leq n$. By the induction statement, for $|C| = n+1$, at least the first $n$ homes in the ranking $\tau$ are determined by a serial choice rule. $y$ cannot be chosen since it is not in the first $n$ rankings under either ranking. However, we still have that $y T''_c x$, implying that $x$ also cannot be chosen. Since this holds for an arbitrary $n' \leq n$, the child can never be the dictator. Furthermore, since the order of dictators is fixed under a SCR, she can never be the dictator for any preferences.

    \textbf{Claim 5:} Given that $\phi(c) = x$, for any preference $\succ''_c$ where $x$ is ranked before $y,j$ and any arbitrary evaluation $>''_c$, $x T''_c y$ and $1 T''_c j$. The argument proceeds as in Claim 4 symmetrically. We also have that if $\phi(c) = x$, then the matchmaker can never be the dictator.

    Fix the first $n$ homes as we initially did for preferences and evaluations. We can apply these arguments to any pair of homes $i,k > n$, and, by Claim 4, either $\phi(c) = i$ or $\phi(c) = k$. This will partition all such homes into sets $H_4$ (homes such that Claim 4 will apply) and $H_5$ (homes such that Claim 5 will apply). Suppose that $|H_4| > 0$ and $|H_5| > 0$. For any generic elements $i \in H_4$ and $k \in H_5$, we can define:
    \[\succ^4_c : 1, 2, ..., n, i, k, j\]
    and
    \[>^5_c : 1, 2, ..., n, k, i, j\]
    By each claim, we must have that $\phi(c) = i$ and $\phi(c) = j$, which is impossible. Therefore, either $|H_4| > 0$ or $|H_5| > 0$ but not both.

    Without loss of generality, suppose that $H_5 = H$. Since $|H| \geq n + 3$, we can take some $i,k \in H_5$ where $i$ is ranked in position $n+1$ in $c$'s preferences $\succ_c''$ so that Claim 5 applies for any arbitrary $>_c''$. This will imply that $i T_c m$ for any $m \neq i,k$ such that $m > n$. However, if we take some $i,j \in H_5$, then we also get that $i T_c m'$ for any $m' \neq i,j$ and $m' > n$. So $i T_c k$ and $i T_c j$. Therefore, $i T_c j'$ for any $j' \neq i$ and $j' > n$. The first $n$ homes in $\tau$ are chosen dictatorially, so $i$ must be ranked exactly in position $n+1$ as desired.

    $\tau$ is a function that only depends on preferences and evaluations, so, even when varying the problem, $\tau$ must be the same. Furthermore, the initial choice of the first $n$ homes was arbitrary, so we can repeat this entire argument for any permutation of preferences to show that $\tau$ is always dictatorial. Finally, one can insert any number of additional dummy children or homes as described above to meet the statement for any $n'' > n + 1$. This proves the induction statement.

    The other direction is immediate. If either side is the dictator, then a $\tau$-efficient and strategy-proof mechanism is $\succ$ or $>$-efficient. Therefore, there cannot be any group-beneficial manipulations for the dictatorial side. Furthermore, the non-dictatorial side cannot manipulate $\phi$ because its reports are irrelevant.
    
    This proves the Theorem.
\end{proof}

\noindent\textbf{Proposition \ref{proposition:extended-unanimity}.} The following statements are true:
\begin{enumerate}
    \item[i.] $\tau^u$ is a strict total order,
    \item[ii.] $\tau^u$ is regular, and
    \item [iii.] If $\mu$ is constrained-efficient and unanimous, then $\mu$ is efficient under $\tau^u$.
\end{enumerate}

\begin{proof}
    First, we show that $\tau^u$ is a strict total order, i.e., that it is an element of $P$. Let the binary relation of $\tau^u$ be $T^u_c$. We omit $c$ and simply consider an arbitrary $c$ for this part.

    \textit{Irreflexive:} If $T^u$ is not irreflexive, then there exists some $x \in H$ such that $x T^u x$. Clearly, it cannot be that $x$ is strictly preferred to itself on preferences nor evaluations. Furthermore, it cannot be that $x \in H^*$ and $h \notin H^*$.

    \textit{Transitive:} Suppose that for some $x,y,z$, we have that (i) $x T^u y$ and (ii) $y T^u z$. (i) implies that (a) $x$ is a unanimous improvement over $y$, (b) $x \in H^*$ and $y \notin H^*$, or (c) $x \succ y$. In case (a) or (b), $y \notin H^*$. Therefore, (ii) only if $y$ is a unanimous improvement over $z$. Hence, $z \notin H^*(c)$. In case (a), $x$ must also be a unanimous improvement over $z$ so that $x T^* z$. In case (b), immediately $x T^u z$.  In case (c), if $x \in H^-$ then $y \in H^-$ and (ii) only if $z \in H^-$ and $y \succ z$, hence $x \succ z$ by assumption that $\succ \in P$. If $x \in H^*$ and $y \in H^*$, then the same holds if $z \in H^*$. If $z \notin H^*$, then immediately $x T^u z$.

    \textit{Asymmetry:} This follows by irreflexivity and transitivity.

    \textit{Connected:} All homes are comparable by definition.

    Since only unanimously acceptable homes are acceptable under $\tau^u$, it cannot be that any unanimously acceptable home is not acceptable or that any unanimously unacceptable home is acceptable. Furthermore, suppose that $x \succ y$ and $x > y$ for any $x, y \in H$. It cannot be that $y \in H^*$. If $x \in H^*$, then $x T^u y$. If $x \notin H^*$, then still $x T^u y$ by definition and transitivity. These facts imply that the rule satisfies the WPP.

    Last, we prove that if $\mu$ is constrained-efficient and unanimous, then $\mu$ is efficient under $\tau^u$. Suppose, for a contradiction, that $\mu$ is constrained-efficient and unanimous, but $\mu$ is not $\tau^u$-efficient. There must exist some $\tau^u$-improving cycle $(c_1, c_2, ..., c_n)$ such that $\mu(c_{i+1}) R^u_{c_i} \mu(c_i)$ for all $i \neq n$ and $\mu(c_1) R^u_{c_n} \mu(c_n)$ with at least one strict relation. However, this implies that any child that is assigned to a unanimous home is reassigned to another unanimous home that is a Pareto improvement. Any child that is assigned to a non-unanimous home is reassigned to another non-unanimous home that is a Pareto improvement. Otherwise, if a child at a non-unanimous home improves to a unanimous home, this contradicts $\mu$ being unanimous. The reassignment will therefore have the same unanimous coalition and will Pareto dominate $\mu$, which contradicts $\mu$ being constrained-efficient.
\end{proof}

\begin{proposition}\label{proposition:rsa-sp}
    Random Serial Assignment is strategy proof when ignoring feasibility constraints.
\end{proposition}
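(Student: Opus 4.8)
The plan is to exploit the one feature of RSA that the hypothesis ``ignoring feasibility constraints'' isolates: because homes are never consumed, the home that a child $c$ receives is completely decoupled from the reports and the order-positions of all other children. I would first fix the realized availability function $a$, so that RSA becomes a deterministic mechanism, and fix an arbitrary child $c$. Tracing Algorithm~4, the round in which $c$ is selected always arrives, since $|C^n|$ strictly decreases each round; at that round $c$ is assigned the $\succ_c$-most preferred home $h$ with $a(h) = 1$ that she reports acceptable, and is left unmatched if there is none. This step consults only $c$'s reported ranking and $a$ --- the incumbent partial matching $\mu^n$ is irrelevant to the choice of $h$ precisely because over-capacitation is permitted. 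Hence, for each fixed $a$, the assignment of $c$ is a function of $c$'s own report alone, independent of $\succ_{-c}$ and of the processing order.

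Next I would pin down the truthful outcome and compare it to an arbitrary misreport. Under the truthful report $\succ_c$, child $c$ obtains $h^\ast \equiv$ the $\succ_c$-maximal element of $\{h \in H : a(h) = 1 \text{ and } h \succ_c \varnothing\}$, and $\varnothing$ when that set is empty. Let $\succ'_c$ be any other report for $c$ and let $g$ be the home (or $\varnothing$) it induces; by the structural fact above, $g \in \{\varnothing\} \cup \{h : a(h) = 1\}$. I would verify $h^\ast \succsim_c g$ by cases: if $g = \varnothing$ then either $h^\ast$ is an acceptable available home, strictly beating $\varnothing$, or $h^\ast = \varnothing = g$; if $a(g) = 1$ but $g$ is unacceptable to $c$ then $\varnothing \succ_c g$, so $h^\ast \succsim_c \varnothing \succ_c g$; and if $a(g) = 1$ with $g \succ_c \varnothing$ then $g$ belongs to the very set over which $h^\ast$ is $\succ_c$-maximal, so $h^\ast \succsim_c g$. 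In every case $\phi_{RSA}[L](c) = h^\ast \succsim_c g = \phi_{RSA}[L'](c)$ for this realization of $a$.

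Finally, since the comparison holds for every realization of the availability draw $a$ and for every profile $\succ_{-c}$, truthful reporting is a weakly dominant strategy for $c$, which is exactly strategy-proofness in the sense appropriate to the paper's random mechanism (and the statement sharpened in Proposition~\ref{proposition:rsa-unique}). The argument singled out no particular child, so it applies to all of $C$. I do not anticipate a genuine obstacle: the only delicate point is the bookkeeping around the empty-assignment symbol $\varnothing$ and acceptability --- ensuring that a misreport naming an unavailable or unacceptable home as its top choice cannot outperform $h^\ast$ --- while the substantive content, independence of $c$'s assignment from every other input, is immediate from the permitted over-capacitation, which is precisely why ``ignoring feasibility constraints'' appears in the statement.
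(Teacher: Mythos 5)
Your proposal is correct and rests on exactly the same key observation as the paper's proof: because over-capacitation is permitted, child $c$'s assignment depends only on her own report and the realized availability function $a$, so the truthful report yields her $\succ_c$-best acceptable available home and no misreport can produce anything better than an available home or $\varnothing$. The paper phrases this as a short contradiction (a strictly preferred home obtained by misreporting would have to satisfy $a(h')=0$, hence could not be selected), while you give the direct maximality argument with the acceptability edge cases spelled out; the content is the same.
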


\begin{proof}
    We prove this by counterexample. Denote the true preferences for $c$ as $\succ_c$ (with corresponding problem $L$). Suppose that for some manipulation $\succ'_c$ (with corresponding problem $L'$), $\phi_{RSA}[L'](c) \succ_c \phi_{RSA}[L](c)$. Then in some round $n$, under $L'$, RSA selects some $h'$ that $c$ prefers over the $h$ selected (or not selected) in the same round $n$ under $L$. However, if $h'$ is not selected under $L$, then $a(h') = 0$. Therefore, it could not be selected under $L'$. This proves the Proposition.
\end{proof}

\begin{proposition}\label{proposition:rsa-unique}
    Truth-telling is the unique, non-trivial weakly dominant strategy under Random Serial Assignment when ignoring feasibility constraints.
\end{proposition}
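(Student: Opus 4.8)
The plan is to lean on the structural feature of RSA that already drove Proposition~\ref{proposition:rsa-sp}: once feasibility is ignored, the home RSA assigns to $c$ is a function of $c$'s own report and the availability function $a$ alone, namely the $\succ'_c$-maximal home of $\{h : a(h)=1,\ h \succ'_c \varnothing\}$ (and $\varnothing$ when that set is empty, since under the experiment's payoff scheme RSA never benefits from selecting a reported-unacceptable home). Thus each child faces an isolated single-agent decision problem, and ``weakly dominant'' reduces to: for every $a$, the report is $\succ_c$-optimal among all reports. I would first restate that truth-telling is weakly dominant --- this is Proposition~\ref{proposition:rsa-sp} read pointwise in $a$: against any rival report $\succ''_c$, the true report delivers the $\succ_c$-maximal available acceptable home, which weakly $\succ_c$-dominates whatever reported-acceptable home (or $\varnothing$) the report $\succ''_c$ produces.

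The substance is uniqueness. For a report $\succ'_c$ write $\mathcal{A}(\succ'_c) = \{h : h \succ'_c \varnothing\}$. I would show that a weakly dominant report must satisfy (i) $\mathcal{A}(\succ'_c) = \mathcal{A}(\succ_c)$ and (ii) $\succ'_c$ orders $\mathcal{A}(\succ_c)$ exactly as $\succ_c$ does, by exhibiting in each failure case a single availability function against which truth-telling strictly beats $\succ'_c$. If $\succ'_c$ reports some truly unacceptable $h$ as acceptable, take $a$ making only $h$ available: $\succ'_c$ yields $h$ with $\varnothing \succ_c h$, while truth yields $\varnothing$. If $\succ'_c$ omits some truly acceptable $h$, take $a$ making only $h$ available: truth yields $h \succ_c \varnothing$ and $\succ'_c$ yields $\varnothing$. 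If $\succ'_c$ inverts the $\succ_c$-order of two acceptable homes $h \succ_c h'$, take $a$ making exactly $h$ and $h'$ available: truth yields $h$ and $\succ'_c$ yields $h' \prec_c h$. In every case $\succ'_c$ is not weakly dominant. Conversely, any report satisfying (i)--(ii) is weakly dominant and, because RSA never consults the ranking among unacceptable homes, induces precisely the same assignment as $\succ_c$ for every $a$; such reports differ from $\succ_c$ only in the outcome-irrelevant ordering of the unacceptable ``tail.'' Declaring those permutations the trivial variants of truth-telling, the proposition follows: modulo them, the true $\succ_c$ is the unique weakly dominant strategy.

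I expect the only delicate point to be the bookkeeping around the word ``non-trivial'': one must be explicit that RSA never selects a reported-unacceptable home (so the tail ordering is genuinely irrelevant and spawns the family of trivial equivalents of truth-telling), and that the incentive scheme --- a penalty of $5$ units for a non-persisting match versus $2$ units for remaining unmatched --- is exactly what forces ``acceptable'' to coincide with ``$\succ_c$-preferred to $\varnothing$,'' so that padding one's list with a truly unacceptable home is a strict error against the corresponding singleton availability function. Beyond that caveat, the argument is just the one- and two-home availability-function constructions above, mirroring the proof of strategy-proofness.
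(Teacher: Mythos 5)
Your proof is correct and uses the same core technique as the paper: for each class of misreport, exhibit a single availability function under which truth-telling strictly beats it (singleton availability for added/omitted homes, a two-home availability for inversions). The one substantive divergence is the reading of ``non-trivial'': the paper takes it to mean the assumption that every child has at least one acceptable home, whereas you take it to mean ``modulo reorderings of the unacceptable tail.'' Your reading is actually the one the uniqueness claim needs --- a report that agrees with $\succ_c$ on the acceptable set and its order but permutes the homes below $\varnothing$ is outcome-equivalent to truth for every $a$ and hence also weakly dominant, and the paper's case analysis (which would try to separate such a report via an availability function making two unacceptable homes available) does not actually dispose of it, since truth-telling then yields $\varnothing$ rather than the higher-ranked unacceptable home. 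So your version is slightly more careful on exactly the point you flagged as delicate; otherwise the two arguments coincide.
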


\begin{proof}
    In qualifying the proof with \textit{non-trivial}, we make the assumption that every $c$ has at least one acceptable home. Under any manipulation $\succ'_c$, $c$ either ranks some $h'$ higher than some $h$ such that $h \succ_c h'$ or $c$ reports some $h$ as unacceptable. In the latter case, it is clear that if $a(h) = 1$ and $a(h'') = 0$ for all $h'' \neq h$, then this manipulation is strictly worse than truth-telling. In the former case, suppose that, (a) for every $h''$ such that $h'' \succ_c h$, $a(h'') = 0$, (b) $a(h') = 1$, and (c) $a(h)$ = 1, then under the manipulation $h \succ_c \phi_{RSA}[L'](c)$, but under truth-telling $\phi_{RSA}[L](c) = h$. Since there is always some case where a misreport is strictly worse than truth-telling, truth-telling must be the only unique weakly dominant strategy under the non-triviality assumption. This proves the Proposition.
\end{proof}

\section{Empirical Performance}

In this section, we provide the technical details for our additional simulations with fully fabricated data and show that SDI and UTTC substantially increases welfare in the markets that we examine.

\subsection*{Simulation Data}

For our simulations, we generate environments that we expect to mimic child assignment to homes. We run simulations for a small market case with five to ten children, noting that in additional simulations we perform in larger markets, the effect magnitudes become larger in the same directions. The small market case is representative of a typical foster care matching market in small to medium sized counties\footnote{\qcite{cortes-2021} documents that Los Angeles county, one of the largest foster care counties in the U.S., had about forty placements per day between January and February 2011.}. The number of homes in a market is uniformly randomly distributed according to the same values\footnote{Systemic market imbalances do not affect our results.}. 

Our simulations assume that caseworkers and the matchmaker have utility representations for each home ($u_c(h)$ and $v_c(h)$, respectively) that give rise to orderings over homes. Further, we assume that the two parties will be generally aligned in pursuing some underlying social welfare $w_{c,h}$ that depends on the child-home match.

We simulate preferences and outcomes under eight different specifications. The first specification branch is \textit{accurate} versus \textit{noisy} evaluations. Concretely, we assume that $w_{c,h}$ is a uniformly distributed random variable with support on $[0,1]$, i.e, it is the true probability of persistence for a child of type $c$ matched to a home of type $h$. The evaluation utility $v_c(h) = w_{c,h} + \alpha\epsilon_{c,h}$ where $\epsilon_{c,h}$ is also a uniformly distributed random variable with support on $[0,1]$. We bound $v_c(h)$ to $0$ or $1$ if it exceeds either minimum or maximum value. $\alpha \in \{ 0, 0.25, 0.5, 0.75 \}$ measures the severity of the random noise, or prediction error, in this branch. On baseline, we assume that the preference $u_c(h) = \epsilon'_{c,h}$ where $\epsilon'_{c,h}$ is distributed in the same way as all other noise terms above.

Our second specification branch departs from the assumption on caseworker preferences above. We do not pursue analysis where caseworker preferences are more accurate than evaluations; the result, which is a mirror of our main result in this section, is that unanimous mechanisms decrease welfare if the decision-makers (caseworkers) using Serial Dictatorship are more accurate than the evaluations (algorithms) that bound the decision-makers. Instead, we will say that the caseworker preferences \textit{assist} the algorithm when it is incorrect. This specification assumes the following:

\[u_c(h) = \begin{cases}
    \epsilon'_{c,h} &\text{if (a) } w_{c,h} \geq 0.5 \text{ and } v_c(h) < 0.5 \text{ or (b) } w_{c,h} < 0.5 \text{ and } v_c(h) \geq 0.5 \\
    w_{c,h} &\text{otherwise}
\end{cases}\]

The caseworker is exactly correct when the algorithm's predicted persistence would be a false negative or a false positive, otherwise, her preference is pure noise. We use this modeling to understand the impact of human decision-makers when they have limited complementary with the algorithm, i.e., they can "correct" it when it is severely mistaken. 

Our last specification is vertical differentiation of homes. In this specification, there is a uniformly randomly distributed vertical quality for each home $q_h$ such that the utility for each caseworker is $u_c(h) = q_h + \epsilon'_{c,h}$. This allows us to capture scenarios where the caseworkers agree on the quality dimension for homes but have some taste heterogeneity. Vertical differentiation is an interesting challenge for unanimous mechanisms because it can create significant overlap in the set of unanimous homes, thereby preventing the mechanisms from achieving widespread unanimous matches. 

\begin{table}[t!]
    \centering
    \caption{Persistence, No Assistance vs. Assistance}
    \begin{tabular}{l@{\hskip 10ex}c@{\hskip 6ex}c@{\hskip 6ex}c@{\hskip 6ex}c}
    \hline
    \hline
    Specification & Base & $0.25$ & $0.5$ & $0.75$ \\ \hline
    \textit{Average persistence, baseline} \\
    SD & 50.5\% &  &  & \\
    SDI & 68.5\% &  &  & \\
    UTTC & 75.3\% &  &  & \\
    ASDI & 69.4\% &  &  & \\
    \textit{Average persistence, no assistance}\rule{0pt}{6ex} \\
    SD & & 49.3\% & 49.4\% & 49.8\% \\
    SDI & & 67.1\% & 62.9\% & 59.4\% \\
    UTTC & & 73.1\% & 67.0\% & 62.3\% \\
    ASDI & & 67.2\% & 63.2\% & 59.8\% \\
    \textit{Average persistence, assistance}\rule{0pt}{6ex} \\
    SD & & 52.5\% & 59.4\% & 59.9\% \\
    SDI & & 67.9\% & 67.9\% & 65.8\% \\
    UTTC & & 72.8\% & 70.1\% & 67.2\% \\
    ASDI & & 68.4\% & 68.3\% & 66.3\% \\
    \end{tabular}
    \label{table:persistence-noise-assistance}
\end{table}

In every specification, we run one-thousand simulations and compare the status quo, Serial Dictatorship, with SDI, UTTC, and ASDI. Importantly, we optimize for $>$-efficiency rather than $\succ$-efficiency to capture gains from the algorithm's accuracy. We focus on the persistence rate averaged across children and simulations under the two mechanisms.

\subsection*{Results}

In the baseline case, all of our theoretical predictions hold. First, Serial Dictatorship achieves a poor rate of average persistence at just 50\%, which is to be expected as caseworker's preferences are essentially a coin toss. Every unanimous mechanism improves when the algorithm is an accurate predictor. Somewhat surprisingly, even though SDI is not tuned for efficient matching, it still drastically improves average persistence by 18.5 pp. ASDI offers a slight benefit of 0.9 pp over SDI. This suggest that unanimity alone results in higher welfare even without $>$-efficient improvement. We see that all mechanisms match an equivalent proportion of children to unanimous homes. Since we assume complete preferences and evaluations, the overall number of children matched is the same in every mechanism. However, SDI, UTTC, and ASI all match a significantly higher proportion of children to unanimous homes. As theory predicts, SDI and UTTC are generally more unanimous than ASDI, and ASDI generates barely less unanimous improvements than SDI. UTTC decreases the number of unanimous improvements further, showing that constrained efficiency answers fairness concerns. UTTC is the best in average persistence with a marginal improvement of 5.6 pp over ASDI.

\textit{Noise and Assistance}---Despite concerns one might have about prediction error hampering our results, the unanimous mechanism's improvements are robust over all of our specifications, although the effect sizes are attenuated. When caseworkers are not complementary to the algorithm, we see that UTTC's welfare increase over SD is just 12.5 pp with $\alpha = 0.75$. However, considering the severity of the error---the noise term is up to seventy-five percent at maximum---this is excellent news for our mechanisms. They remain significantly superior to the status-quo even when a very, very poor algorithm is utilized as long as its predictions are centralized near the truth. In relative terms, SDI and ASDI experience a smaller decrease from their baseline performances. SDI's drop from baseline to $\alpha = 0.75$ is 8.9 pp. 

Under assistance, the same general patterns hold. The first main difference is that SD's performance gains under higher error. This effect is mechanical because of our assumption that caseworkers are accurate when the mechanism is not. The mechanism makes more false positive and false negative predictions under higher error, so caseworkers are more accurate more frequently. The second main difference is that the complementarity attenuates the loss in performance under higher error specifications. For instance, SDI's performance drops by 6.4 pp less from baseline to $\alpha = 0.75$ when assisted, and all mechanism's average persistence are strictly greater under assistance versus no assistance. This implies that unanimous mechanisms are effective in combining the information aggregated in preferences and evaluations.

\begin{table}[t!]
    \centering
    \caption{Matches, Baseline}
    \begin{tabular}{l@{\hskip 10ex}c@{\hskip 6ex}c@{\hskip 6ex}c}
    \hline
    \hline
    Specification & Percent Matched & Percent Unanimous & Num Improvements \\ \hline
    \textit{Baseline} \\
    SD & 90.0\% & 70.3\% & 7.839 \\
    SDI & 90.0\% & 86.8\% & 5.839 \\
    UTTC & 90.0\% & 86.8\% & 5.776 \\
    ASDI & 90.0\% & 85.3\% & 5.836 \\
    \end{tabular}
    \label{table:matched-children-baseline}
\end{table}

\textit{Vertical Differentiation}---Interestingly, we find that unanimous matching mechanisms perform even better when the caseworkers' preferences are correlated. We hypothesize that this result owes to the fact that, when caseworker preferences are noisy, simply correlating them will cause the noise to negatively impact persistence over a large number of simulations. In contrast, when we draw preferences randomly for each caseworker, poor draws have a smaller probability of negatively impacting average persistence.

\begin{table}[t!]
    \centering
    \caption{Persistence, Vertical Differentiation}
    \begin{tabular}{l@{\hskip 10ex}c@{\hskip 6ex}c}
    \hline
    \hline
    Specification & Base & Vertical Differentiation \\ \hline
    \textit{Average persistence, base vs. vertical} \\
    SD & 50.5\% & 50.1\% \\
    SDI & 68.5\% & 73.8\% \\
    UTTC & 75.3\% & 76.9\% \\
    ASDI & 69.4\% & 73.7\% \\
    \end{tabular}
    \label{table:persistence-vertical-diff}
\end{table}

\section{Experiment Details}

We provide additional detail on our experiment in this section.

\subsection*{Optimal Sample Size Analysis}

Calculating the effects of our mechanisms requires computing matchings for many markets under them. Caseworker preferences are necessary to do this. We fill in caseworker preferences with a plausible hypothesis---race-based matching. Anecdotally, we are aware that caseworkers in some counties use race as a criteria to attempt to match a child to a culturally relevant home. \qcite{labrenz-2022} used an IPW regression model to estimate the effects of race-based matching. When the assumptions of the model hold, they show it has a positive impact on persistence. Concretely, we assume that given some market $m$, the foster home $h$ has a parent with the same race as the child $c$ and the foster home $h'$ does not, then $h\succ_c h'$. Otherwise, $h \sim h'$. We further assume that all homes are acceptable.

We generate one-hundred small markets as in the experimental design assuming these preferences. We compare SD with $\succ$ as input to SDI, UTTC ($>$), and ASDI. We also contrast these against the "optimal" mechanism using algorithmic predictions, that is, SD with $>$ as input. Our initial findings are promising. The baseline persistence in the simulated markets is 48.6 pp. All mechanisms substantially increase persistence: SDI by 7.8 pp relative to SD, ASDI by 8.6 pp, and UTTC by approximately 8.4 pp. Both achieve a significant proportion of the $>$-efficient achievable gain, 10.4 pp.
\begin{table}[!thbp] \centering
\caption{Mechanism Performance under Race-Based Preferences}
\resizebox{\textwidth}{!}{\begin{tabular}{@{\extracolsep{5pt}}lcccc}
\\[-1.8ex]\hline
\hline \\[-1.8ex]
& \multicolumn{4}{c}{\textit{Dependent variable: Persistence}} \
\cr \cline{2-5}
\\[-1.8ex] & (1) & (2) & (3) & (4) \\
\hline \\[-1.8ex]
 SD-OPT & 0.104$^{***}$ & & & \\
& (0.038) & & & \\
 SDI & & 0.078$^{**}$ & & \\
& & (0.037) & & \\
 UTTC & & & 0.084$^{**}$ & \\
& & & (0.037) & \\
 ASDI & & & & 0.086$^{**}$ \\
& & & & (0.037) \\
\hline \\[-1.8ex]
 Observations & 200 & 200 & 200 & 200 \\
 $R^2$ & 0.037 & 0.022 & 0.025 & 0.026 \\
 Adjusted $R^2$ & 0.032 & 0.017 & 0.020 & 0.021 \\
 Residual Std. Error & 0.266 (df=198) & 0.264 (df=198) & 0.265 (df=198) & 0.264 (df=198) \\
 F Statistic & 7.638$^{***}$ (df=1; 198) & 4.375$^{**}$ (df=1; 198) & 5.037$^{**}$ (df=1; 198) & 5.288$^{**}$ (df=1; 198) \\
\hline
\hline \\[-1.8ex]
\textit{Note:} & \multicolumn{4}{r}{$^{*}$p$<$0.1; $^{**}$p$<$0.05; $^{***}$p$<$0.01} \\
\end{tabular}}
\label{table:persistence-sample-analysis}
\end{table}
Next, we run robustness checks for the case where the algorithm has prediction error. We randomly perturb the persistence $Y_c(h)$ using the empirical distribution of errors by the forest vote decile. Specifically, let $\epsilon_d$ be the empirical average error rate in prediction for decile $d$. The forest vote for a match $c$ to $h$ is $V(c,h)$ and the decile for this vote is $d_{V(c,h)}$. We flip a weighted coin that is one with probability $1 - \epsilon_{d_{V(c,h)}}$ and zero otherwise. If the coin returns zero, then we set $Y_c(h) = 1 - \hat{Y}_c(h)$. Otherwise, $Y_c(h) = \hat{Y}_c(h)$. We reconduct the analysis above with these outcomes. The results are reported in Table \ref{table:persistence-sample-analysis-robust}. We find that the positive welfare effects of our mechanisms are still significant and marginally smaller in magnitude.
\begin{table}[!thbp] \centering
\caption{Mechanism Performance under Race-Based Preferences, Robust}
\resizebox{\textwidth}{!}{\begin{tabular}{@{\extracolsep{5pt}}lcccc}
\\[-1.8ex]\hline
\hline \\[-1.8ex]
& \multicolumn{4}{c}{\textit{Dependent variable: Persistence}} \
\cr \cline{2-5}
\\[-1.8ex] & (1) & (2) & (3) & (4) \\
\hline \\[-1.8ex]
 SD-OPT & 0.086$^{***}$ & & & \\
& (0.033) & & & \\
 SDI & & 0.065$^{**}$ & & \\
& & (0.032) & & \\
 UTTC & & & 0.070$^{**}$ & \\
& & & (0.032) & \\
 ASDI & & & & 0.070$^{**}$ \\
& & & & (0.032) \\
\hline \\[-1.8ex]
 Observations & 200 & 200 & 200 & 200 \\
 $R^2$ & 0.034 & 0.020 & 0.023 & 0.023 \\
 Adjusted $R^2$ & 0.029 & 0.015 & 0.018 & 0.018 \\
 Residual Std. Error & 0.230 (df=198) & 0.229 (df=198) & 0.229 (df=198) & 0.230 (df=198) \\
 F Statistic & 6.980$^{***}$ (df=1; 198) & 4.045$^{**}$ (df=1; 198) & 4.706$^{**}$ (df=1; 198) & 4.613$^{**}$ (df=1; 198) \\
\hline
\hline \\[-1.8ex]
\textit{Note:} & \multicolumn{4}{r}{$^{*}$p$<$0.1; $^{**}$p$<$0.05; $^{***}$p$<$0.01} \\
\end{tabular}}
\label{table:persistence-sample-analysis-robust}
\end{table}
Finally, we use this sample to conduct our optimal sample size analysis. We run one hundred simulations where we randomly keep $N$ observations for Serial Dictatorship and the mechanism in comparison in the sample and re-calculate the treatment effect, noting the significance and magnitude of the coefficient. We run these simulations for $N \in \{25, 50, 75, 95\}$. Our simulations show that the coefficient sizes remain close, on average, to the same as in Table \ref{table:persistence-sample-analysis}. For $N = 25$, 19\%, 13\%, and 15\% of the coefficient estimates were significant for SDI, UTTC, and ASDI, respectively. This jumps to 25\%, 22\%, and 35\% at $N = 50$, and 51\%, 42\%, and 59\% at $N = 75$. Last, at $N = 95$, 55\%, 84\%, and 88\% are significant. We conclude that a minimum sample size of $N = 100$ is necessary for detecting statistically significant effects, if any exist, for our mechanisms.

\section{Data, Tables, and Figures}

We keep the following features in our Placement Files: Child Sex, Child Race American Indian or Alaska Native, Child Race Asian, Child Race Black or African American, Child Race Native Hawaiian or Other Pacific Islander, Child Race White, Child Race Unable to Determine, Child Hispanic or Latino Ethnicity, Child Has Been Clinically Diagnosed with Disability, Mental Retardation, Visually or Hearing Impaired, Physically Disabled, Emotionally Disturbed, Other Medically Diagnosed Condition Requiring Special Care, Child Has Previously Been Adopted, Age on Date of Legal Adoption, Total Number of Removals from Home, Placement Settings in Current FC Episode, Removal Manner, Removal Reason–Physical Abuse, Removal Reason–Sexual Abuse, Removal Reason–Neglect, Removal Reason–Alcohol Abuse Parent, Removal Reason–Drug Abuse Parent, Removal Reason–Alcohol Abuse Child, Removal Reason–Drug Abuse Child, Removal Reason–Child Disability, Removal Reason–Child Behavior Problem, Removal Reason–Parent Death, Removal Reason–Parent Incarceration, Removal Reason–Caretaker Inability Cope, Removal Reason–Abandonment, Removal Reason–Relinquishment, Removal Reason Inadequate Housing, Current Placement Setting, Caretaker Family Structure, 1st Principal Caretaker Year of Birth, 2nd Principal Caretaker Year of Birth, 1st Foster Caretaker Race American Indian or Alaska Native, 1st Foster Caretaker Race Asian, 1st Foster Caretaker Race Black or African American, 1st Foster Caretaker Race Native Hawaiian or Other Pacific Islander, 1st Foster Caretaker Race White, 1st Foster Caretaker Race Unable to Determine, 1st Foster Caretaker Hispanic or Latino Ethnicity, 2nd Foster Caretaker Race American Indian or Alaska Native, 2nd Foster Caretaker Race Asian, 2nd Foster Caretaker Race Black or African American, 2nd Foster Caretaker Race Native Hawaiian or Other Pacific Islander, 2nd Foster Caretaker Race White, 2nd Foster Caretaker Race Unable to Determine, 2nd Foster Caretaker Hispanic or Latino Ethnicity.

We drop the following features: The Federal Fiscal Year of this Dataset, Reporting Period End Date: Month, Reporting Period End Date: Year, State FIPS Code, State Two-Character Code, Local Agency FIPS Code, Record Number (AFCARS ID), Periodic Review Date, Foster Child’s Date of Birth, Date of First Removal, Discharge Date for Previous Removal, Date of Latest Removal from Home, Removal Transaction Date, Begin Date for Current Placement Setting, The Current Placement Setting is Outside the State, Termination Date of Parental Rights–Mom, Termination Date of Parental Rights–Dad, Date of Parents Loss of Parental Rights, Date of Discharge from Foster Care, Date that the Discharge was Recorded, Discharge Reason, Title IV-E Foster Care Payments, Title IV-E Adoption Assistance, Title IV-A TANF Payment, Only State or Other Support, Monthly Foster Care Payment, Length (Days) in Current Placement Setting, Age on the First Day of the Fiscal Year, Age at Most Recent Removal/Entry into FC, Child was in FC at the Beginning of the FFY, Child was in FC at the End of the Fiscal Year, Entered Foster Care During the Fiscal Year, Child was Discharged from Foster Care During the Fiscal Year, Child was in at Start or Entered FC During the FY, Child is Waiting for Adoption, Youth is No Longer Eligible for Foster Care Due to Age, Derived Race and Ethnicity, Derived Race, Rural Urban Continuum Code, State Foster Care Identifier.

\end{document}